\theoremstyle{plain}
\newtheorem{thm}{Theorem}
\newtheorem{cor}{Corollary}
\newtheorem{lem}{Lemma}
\theoremstyle{definition}
\newtheorem{dfn}{Definition}
\newcommand{\mc}{\mathcal}
\newcommand{\tr}{{\rm tr}}
\newcommand{\mA}{\mathcal{A}}
\newcommand{\mD}{\mathcal{D}}
\newcommand{\mF}{\mathcal{F}}
\newcommand{\mH}{\mathcal{H}}
\newcommand{\mO}{\mathcal{O}}
\newcommand{\mP}{\mathcal{P}}
\newcommand{\mX}{\mathcal{X}}
\newcommand{\mY}{\mathcal{Y}}
\newcommand{\mOt}{\tilde{\mathcal{O}}}
\newcommand{\bc}{\bm{c}}
\newcommand{\bt}{\bm{t}}
\newcommand{\bw}{\bm{w}}
\newcommand{\bx}{\bm{x}}
\newcommand{\by}{\bm{w}}
\newcommand{\bz}{\bm{z}}
\newcommand{\balpha}{\bm{\alpha}}
\newcommand{\AGL}{\mc{A}_{\rm{GL}}}
\newcommand{\supp}{{\rm supp}}
\newcommand{\dist}{{\rm dist}}
\newcommand{\hc}{\hat{c}}
\newcommand{\tc}{\tilde{c}}
\titleformat*{\section}{\raggedright\large\bfseries\sffamily}
\titleformat*{\subsection}{\raggedright\bfseries\sffamily}
\titleformat*{\subsubsection}{\raggedright\bfseries\sffamily}
\begin{document}

%\preprint{APS/123-QED}

\title{Learning quantum many-body data locally: A provably scalable framework}

\author{Koki Chinzei}\thanks{chinzei.koki@fujitsu.com}
\author{Quoc Hoan Tran}
\author{Norifumi Matsumoto}
\author{Yasuhiro Endo}
\author{Hirotaka Oshima}
\affiliation{Quantum Laboratory, Fujitsu Research, Fujitsu Limited,
	4-1-1 Kawasaki, Kanagawa 211-8588, Japan}

\date{\today}% It is always \today, today,
%  but any date may be explicitly specified

\begin{abstract}

Machine learning (ML) holds great promise for extracting insights from complex quantum many-body data obtained in quantum experiments.
This approach can efficiently solve certain quantum problems that are classically intractable, suggesting potential advantages of harnessing quantum data.
However, addressing large-scale problems still requires significant amounts of data beyond the limited computational resources of near-term quantum devices.
We propose a scalable ML framework called Geometrically Local Quantum Kernel (GLQK), designed to efficiently learn quantum many-body experimental data by leveraging the exponential decay of correlations, a phenomenon prevalent in noncritical systems.
In the task of learning an unknown polynomial of quantum expectation values, we rigorously prove that GLQK substantially improves polynomial sample complexity in the number of qubits $n$, compared to the existing shadow kernel, by constructing a feature space from local quantum information at the correlation length scale.
This improvement is particularly notable when each term of the target polynomial involves few local subsystems.
Remarkably, for translationally symmetric data, GLQK achieves constant sample complexity, independent of $n$.
We numerically demonstrate its high scalability in two learning tasks on quantum many-body phenomena.
These results establish new avenues for utilizing experimental data to advance the understanding of quantum many-body physics.

\end{abstract}

%\keywords{Suggested keywords}%Use showkeys class option if keyword
%display desired
\maketitle

%\tableofcontents
%\newpage

%%%%%%%%%%%%%%%%%%%%%%%%%%%%%%%%%%%%%%%%%%%%%%%%%%%%%%%%%%%%%%%%%%%%%%%%
%%%%%%%%%%%%%%%%%%%%%%%%%%%%%%%%%%%%%%%%%%%%%%%%%%%%%%%%%%%%%%%%%%%%%%%%
%%%%%%%%%%%%%%%%%%%%%%%%%%%%%%%%%%%%%%%%%%%%%%%%%%%%%%%%%%%%%%%%%%%%%%%%
%\section{Introduction}

%%%%%%%%%%%%%%%%%%%%%%%%%%%%%%%%%%%%%%%%%%%%%%%%%%%%%%%%%%%%%%%%%%%%%%%%
\begin{figure*}[t]
    \centering
    \includegraphics[width=\linewidth]{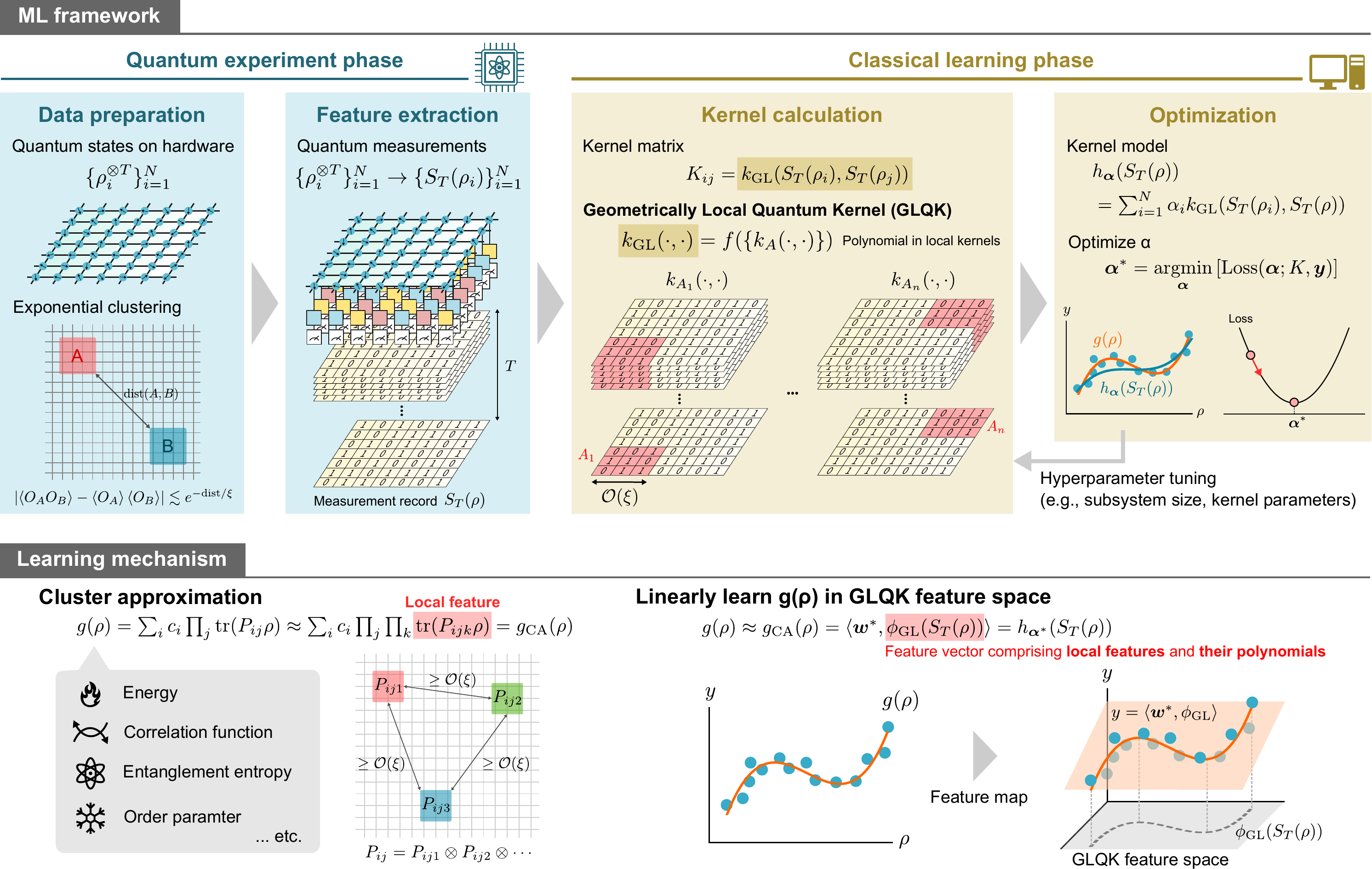}
    \caption{
    Overview of our ML framework and its mechanism.
    (Top) Our ML framework comprises the quantum experiment phase and the classical learning phase.
    In the quantum experiment phase, quantum data $\rho_i^{\otimes T}$ is prepared on quantum hardware (e.g., digital quantum computers, analog quantum simulators) and then measured in several bases.
    This process extracts quantum features $S_T(\rho_i)$, which record the measurement bases and outcomes.
    In the subsequent learning phase, the extracted quantum features are learned on a classical computer.
    In this work, we propose the GLQK to leverage the ECP of quantum data, thereby enhancing learning efficiency.
    Specifically, the GLQK is calculated from the quantum features by incorporating local quantum kernels $k_A$ on subsystems of size $\mO(\xi)$ into a polynomial $f$.
    Based on the calculated kernel functions, we optimize the kernel model $h_{\balpha}(S_T(\rho))$ to approximate $g(\rho)$.
    Hyperparameters (e.g., subsystem size, kernel parameters) can be tuned adaptively for a dataset without requiring additional quantum computational resources, providing a flexible learning framework.
    (Bottom) The validity of GLQK is guaranteed by the ECP, which enables the approximation of the polynomial $g(\rho)$ by an alternative polynomial $g_{\rm CA}(\rho)$ in local features.    
    Given the kernel construction, the GLQK can represent $g_{\rm CA}(\rho)$ as a linear function within its feature space, which is composed of polynomials in local features, thereby enabling efficient learning.
    }
    \label{fig: overview}
\end{figure*}
%%%%%%%%%%%%%%%%%%%%%%%%%%%%%%%%%%%%%%%%%%%%%%%%%%%%%%%%%%%%%%%%%%%%%%%%

\noindent
Understanding complex quantum many-body phenomena is a pivotal challenge across various fields, including physics, chemistry, and biology. 
Classical computational approaches often struggle to capture the intricate interplay of interactions in these systems due to the exponential dimensionality of the Hilbert space. 
Recent advances in experimental control over quantum systems offer a promising avenue for probing these phenomena. 
Specifically, digital quantum computers~\cite{Arute2019-im} and analog quantum simulators~\cite{Greiner2002-ol} hold the potential to solve classically intractable problems by directly accessing quantum many-body states. 
In parallel, machine learning (ML) has emerged as a novel approach to understanding quantum many-body systems~\cite{Carleo2019-gk}. 
ML techniques have demonstrated remarkable capabilities in capturing complex correlations and patterns within quantum systems, potentially surpassing traditional numerical methods in certain scenarios~\cite{Carleo2017-nw, Gao2017-vn, Pfau2020-fr, Snyder2012-sk, Liu2017-ye, Sanchez-Lengeling2018-vy, Stanev2021-yq}. 
The ability of ML to learn from data and generalize to unseen configurations offers new perspectives and insights that complement traditional theoretical and computational approaches.

The convergence of quantum technologies and ML presents a unique opportunity to accelerate scientific discovery in the realm of quantum many-body physics~\cite{Acampora2025-pe}.
This synergy allows us to leverage the strengths of both approaches: quantum computers and simulators generate data from complex quantum systems, while ML algorithms analyze and extract meaningful insights from these experimental data.
Theoretical results~\cite{Huang2021-wb, Huang2022-ip} have demonstrated the existence of quantum many-body problems that can be solved in polynomial time with ML approaches based on data (typically collected from quantum experiments), even when classical algorithms without such data access cannot.
This indicates the potential for exponential advantages from utilizing quantum data.
Classical shadows~\cite{Aaronson2020-uc, Huang2020-ti}, an efficient classical representation of a quantum state, often serve as a crucial input to ML for learning quantum data prepared on quantum devices~\cite{Che2024-ue, Wang2022-sq, Du2023-hz, Yao2024-jo, Tang2024-ei, Tang2025-kp}. 
In particular, the shadow kernel method~\cite{Huang2022-ip} has shown the ability to learn quantum phases of matter from classical shadows using polynomial-sized datasets and computation times.
These prior results highlight the fundamental promise of combining quantum technologies and ML, inspiring further investigation to advance this burgeoning field and confront its inherent challenges.

Despite theoretical efficiency, applying this approach to large-scale problems poses a significant challenge due to the polynomial but substantial data requirements and the constrained computational capabilities of near-term quantum devices. 
These limitations hinder the practical scalability of existing techniques. 
For instance, when using the shadow kernel to learn quantum phases, the sample complexity increases as a polynomial with a high degree in the number of qubits $n$~\cite{Huang2022-ip}. 
This fast growth becomes prohibitive for larger quantum systems, restricting the feasibility of these techniques. 
Addressing this challenge is important not only for the development of effective ML algorithms but also for advancing our understanding of the fundamental limits within quantum learning theory.

\renewcommand{\arraystretch}{1.2}
\begin{table*}
\centering
    \caption{
    Learning costs for shadow kernel and GLQK.
    The task here is to learn an unknown $m$-body, degree-$p$ polynomial $g(\rho)$ in a quantum state $\rho$ with a correlation length bounded by $\xi$. 
    The table shows the scaling of the sufficient number of training data points $N$ and shadow size $T$ with respect to the number of qubits $n$ and error $\epsilon$, for both general and translationally symmetric quantum data.
    This scaling assumes that the weight $m$, degree $p$, and norm $\|g\|_1$ of the target polynomial do not depend on $n$.
    The quantity $\alpha_g(\leq mp)$ represents the {\it local-cover number} of $g$, characterizing the minimum number of local subsystems required to encompass the support of each term of $g$.
    The quantity $\beta_g (p\leq \beta_g \leq mp)$ denotes the {\it local-factor count} of $g$, characterizing the number of local factors when each term of $g$ is decomposed into the product of local expectation values.    
    These quantities take small values when $g(\rho)$ is local relative to $\mO(\xi)$.
    For instance, if $g(\rho)$ is a sum of local linear/nonlinear quantities (e.g., local Hamiltonian, local purity, local entanglement entropy), $\alpha_g=1$ and $\beta_g=p$.
    See Eqs.~\eqref{eq: LCN} and \eqref{eq: LFC} in Methods for the detailed definitions of $\alpha_g$ and $\beta_g$.
    The tilde in $\mOt(\cdot)$ hides logarithmic factors in $\epsilon$.
    }
    \begin{tabular}{wc{4cm}wc{3cm}wc{3cm}wc{3cm}wc{3cm}}
        %\toprule
        \hline \hline
         & \multicolumn{2}{c}{General data} & \multicolumn{2}{c}{Translationally symmetric data} \\
         \cmidrule(lr){2-3}  \cmidrule(lr){4-5} 
         & training data ($N$) & shadow size ($T$) &  training data ($N$) & shadow size ($T$) \\
         \hline \\[-8pt]
        Shadow kernel~\cite{Huang2022-ip} & $\mOt(n^{mp}/\epsilon^4)$ & $\mOt(1/\epsilon^2)$ & $\mOt(n^{mp-\beta_g}/\epsilon^4)$ & $\mOt(1/\epsilon^2)$ \\[5pt] 
        GLQK (this work) & $\mOt(n^{\alpha_g}/\epsilon^4)$ & $\mOt(1/\epsilon^2)$ & $\mOt(1/\epsilon^4)$ & $\mOt(1/\epsilon^2)$ \\[4pt]
        %\bottomrule
        \hline \hline
    \end{tabular}
    \label{tab: scaling}
\end{table*}

In this paper, with a rigorous guarantee, we propose an ML framework called {\it geometrically local quantum kernel} (GLQK) for efficiently learning quantum many-body experimental data by leveraging locality, known as the exponential clustering property (ECP)~\cite{Hastings2010-aq, Hastings2004-vw, Hastings2006-fp, Nachtergaele2006-pg, Lieb1972-mo, Nachtergaele2006-un, Bravyi2006-df, Poulin2010-qy}.
This property, widely observed in noncritical quantum many-body systems, describes the exponential decay of correlations in space, suggesting that quantum information is concentrated in local subsystems of size $\mO(\xi)$, where $\xi$ is the correlation length. 
For such systems, we aim to learn an unknown function $g:\rho\mapsto y$ from data, where $\rho$ is a quantum state with a correlation length bounded by $\xi$, and $y\in\mathbb{R}$.
This problem is typical in supervised learning, and $g(\rho)$ represents an unknown physical property, such as order parameters of unexplored phase transitions.
Here, we restrict ourselves to the case where $g(\rho)$ is a polynomial of quantum expectation values.
Our ML framework consists of the quantum experiment phase and the classical learning phase (Fig.~\ref{fig: overview}).
In the quantum experiment phase, we prepare quantum data on quantum hardware and extract their features through measurements (e.g., classical shadows).
In the subsequent learning phase, we classically construct the GLQK from these features by incorporating local information on subsystems of size $\mO(\xi)$. 
Owing to the ECP, this approach enables accurate and efficient learning of $g(\rho)$.
We rigorously prove that the GLQK substantially improves the polynomial sample complexity of the existing shadow kernel in the number of qubits $n$ (Table~\ref{tab: scaling}).
Moreover, when data exhibits translation symmetry, the GLQK achieves constant sample complexity, independent of $n$, showing its outstanding scalability.
Through two numerical experiments on quantum many-body phenomena, we demonstrate the improved learning efficiency compared to the shadow kernel and verify the constant scaling for translationally symmetric data.
These results present a provably scalable ML approach, thereby accelerating the utilization of quantum many-body experimental data.

%%%%%%%%%%%%%%%%%%%%%%%%%%%%%%%%%%%%%%%%%%%%%%%%%%%%%%%%%%%%%%%%%%%%%%%%
%%%%%%%%%%%%%%%%%%%%%%%%%%%%%%%%%%%%%%%%%%%%%%%%%%%%%%%%%%%%%%%%%%%%%%%%
%%%%%%%%%%%%%%%%%%%%%%%%%%%%%%%%%%%%%%%%%%%%%%%%%%%%%%%%%%%%%%%%%%%%%%%%
\section*{Results}
\vspace{-0.5cm}
\subsection*{Problem: polynomial learning} \label{sec: problem}
\vspace{-0.42cm}

\noindent
The goal is to learn an unknown function $g:\rho \mapsto  y$ over a data distribution $\mD$ in a supervised learning manner, where $\rho$ is an $n$-qubit quantum state, and $y\in\mathbb{R}$.
Specifically, we aim to obtain an ML model $h_{\balpha}(\rho)$ that minimizes the expected loss $L_{\mD}(\balpha)=\mathbb{E}_{\rho\sim\mD}[(g(\rho)-h_{\balpha}(\rho))^2]$ ($\balpha$ represents trainable parameters).
A training dataset comprises $T$ copies of $N$ quantum states and their corresponding labels: $\{\rho_i^{\otimes T}, y_i\}_{i=1}^N$, where each $\rho_i$ is sampled from $\mD$ and $y_i=g(\rho_i)$.
This problem setting can be applied not only to regression tasks but also to classification tasks, where $g(\rho)=0$ serves as the decision boundary, and the sign of $g(\rho)$ corresponds to the class label.

We make two assumptions on this task. 
First, we assume that any quantum state $\rho$ sampled from $\mD$ satisfies the ECP with a correlation length bounded by $\xi$ (see the next section for details). 
Second, we assume that $g(\rho)$ can be represented as a polynomial in $\rho$.
To characterize the polynomial, we define an $m$-body, degree-$p$ polynomial as follows:
%%%%%%%%%%%%%%%%%%%%%%%%%%%%%%%%%%%%%%%%%%%%%%%%%%%%%%%%%%%%%%%%%
\begin{dfn}[$m$-body, degree-$p$ polynomial] \label{def: polynomial in quantum state}
Consider the following function $g(\rho)$ of an $n$-qubit quantum state $\rho$:
\begin{align}
g(\rho) = \sum_i c_i \prod_{j=1}^p \tr \left[ P_{ij} \rho \right], \label{eq: m-body degree-p polynomial}
\end{align}
where $P_{ij}$ is an $n$-qubit Pauli string, and $c_i$ is an expansion coefficient.
If the Pauli weights of all $P_{ij}$'s are less than or equal to $m$, we say that $g(\rho)$ is an $m$-body, degree-$p$ polynomial in $\rho$.
We also define the $\ell_1$-norm of Pauli coefficients as $\|g\|_{1}=\sum_i |c_i|$.
\end{dfn}
%%%%%%%%%%%%%%%%%%%%%%%%%%%%%%%%%%%%%%%%%%%%%%%%%%%%%%%%%%%%%%%%%
This definition encompasses various physically important quantities, consisting of linear ones with $p=1$ (e.g., energy, magnetization, correlation functions) and nonlinear ones with $p\geq 2$ (e.g., purity).
Furthermore, it can approximate logarithmic and exponential functions by truncating their high-degree terms in $\rho$.
This allows for representing, for example, von Neumann entropy and (topological) entanglement entropy with arbitrary accuracy.

%%%%%%%%%%%%%%%%%%%%%%%%%%%%%%%%%%%%%%%%%%%%%%%%%%%%%%%%%%%%%%%%%%%%%%%%
%%%%%%%%%%%%%%%%%%%%%%%%%%%%%%%%%%%%%%%%%%%%%%%%%%%%%%%%%%%%%%%%%%%%%%%%
%%%%%%%%%%%%%%%%%%%%%%%%%%%%%%%%%%%%%%%%%%%%%%%%%%%%%%%%%%%%%%%%%%%%%%%%
%%%%%%%%%%%%%%%%%%%%%%%%%%%%%%%%%%%%%%%%%%%%%%%%%%%%%%%%%%%%%%%%%%%%%%%%%%%%%%%%%%%%%%%
\subsection*{Exponential clustering and cluster approximation} \label{sec: exponential clustering property}
\vspace{-0.42cm}

%%%%%%%%%%%%%%%%%%%%%%%%%%%%%%%%%%%%%%%%%%%%%%%%%%%%%%%%%%%%%%%%%%%%%%%%
\begin{figure*}[t]
    \centering
    \includegraphics[width=\linewidth]{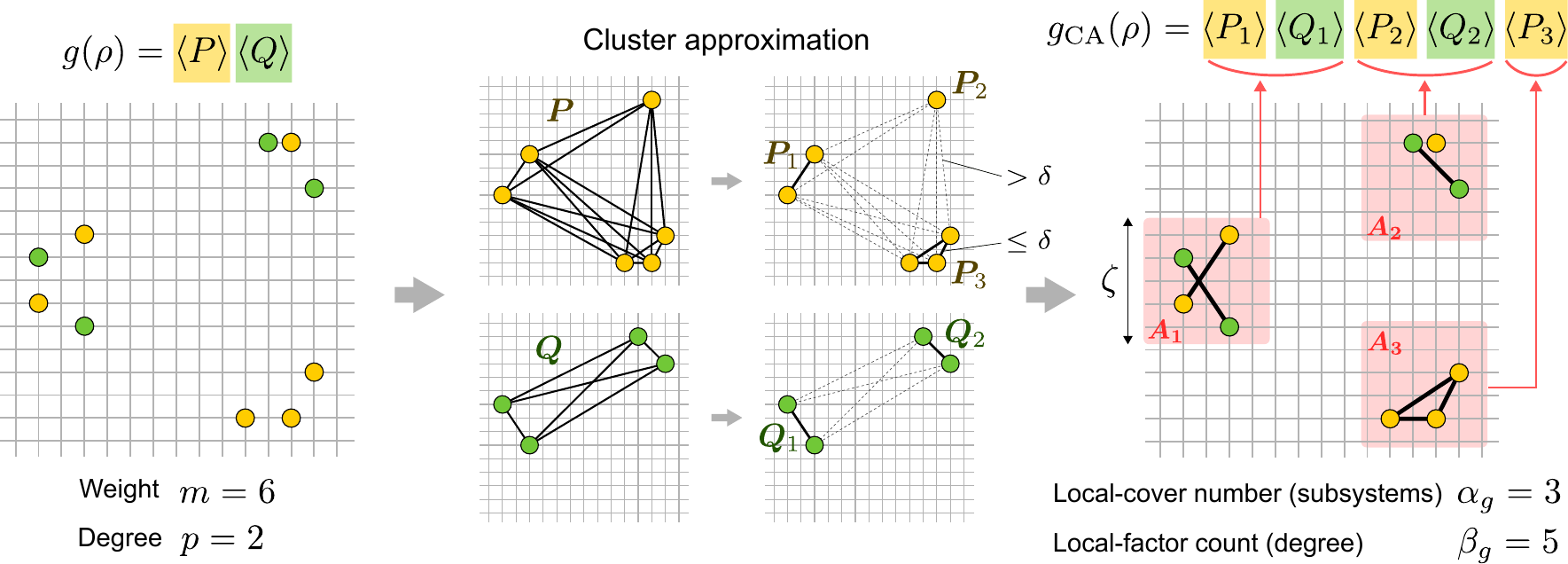}
    \caption{
    An example of cluster approximation.
    Here, we consider $g(\rho)=\braket{P}\braket{Q}$, where $P$ and $Q$ are Pauli strings acting on qubits denoted as yellow and green circles, respectively.
    The weight $m$ and degree $p$ of this polynomial are $6$ and $2$, respectively, since the number of qubits on which $P$ and $Q$ act is bounded by six, and $g(\rho)$ is the product of two quantum expectation values.
    The cluster approximation decomposes the support of each Pauli string into clusters by grouping qubits within a distance $\delta$ and partitioning qubits separated by distances greater than $\delta$ into distinct clusters.
    Based on this decomposition, we define $P_1,P_2,P_3, Q_1$, and $Q_2$ as Pauli strings acting on each cluster such that $P=P_1\otimes P_2\otimes P_3$ and $Q=Q_1\otimes Q_2$.
    The $\delta$-cluster approximation of $g(\rho)$ is given by $g_{\rm CA}(\rho)=\braket{P_1}\braket{P_2}\braket{P_3}\braket{Q_1}\braket{Q_2}$.
    The local-cover number $\alpha_g$ is defined as the minimum number of local subsystems in $\AGL(\zeta)$ required to encompass the support of all Pauli strings (i.e., the number of red boxes), and the local-factor count $\beta_g$ is defined as the number of factors (i.e., degree) of $g_{\rm CA}(\rho)$.
    Then, $g_{\rm CA}(\rho)$ is represented as the product of $\alpha_g=3$ local linear/nonlinear quantities: $\braket{P_1}\braket{Q_1}$, $\braket{P_2}\braket{Q_2}$, and $\braket{P_3}$.
    }
    \label{fig: cluster}
\end{figure*}
%%%%%%%%%%%%%%%%%%%%%%%%%%%%%%%%%%%%%%%%%%%%%%%%%%%%%%%%%%%%%%%%%%%%%%%%

\noindent
The ECP is a fairly generic quantum many-body phenomenon that describes the exponential decay of correlations, typically arising from the locality of quantum systems~\cite{Hastings2010-aq, Hastings2004-vw, Hastings2006-fp, Nachtergaele2006-pg, Lieb1972-mo, Nachtergaele2006-un, Bravyi2006-df, Poulin2010-qy}.
Leveraging locality can enhance the efficiency of many quantum algorithms by reducing problems across the entire Hilbert space to those concerning smaller subspaces~\cite{Cerezo2021-tq, Cerezo2023-hz, Cramer2010-yw, Rouze2024-gm, Mizuta2022-ua, Kanasugi2023-te, Huang2024-vx}.
Although there exist ML algorithms that leverage locality to learn unknown properties of quantum systems, they assume specific situations or lack theoretical guarantees~\cite{Lewis2024-yh, Wanner2024-ei, Wu2024-az}.
Our work offers a provably efficient framework applicable to more general situations. 
See Supplementary Information (SI)~I for detailed backgrounds.

To formalize the ECP, let us consider an $n$-qubit quantum state $\rho$ on the $D$-dimensional hypercubic lattice (one can easily extend the results of this paper to general lattices).
We say that $\rho$ satisfies the ECP if the following inequality holds for any observables $O_A$ and $O_B$, each acting on subsystems $A$ and $B$, respectively ($A,B\subseteq [n]$, $[n]=\{1,\cdots,n\}$ denotes the set of $n$ qubits):
\begin{align}
    \left| \braket{O_A O_B} - \braket{O_A}\braket{O_B}\right| 
    \leq \| O_A \|_{S} \| O_B \|_{S} \, e^{-{\rm dist}(A,B)/\xi},
\end{align}
where ${\rm dist}(A,B)$ is the shortest distance between $A$ and $B$ on the lattice, $\xi$ is the correlation length, $\braket{X}=\tr(X\rho)$ is the expectation value, and $\|X\|_{S}$ denotes the spectral norm.
This property indicates that quantum correlations decay exponentially in distance, justifying the approximation of $\braket{O_A O_B} \approx \braket{O_A}\braket{O_B}$ for any observables $O_A$ and $O_B$ with ${\rm dist}(A,B) \gg \xi$.

Based on this property, we introduce the cluster approximation of the polynomial $g(\rho)=\sum_i c_i \prod_{j} \tr \left[ P_{ij} \rho \right]$, which is crucial in understanding the validity of GLQK.
The cluster approximation, characterized by a parameter $\delta$, decomposes the support of $P_{ij}$, denoted as $\supp(P_{ij})$, into some clusters such that qubits within a distance $\delta$ are grouped into the same cluster, while distinct clusters are separated by a distance of at least $\delta$.
Let $P_{ijk}$ be the partial Pauli string of $P_{ij}$ acting on $k$th cluster, i.e., $P_{ij}=P_{ij1}\otimes P_{ij2}\otimes \cdots$.
Then, the $\delta$-cluster approximation of $g(\rho)$ is defined as
\begin{align}
&g_{\rm CA}(\rho) 
= \sum_i c_i \prod_{j=1}^p \prod_{k} \tr \left[ P_{ijk} \rho \right]. \label{eq: cluster approximation}
\end{align}

For quantum states exhibiting finite correlation length, $g_{\rm CA}(\rho)$ well approximates the original polynomial $g(\rho)$ if $\delta$ is sufficiently large, since correlations between clusters, $\supp(P_{ijk})$, are suppressed exponentially in $\delta$.
The following lemma quantifies this fact, implying that quantum information is concentrated in local subsystems of size $\mO(\xi)$ (the proof is provided in SI~II):
%%%%%%%%%%%%%%%%%%%%%%%%%%%%%%%%%%%%%%%%%%%%%%%%%%%%%%%%%%%%%%%%%
\begin{lem} \label{lem: approx} 
    Let $g(\rho)$ be an $m$-body, degree-$p$ polynomial.
    For any $\epsilon$ and $\xi$, the $\delta$-cluster approximation $g_{\rm CA}(\rho)$ with $\delta=\xi \log(\|g\|_{1} mp/\epsilon)$ satisfies
    \begin{align}
        &\left| g(\rho) - g_{\rm CA}(\rho) \right| \leq \epsilon, 
    \end{align}
    for any $\rho$ with a correlation length bounded by $\xi$.
\end{lem}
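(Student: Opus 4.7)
The plan is to bound $|g(\rho)-g_{\rm CA}(\rho)|$ by a triangle inequality over the polynomial terms, then for each fixed term telescope the product-of-traces error by applying the ECP on each factor separately, and finally choose $\delta$ so that the aggregate error matches $\epsilon$. Concretely, I would start from
\begin{align}
\left|g(\rho)-g_{\rm CA}(\rho)\right| \leq \sum_i |c_i|\left|\prod_{j=1}^p \tr[P_{ij}\rho]-\prod_{j=1}^p \prod_k \tr[P_{ijk}\rho]\right|,
\end{align}
so the task reduces to controlling each inner difference, i.e., the error of replacing each single factor $\tr[P_{ij}\rho]$ by the product $\prod_k \tr[P_{ijk}\rho]$ over its clusters.

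For each fixed $(i,j)$ I would peel off one cluster at a time. Let $K_{ij}\leq m$ denote the number of clusters in $P_{ij}$ (bounded by the Pauli weight). Splitting $P_{ij}=P_{ij1}\otimes P_{ij,\mathrm{rest}}$, where $P_{ij,\mathrm{rest}}$ supports the remaining clusters, the ECP gives $|\tr[P_{ij}\rho]-\tr[P_{ij1}\rho]\tr[P_{ij,\mathrm{rest}}\rho]|\leq e^{-\delta/\xi}$, using $\|P_{ijk}\|_S=1$ and $\dist(\supp P_{ij1},\supp P_{ij,\mathrm{rest}})\geq \delta$ by construction. The remaining clusters are still pairwise separated by $\geq \delta$, so iterating this peeling (and using $|\tr[P\rho]|\leq 1$ at every intermediate step) yields a telescoping bound $|\tr[P_{ij}\rho]-\prod_k \tr[P_{ijk}\rho]|\leq (K_{ij}-1)e^{-\delta/\xi}\leq m\,e^{-\delta/\xi}$.

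Next, since each factor $\tr[P\rho]$ and each $\prod_k \tr[P_{ijk}\rho]$ has magnitude at most $1$, I would use the elementary inequality $|\prod_j a_j-\prod_j b_j|\leq \sum_j |a_j-b_j|$ valid for $|a_j|,|b_j|\leq 1$ to pass from the single-factor bound to the product over $j$, giving a per-term bound of at most $mp\,e^{-\delta/\xi}$. Summing over $i$ with weights $|c_i|$ produces
\begin{align}
|g(\rho)-g_{\rm CA}(\rho)|\leq \|g\|_1\, mp\, e^{-\delta/\xi},
\end{align}
and setting the right-hand side equal to $\epsilon$ yields exactly $\delta=\xi\log(\|g\|_1 mp/\epsilon)$.

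The main delicate point is the iterated application of ECP inside a single Pauli string: after one cluster is split off, one must ensure that the ECP still applies to the tail. I would handle this by fixing an ordering on clusters (for instance, leftmost-first under a lattice ordering) and peeling off the outermost cluster at each step, so that the remaining clusters form a valid $\delta$-separated configuration and the induction on $K_{ij}$ closes cleanly. Aside from this, the argument is a straightforward combination of ECP, triangle inequalities, and the uniform bound $|\tr[P\rho]|\leq 1$.
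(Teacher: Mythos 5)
Your proposal is correct and follows essentially the same route as the paper's proof: a triangle inequality over the terms indexed by $i$, a telescoping/peeling argument within each Pauli string $P_{ij}$ that applies the ECP cluster-by-cluster (giving at most $m\,e^{-\delta/\xi}$ per factor, since distinct clusters are separated by more than $\delta$ by construction), the elementary bound $|\prod_j a_j-\prod_j b_j|\leq\sum_j|a_j-b_j|$ for quantities of magnitude at most one to pass to the degree-$p$ product, and the final choice $\delta=\xi\log(\|g\|_1 mp/\epsilon)$. The "delicate point" you flag is automatically handled because every pair of distinct clusters is $\delta$-separated, so the ECP applies between any one cluster and the union of the rest regardless of ordering, exactly as in the paper's induction.
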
    
%%%%%%%%%%%%%%%%%%%%%%%%%%%%%%%%%%%%%%%%%%%%%%%%%%%%%%%%%%%%%%%%%

The cluster approximation and Lemma~\ref{lem: approx} underpin the validity of GLQK.
To show this, we define a set of local subsystems $\AGL(\zeta)$ as
\begin{align}
\AGL(\zeta) = \{A_i(\zeta) \subseteq [n] \,|\, i\in [n]\},
\end{align}
where $A_i(\zeta)$ is the $D$-dimensional hypercubic local subsystem with side length $\zeta$ and corner at the $i$th qubit.
In Eq.~\eqref{eq: cluster approximation}, one can easily show that each cluster, $\supp(P_{ijk})$, is encompassed by some $A\in\AGL(\zeta)$ of size $\zeta=m\delta$, since the number of qubits included in each cluster is at most $m$, and the distance between neighboring qubits within the cluster is less than $\delta$.
Combined with Lemma~\ref{lem: approx}, the value of any polynomial $g(\rho)$ can be evaluated with error $\epsilon$ only from local reduced density matrices on $\AGL(\zeta)$ of size $\zeta=m\delta=m\xi \log(\|g\|_1 mp/\epsilon)$.
This result ensures the validity of GLQK, which learns from local information of quantum data on subsystems $\AGL(\zeta)$.

We introduce two quantities about $g$ crucial for the learning cost scaling (see Fig.~\ref{fig: cluster} and Methods for details).
The first is the {\it local-cover number} $\alpha_g={\rm LCN}(g;\delta,\zeta)$, which characterizes the locality of the $\delta$-cluster approximation $g_{\rm CA}$ relative to the scale $\zeta$. 
It is defined as the minimum number of local subsystems in $\AGL(\zeta)$ needed to cover the support of each term of $g_{\rm CA}$, satisfying $\alpha_g \leq mp$.
This means each term can be represented as the product of $\alpha_g$ local linear/nonlinear quantities.
For instance, sums of local quantities (e.g., local Hamiltonian, local purity, local entanglement entropy) correspond to $\alpha_g=1$ if $\zeta$ is sufficiently large to cover each term, while $t$-point correlation functions satisfy $\alpha_g=t$ in general. 
The second is the {\it local-factor count} $\beta_g={\rm LFC}(g;\delta)$, which roughly corresponds to the degree of $g_{\rm CA}$, satisfying $p\leq \beta_g \leq mp$.
This quantity takes a small value ($\sim p$) when $g(\rho)$ is local [more generally, when $\supp(P_{ij})$ is local] compared to $\delta$.

%%%%%%%%%%%%%%%%%%%%%%%%%%%%%%%%%%%%%%%%%%%%%%%%%%%%%%%%%%%%%%%%%%%%%%%%
%%%%%%%%%%%%%%%%%%%%%%%%%%%%%%%%%%%%%%%%%%%%%%%%%%%%%%%%%%%%%%%%%%%%%%%%
%%%%%%%%%%%%%%%%%%%%%%%%%%%%%%%%%%%%%%%%%%%%%%%%%%%%%%%%%%%%%%%%%%%%%%%%
\subsection*{General learning framework} \label{sec: GLQK}
\vspace{-0.42cm}

\noindent
Our learning framework consists of the quantum experiment phase and the classical learning phase (Fig.~\ref{fig: overview}).
In the quantum experiment phase, we prepare quantum data $\rho$ on quantum hardware and then measure it based on a predefined protocol, extracting quantum features of data as a record of measurement bases and outcomes.
A promising approach is classical shadow tomography via random Pauli measurements~\cite{Aaronson2020-uc, Huang2020-ti}.
This method enables obtaining an efficient classical representation of $\rho$ by repeatedly measuring each qubit of $\rho$ on a random Pauli basis $W_i=X_i, Y_i, Z_i$ and recording the outcome $o_i=\pm1$ over $T$ copies ($i$ is the qubit index). 
Let $S_T(\rho)$ denote this record.
The original quantum state $\rho$ can be reproduced from the measurement results as $\rho=\mathbb{E}[\sigma]$, where $\sigma =\sigma_1\otimes \cdots \otimes \sigma_n$ with $\sigma_i=(3o_iW_i+I)/2$ is a classical shadow for $\rho$.
While our framework is not restricted to classical shadows, this work primarily employs them for simplicity.
Then, the training dataset $\{\rho_i^{\otimes T},y_i\}_{i=1}^N$ is converted to $\{S_T(\rho_i),y_i\}_{i=1}^N$, where $y_i=g(\rho_i)$.

In the subsequent learning phase, we learn $g(\rho)$ on a classical computer from the quantum features obtained in the quantum experiments.
The GLQK is a general quantum kernel framework that exploits the locality of quantum data to enhance learning efficiency.
The main idea is based on the observation that any polynomial $g(\rho)$ can be approximated with an alternative polynomial $g_{\rm CA}(\rho)$ in local expectation values $\tr(P_{ijk}\rho)$.
This observation motivates constructing a quantum kernel whose feature space consists of polynomials in local quantities.   
Given a set of local subsystems $\AGL(\zeta)$, we define the GLQK for classical shadows $S_T(\rho)$ and $S_T(\tilde{\rho})$ as a polynomial in local quantum kernels:
\begin{align}
    &k_{\rm GL}(S_T(\rho),S_T(\tilde{\rho})) \notag \\
    &= f(\{k_A(S_{T}(\rho),S_{T}(\tilde{\rho})) | A\in \AGL(\zeta)\}), \label{eq: general GLQK}
\end{align}
where $f(x_1,x_2,\ldots)=\sum_{i_1,i_2,\ldots} c_{i_1i_2\cdots}x_1^{i_1}x_2^{i_2}\cdots$ is any polynomial with non-negative coefficients $c_{i_1i_2\ldots}\geq0$ (including infinite series like exponential), and $k_A$ is any local quantum kernel defined on the subsystem $A$ (e.g., fidelity kernel~\cite{Havlicek2019-wd, Schuld2019-hm} and shadow kernel~\cite{Huang2022-ip}). 
This definition includes projected quantum kernels~\cite{Huang2021-wb}, such as $\sum_{k=1}^n \tr[\rho_k \tilde{\rho}_k]$, where $\rho_k$ and $\tilde{\rho}_k$ are the reduced density matrices at the $k$th qubit.
In learning, we train the kernel model $h_{\balpha}(S_T(\rho))=\sum_{i=1}^N \alpha_i k_{\rm GL}(S_T(\rho_i),S_T(\rho))$ to approximate $g(\rho)$ (see Methods).

To understand the capability of GLQK, let us consider its feature space.
A straightforward calculation reveals the feature vector of the GLQK as follows:
\begin{align}
    \phi_{\rm GL}(S_T(\rho)) = \tilde{f}(\{\phi_A(S_T(\rho)) | A\in \AGL(\zeta )\}), \label{eq: feature vector}
\end{align}
where $\tilde{f}(x_1,x_2,\ldots)=\bigoplus_{i_1,i_2,\ldots} \sqrt{c_{i_1i_2\ldots}} x_1^{\otimes i_1} \otimes x_2^{\otimes i_2}\otimes \cdots$, and $\phi_A$ is the feature vector of the local kernel $k_A$.
Thus, $\phi_{\rm GL}$ incorporates polynomials of local features at the length scale $\zeta$. 
Given appropriate $f$ and $k_A$ with sufficiently large $\zeta$, this feature space structure, coupled with Lemma~\ref{lem: approx}, enables learning any polynomial $g(\rho)$ via the cluster approximation $g_{\rm CA}(\rho)$, even when nonlocal terms are present.

Determining the optimal size $\zeta$ of local subsystems is crucial in practice, as the correlation length, weight, and degree are typically unknown. 
We address this by adaptively tuning $\zeta$ for a dataset. 
For instance, we begin with a small $\zeta$, train the model, and iteratively increase $\zeta$ if the validation accuracy (assessed via cross-validation) is insufficient. 
This procedure identifies the optimal $\zeta$ and can also be applied to optimize other kernel and regularization hyperparameters. 
Importantly, this optimization incurs no additional quantum computational cost, as it relies solely on the classical representation of quantum features.

\subsection*{Polynomial GLQK}
\vspace{-0.42cm}

\noindent
The design of $f$ and $k_A$ in Eq.~\eqref{eq: general GLQK} is critical for achieving high learning efficiency and broad applicability.
Here, we propose the {\it polynomial GLQK}, equipped with the {\it truncated shadow kernel}, as a powerful yet versatile kernel.
Combined with the cluster approximation, this kernel can represent any polynomial $g(\rho)$ as a linear function of local quantities within the feature space, thereby enabling efficient learning.
The polynomial GLQK is defined as
\begin{align}
    &k_{\rm GL}(S_{T}(\rho),S_{T}(\tilde{\rho})) \notag \\
    &= \left[ \frac{1}{|\AGL(\zeta)|} \sum_{A\in \AGL(\zeta)} k_A(S_{T}(\rho),S_{T}(\tilde{\rho})) \right]^h, \label{eq: horder GLQK}
\end{align}
where $h\geq 1$ is an integer hyperparameter, and $|\AGL(\zeta)|$ is the cardinality of $\AGL(\zeta)$.
Given Eq.~\eqref{eq: feature vector}, the feature space of this GLQK includes the product of $h$ local features: $\phi_{A_{1}}\otimes\cdots\otimes\phi_{A_{h}}$ for $\forall A_1,\cdots,A_h \in \AGL(\zeta)$.
As mentioned above, $h$ can be optimized without requiring additional quantum computational resources.

As a local quantum kernel $k_A$, we propose the following truncated shadow kernel that can represent any local polynomial within its feature space ($k_A$ can be any other kernel in general):
\begin{align}
    &k^{{\rm TSK}}_A(S_{T}(\rho),S_{T}(\tilde{\rho})) \notag \\
    &= \exp\left( \frac{\tau}{T^2}\sum_{t,t'=1}^T \prod_{i\in A} \left[1+\frac{\gamma}{|A|} \text{tr}(\sigma_i^{(t)}\tilde{\sigma}_i^{(t')} )\right]\right), \label{eq: truncated shadow kernel}
\end{align}
where $\sigma_i^{(t)}$ denotes the classical shadow of the $i$th qubit at the $t$th measurement shot, and $\tau,\gamma>0$ are real hyperparameters.
The classical computation time for this kernel is $\mO(|A|T^2)$, which results in the overall computation time of $\mO(n |A| T^2)$ for the polynomial GLQK.
Notably, the feature vector of this kernel incorporates arbitrarily large reduced density matrices within $A$ and their arbitrarily high-degree polynomials (see Methods).

Given these feature space structures and Lemma~\ref{lem: approx}, the polynomial GLQK based on the truncated shadow kernel can represent any polynomial $g(\rho)$ with error $\epsilon$ as a linear function within the feature space by setting $\zeta=m\delta$ and $h=\alpha_{g}={\rm LCN}(g;\delta,\zeta)$ with $\delta=\xi\log(\|g\|_1mp/\epsilon)$.
This universality is demonstrated by approximating $g(\rho)$ with $g_{\rm CA}(\rho)$, where each term is represented as the product of $\alpha_g$ local quantities, and by considering the GLQK's feature space, which consists of products of $h$ local quantities. 
Consequently, the GLQK can learn any polynomial by tuning $\zeta$ and $h$ for a given dataset, provided there are sufficient training samples.

%%%%%%%%%%%%%%%%%%%%%%%%%%%%%%%%%%%%%%%%%%%%%%%%%%%%%%%%%%%%%%%%%%%%%%%%
\begin{figure*}[t]
    \centering
    \includegraphics[width=\linewidth]{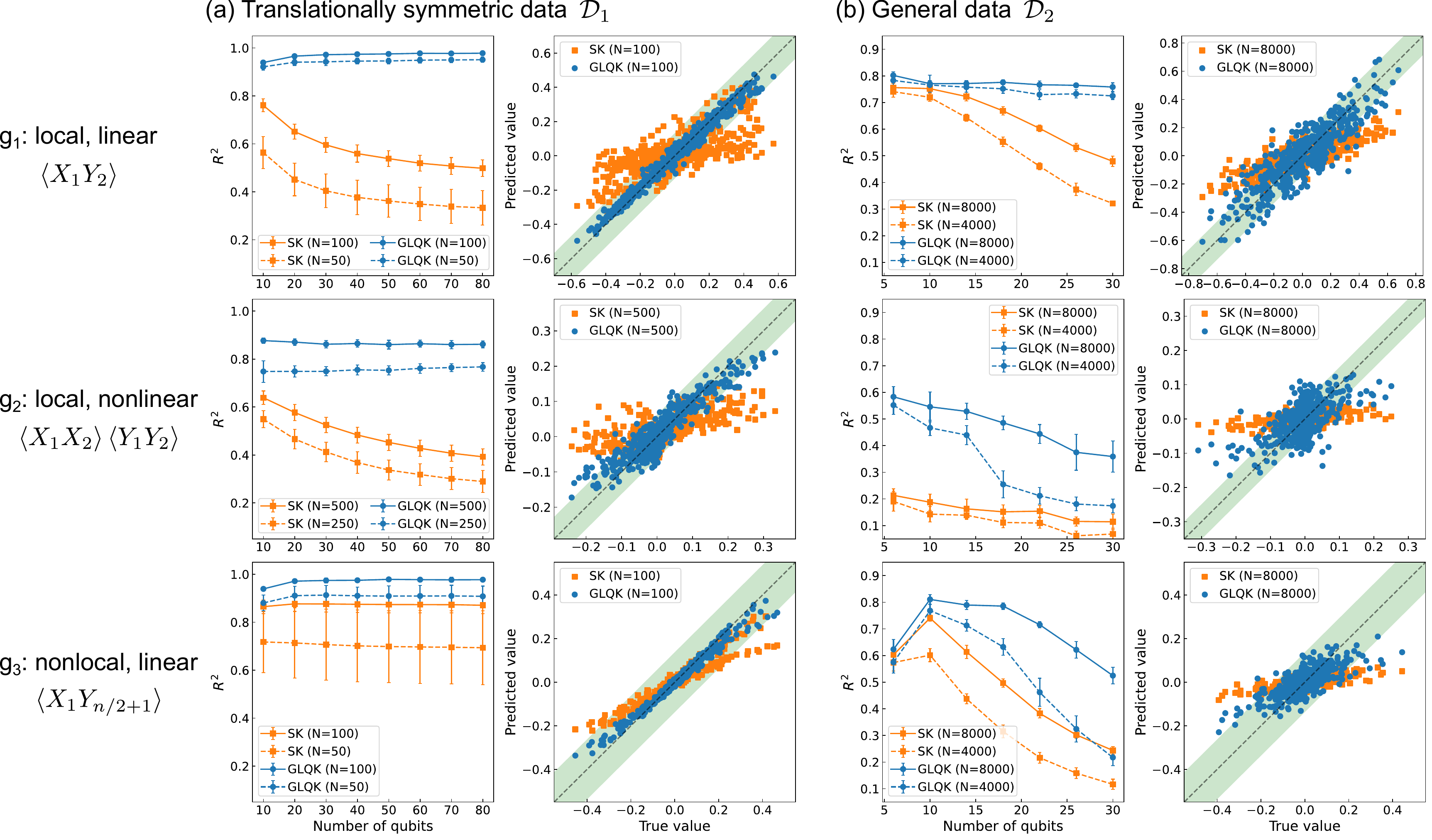}
    \caption{
Numerical results of the regression task involving random quantum dynamics.
The figure presents results for (a) translationally symmetric and (b) general data distributions, comparing the shadow kernel (SK, orange squares) and the GLQK (blue circles).
The left panels show the coefficient of determination, defined as $R^2=1 - \sum_{i=1}^{M} (y_i - f_i)^2/\sum_{i=1}^{M} (y_i - \bar{y})^2$, as a figure of merit, where $y_i=g(\rho_i)$ and $f_i$ are the true value and the predicted value for the $i$th test data $\rho_i$, and $\bar{y}=\sum_{i=1}^{M} y_i/M$ is the mean of the true values. 
A larger $R^2$ indicates better accuracy, with $R^2=1$ denoting perfect prediction.
Error bars represent the standard deviation calculated across 10 different randomly sampled training and test datasets.
The right panels display the scatter plots of regression results obtained from a specific choice of training and test data. 
The horizontal and vertical axes represent the true and predicted values of $g(\rho_i)$ for test data $\rho_i$, respectively (i.e., if the data points are on the diagonal line, it means
that a perfect prediction has been made). 
The green shaded areas depict $[\sum_{i=1}^M (g(\rho_i)-g(\sigma_i))^2/M]^{1/2}$, which represents the statistical error purely originating from the finite shadow size $T$, independent of the kernel ridge regression. 
Here, $\sigma_i$ is the density matrix estimated from the classical shadow for test data $\rho_i$.
In the right panels, the number of qubits is (a) $n=80$ and (b) $n=30$. 
The number of training data points is denoted by $N$, and the shadow size is fixed at $T=500$. 
    }
    \label{fig: regression}
\end{figure*}
%%%%%%%%%%%%%%%%%%%%%%%%%%%%%%%%%%%%%%%%%%%%%%%%%%%%%%%%%%%%%%%%%%%%%%%%

%%%%%%%%%%%%%%%%%%%%%%%%%%%%%%%%%%%%%%%%%%%%%%%%%%%%%%%%%%%%%%%%%%%%%%%%
\begin{figure*}[t]
    \centering
    \includegraphics[width=\linewidth]{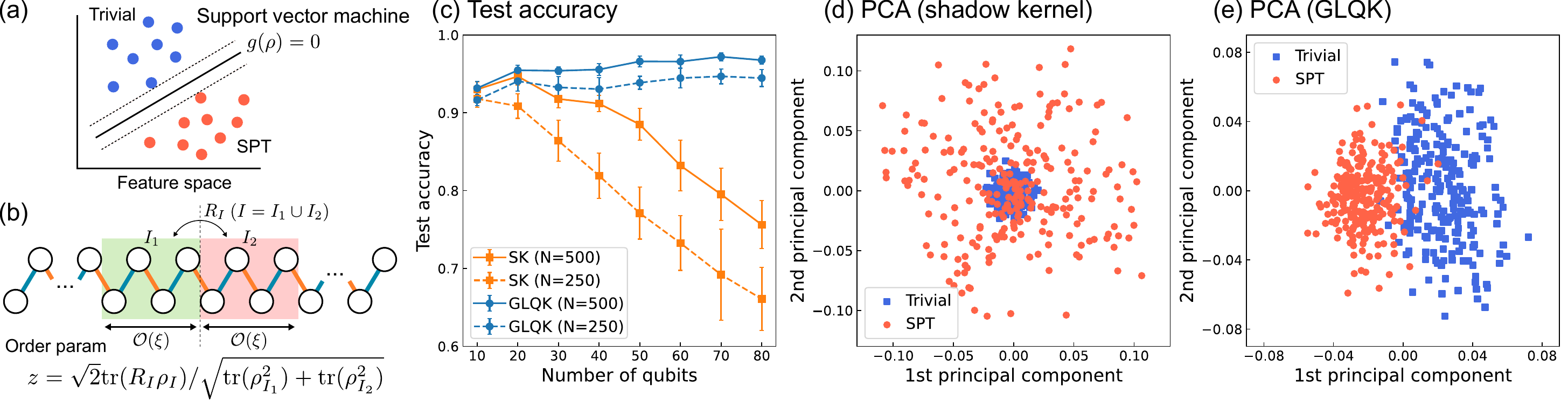}
    \caption{
    Numerical results of quantum phase recognition.
    (a) The support vector machine classifies data points using a hyperplane within the feature space.
    (b) Topological order parameter defined on a local subsystem $I$ of size $\mO(\xi)$ (see Methods for details). 
    (c) Test accuracy for the shadow kernel (SK, orange squares) and the GLQK (blue circles) as the number of qubits $n$ varies.
    Error bars represent the standard deviation calculated across 10 different randomly sampled training and test datasets.
    (d)--(e) Kernel PCA results obtained with the shadow kernel and GLQK for 500 data points at $n=80$. 
    In (e), we set $h=1$ and $\zeta=2$.
    The number of training data is denoted by $N$, and the shadow size is fixed at $T=500$.
    }
    \label{fig: QPR}
\end{figure*}
%%%%%%%%%%%%%%%%%%%%%%%%%%%%%%%%%%%%%%%%%%%%%%%%%%%%%%%%%%%%%%%%%%%%%%%%

%%%%%%%%%%%%%%%%%%%%%%%%%%%%%%%%%%%%%%%%%%%%%%%%%%%%%%%%%%%%%%%%%%%%%%%%
%%%%%%%%%%%%%%%%%%%%%%%%%%%%%%%%%%%%%%%%%%%%%%%%%%%%%%%%%%%%%%%%%%%%%%%%
%%%%%%%%%%%%%%%%%%%%%%%%%%%%%%%%%%%%%%%%%%%%%%%%%%%%%%%%%%%%%%%%%%%%%%%%
\subsection*{Rigorous resource estimation}
\vspace{-0.42cm}

\noindent
By virtue of removing irrelevant nonlocal terms from the feature space, the GLQK exhibits high scalability with respect to $n$.
Here, we consider kernel ridge regression and evaluate the amount of quantum resources sufficient to achieve $L_{\mD}(\balpha^\ast)=\mathbb{E}_{\rho\sim\mD}[(g(\rho)- h_{\balpha^\ast}(S_T(\rho)))^2] \leq \epsilon^2$, where $\balpha^\ast$ denotes trained parameters.
The following theorem quantifies the learning cost scaling for GLQK (see SI~V for the formal version and proof):
%%%%%%%%%%%%%%%%%%%%%%%%%%%%%%%%%%%%%%%%%%%%%%%%%%%%%%%%%%%%%%%%%%%%%%%%
\begin{thm}[Informal] \label{thm: main theorem 1}
Consider an $m$-body, degree-$p$ polynomial $g(\rho)$ of an $n$-qubit quantum state $\rho$ with a correlation length bounded by $\xi$.
Let $\delta=\xi \log(2 \|g\|_{1} mp/\epsilon)$, $\zeta=m\delta$, and $\alpha_g={\rm LCN}(g;\delta,\zeta)$.
Suppose that $N$ classical shadows of size $T$ are given as a training dataset such that
\begin{align}
    &N = \mOt(n^{\alpha_g}/\epsilon^4), \\
    &T = \mOt(1/\epsilon^2).
\end{align}
Then, the kernel ridge regression, using the polynomial GLQK based on the truncated shadow kernel with $h=\alpha_g$ and $\zeta=m\xi \log(2 \|g\|_{1} mp/\epsilon)$, can achieve $L_{\mD}(\balpha^\ast)\leq \epsilon^2$ on average over training datasets.
\end{thm}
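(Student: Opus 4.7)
The plan is to combine three ingredients: the cluster approximation (Lemma~\ref{lem: approx}), a structural analysis of the polynomial GLQK's feature space, and a standard kernel ridge regression excess-risk bound supplemented by a classical shadow concentration argument.

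First, I would apply Lemma~\ref{lem: approx} with $\delta=\xi\log(2\|g\|_1 mp/\epsilon)$ to reduce the task from learning $g(\rho)$ to learning its $\delta$-cluster approximation $g_{\rm CA}(\rho)$, at the price of a uniform pointwise error of $\epsilon/2$. Since $\zeta=m\delta$, every cluster in the definition of $g_{\rm CA}$ fits inside some $A\in\AGL(\zeta)$, and by the definition of $\alpha_g$ each term of $g_{\rm CA}$ decomposes into a product of at most $\alpha_g$ local factors, each a polynomial in Pauli expectation values on one $A\in\AGL(\zeta)$.

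Second, I would embed $g_{\rm CA}$ into the GLQK reproducing kernel Hilbert space and bound its norm. Expanding the exponential in Eq.~\eqref{eq: truncated shadow kernel} by Taylor series and each product $\prod_{i\in A}[1+\gamma|A|^{-1}\tr(\sigma_i\tilde\sigma_i)]$ by the binomial formula, the feature map of $k_A^{\rm TSK}$ spans all polynomial functionals of the reduced density matrix on $A$; therefore any local polynomial factor from the first step is a linear functional in $\phi_A$. Combined with Eq.~\eqref{eq: feature vector} for $h=\alpha_g$, the polynomial GLQK feature map contains $\phi_{A_1}\otimes\cdots\otimes\phi_{A_{\alpha_g}}$ for every $(A_1,\ldots,A_{\alpha_g})\in\AGL(\zeta)^{\alpha_g}$, so $g_{\rm CA}$ is realized as a linear combination of $O(n^{\alpha_g})$ basis features. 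Carefully tracking the prefactors $\tau$, $\gamma/|A|$, $1/|\AGL(\zeta)|$, and the multinomial coefficients from the $h$-th-power expansion then yields $\|g_{\rm CA}\|_{\mH}^2=\mOt(n^{\alpha_g})$, with the tilde absorbing $\|g\|_1,m,p,\xi$ and $\log(1/\epsilon)$.

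Third, I would invoke a standard excess-risk bound for kernel ridge regression: with ridge parameter $\lambda$, the expected squared loss is bounded by $O(\|g_{\rm CA}\|_{\mH}^2/(\lambda N)+\lambda\|g_{\rm CA}\|_{\mH}^2)$ plus kernel-perturbation terms coming from finite-$T$ classical shadows. Each raw kernel entry concentrates around its expectation at rate $O(1/\sqrt T)$ per local factor; because $\zeta=m\xi\log(2\|g\|_1 mp/\epsilon)$ is only polylogarithmic in $1/\epsilon$, the exponential in $\zeta$ stays polylogarithmic, and $T=\mOt(1/\epsilon^2)$ drives the induced kernel fluctuation below $\epsilon$ in the loss. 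Balancing $\lambda\sim \epsilon^2/\|g_{\rm CA}\|_{\mH}^2$ and combining with the $\epsilon/2$ approximation error then yields $L_{\mD}(\balpha^\ast)\le\epsilon^2$ whenever $N=\mOt(n^{\alpha_g}/\epsilon^4)$.

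The main obstacle will be the bookkeeping in the second step: the hyperparameters $\tau,\gamma,h$ and the cover structure of $\AGL(\zeta)$ must be chosen so that (a) each local Pauli expectation value carries an $O(1)$ coefficient in the local RKHS, (b) the tensor-product weights align with the $\alpha_g$-factor decomposition of $g_{\rm CA}$, and (c) the RKHS norm does not pick up extra $n$-dependent factors beyond $n^{\alpha_g/2}$. A secondary difficulty is showing that classical-shadow noise propagates benignly through the exponential in $k_A^{\rm TSK}$, which requires controlling sub-exponential tails of the empirical-kernel deviation on subsystems of poly-logarithmic size and integrating them against the RKHS-norm bound above.
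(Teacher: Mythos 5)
Your first two steps match the paper's proof almost exactly: the reduction to $g_{\rm CA}$ via Lemma~\ref{lem: approx} with $\delta=\xi\log(2\|g\|_1mp/\epsilon)$, and the embedding of $g_{\rm CA}$ as a linear functional in the GLQK feature space with squared RKHS norm $B^2=\|g\|_1^2(2mp\zeta^D/\tau\gamma)^{mp}\,n^{\alpha_g}$ (the $n^{\alpha_g}$ arising, as you anticipate, from the $|\AGL(\zeta)|^{-h/2}=n^{-\alpha_g/2}$ normalization in the feature map and the $\alpha_g$-factor decomposition supplied by the local-cover number). The paper then feeds $B^2$, the kernel bound $|k_{\rm GL}|\leq\exp(\alpha_g\tau\exp(5\gamma))$, and the label bound $|g(\rho)|\leq\|g\|_1$ into a rescaled version of the standard agnostic ridge-regression guarantee (Theorem 13.1 of Shalev-Shwartz--Ben-David), giving $N=150B^2X^2Y^2/(\epsilon^2/2)^2=\mOt(n^{\alpha_g}/\epsilon^4)$ with $\lambda=\epsilon^2/6B^2$.

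The genuine gap is in your third step, in how you handle the finite shadow size $T$. You propose to treat the empirical kernel as a noisy perturbation of an ideal kernel and to control "kernel-perturbation terms" by concentrating each kernel entry at rate $O(1/\sqrt{T})$ and propagating sub-exponential tails through the exponential in $k_A^{\rm TSK}$ and through the KRR solution. This is not what the paper does, and it is the hard way: perturbations of the kernel matrix pass through $(K+\lambda NI)^{-1}$ and controlling their effect on the learned predictor is delicate and would require new arguments you have not supplied. The paper sidesteps this entirely by defining the learning problem over the distribution $\mD_S$ of classical shadows themselves: the kernel is computed \emph{exactly} on the shadows, so no concentration of kernel entries is ever needed. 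The only place $T$ enters is in bounding the risk of the fixed comparator $\tilde{\bw}$ with $\braket{\tilde{\bw},\phi_{\rm GL}(\sigma)}=g_{\rm CA}(\sigma)$, namely in showing
\begin{align}
\underset{\sigma\sim\mD_\rho}{\mathbb{E}}\left[\,|g_{\rm CA}(\rho)-g_{\rm CA}(\sigma)|^2\,\right]\leq\epsilon^2/4 ,
\end{align}
which the paper establishes (SI Lemma on estimating polynomials from classical shadows) via a matrix-Bernstein bound on the trace-norm deviation of the $m$-body reduced density matrices, a union bound over the $O(p)$ subsystems appearing in each term, and a telescoping product argument; this yields $T=\mOt(1/\epsilon^2)$ independent of $n$. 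Your plan is missing this reformulation, and without it the claimed "standard excess-risk bound plus kernel-perturbation terms" does not exist as a citable result; filling that hole along your proposed route would be substantially harder than the paper's argument.
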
    
%%%%%%%%%%%%%%%%%%%%%%%%%%%%%%%%%%%%%%%%%%%%%%%%%%%%%%%%%%%%%%%%%%%%%%%%
Focusing on the scaling in $n$, this theorem ensures that the GLQK can learn any polynomial from $N=\mO(n^{\alpha_g})$ classical shadows of size $T=\mO(1)$, resulting in total sample complexity $NT\sim\mO(n^{\alpha_g})$.
Given the kernel computation time $\mO(n|A|T^2)$, this theorem also proves the polynomial computational time complexity of GLQK for this task.
This learning cost scaling is better than that of the conventional shadow kernel.
In SI~VI, we conduct a similar resource estimation for the shadow kernel, demonstrating that $N=\mO(n^{mp})$ classical shadows of size $T=\mO(1)$ are sufficient to learn any $m$-body, degree-$p$ polynomial. 
Since $\alpha_g \leq mp$, the GLQK improves the sample complexity of the shadow kernel.
This improvement is obvious when the target polynomial has a small $\alpha_g$.
For example, sums of local linear/nonlinear quantities within the scale of $\zeta=\mO(\xi)$, satisfying $\alpha_g=1$, can be learned from $\mO(n)$ training data using the GLQK.
Although this theorem assumes a specific value of $\zeta$, increasing it might reduce $\alpha_g$ and thereby improve the scaling in $n$ at the cost of an increased prefactor.
Note that the estimated amounts of $N$ and $T$ are (super) exponential in $m$ and $p$ for both GLQK and shadow kernel.
Thus, they are efficient in learning few-body, low-degree polynomials.

Imposing spatial translation symmetry on $\rho$, which is often encountered in, e.g., solids, artificial quantum systems, and lattice gauge theories, further improves learning efficiency, achieving constant sample complexity in $n$.
The translation symmetry is defined as $T_\mu \rho T_\mu^\dag=\rho$ with the translation operator $T_\mu$ in the direction $\mu=1,\ldots,D$ on the $D$-dimensional lattice.
The constant sample complexity is guaranteed by the following theorem (see SI~V for the formal version and proof):
%%%%%%%%%%%%%%%%%%%%%%%%%%%%%%%%%%%%%%%%%%%%%%%%%%%%%%%%%%%%%%%%%%%%%%%%
\begin{thm}[Informal] \label{thm: main theorem 2}
Consider an $m$-body, degree-$p$ polynomial $g(\rho)$ of an $n$-qubit translationally symmetric quantum state $\rho$ with a correlation length bounded by $\xi$.
Suppose that $N$ classical shadows of size $T$ are given as a training dataset such that
\begin{align}
    &N = \mOt(1/\epsilon^4), \\
    &T = \mOt(1/\epsilon^2).
\end{align}
Then, the kernel ridge regression, using the polynomial GLQK based on the truncated shadow kernel with $h=1$ and $\zeta=m\xi \log(2 \|g\|_{1} mp/\epsilon)$, can achieve $L_{\mD}(\balpha^\ast)\leq \epsilon^2$ on average over training datasets.
\end{thm}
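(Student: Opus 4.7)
The plan is to combine the cluster approximation of Lemma~\ref{lem: approx} with the translation symmetry of $\rho$ in order to rewrite $g_{\rm CA}$ on $\mD$ as a translation-averaged single local function, then appeal to an $n$-independent RKHS-norm bound together with the kernel ridge regression generalization estimate already developed for Theorem~\ref{thm: main theorem 1}.

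First I would apply Lemma~\ref{lem: approx} with $\delta = \xi \log(2\|g\|_1 mp/\epsilon)$ to replace $g$ by its $\delta$-cluster approximation $g_{\rm CA}$ at the price of a uniform error $\epsilon/2$; it then suffices to approximate $g_{\rm CA}$ on $\mD$ to squared loss $\epsilon^2/4$. Next I would exploit translation symmetry: since each factor $\tr[P_{ijk}\rho]$ depends only on the translation class of $P_{ijk}$ when $\rho$ is translationally symmetric, for every $A \in \AGL(\zeta)$ with $\zeta = m\delta$ I can pick translates $P^{(A)}_{ijk}$ whose supports lie inside $A$ (overlap is permitted since only products of scalar traces appear). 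Averaging the resulting identity over $A$ gives
\begin{equation*}
g_{\rm CA}(\rho) = \frac{1}{|\AGL(\zeta)|}\sum_{A \in \AGL(\zeta)} F_A(\rho_A), \quad F_A(\rho_A) = \sum_i c_i \prod_{j,k} \tr[P^{(A)}_{ijk}\rho],
\end{equation*}
valid on every translationally symmetric $\rho$, which matches the generic element of the $h=1$ polynomial-GLQK hypothesis class in Eq.~\eqref{eq: horder GLQK}.

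Since the RKHS of the truncated shadow kernel $k_A^{\rm TSK}$ contains all polynomials in local Pauli expectations on $A$, each $F_A$---a polynomial of degree at most $\beta_g$ in at most $mp$ Pauli expectations with coefficients inherited from $g$---has an RKHS norm controlled by $\|g\|_1, m, p, \tau$, and $\gamma$ but independent of $n$. Together with the $1/|\AGL(\zeta)|$ normalization built into Eq.~\eqref{eq: horder GLQK}, this places $g_{\rm CA}$ inside the GLQK RKHS with norm $\mO(1)$ in $n$. I would then substitute this representation into the same kernel ridge regression generalization bound used for Theorem~\ref{thm: main theorem 1}, simultaneously tracking the classical-shadow estimation error of each kernel evaluation; the latter is controlled by $T = \mOt(1/\epsilon^2)$ to match the per-expectation shadow variance, while the former yields $N = \mOt(1/\epsilon^4)$ because the RKHS norm carries no $n$-dependent prefactor.

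I expect the main obstacle to be the RKHS-norm bookkeeping. One must confirm that the $|\AGL(\zeta)|$-fold sum in the local decomposition does not re-introduce $n$ when combined with the kernel's $1/|\AGL(\zeta)|$ normalization and the direct-sum structure of $\phi_{\rm GL}$; this hinges on all $F_A$ being translates of a single local template, so the combined norm tracks the single-template norm rather than $|\AGL(\zeta)|$. A secondary technical point is verifying that the bias and variance of truncated-shadow-based kernel evaluations remain $n$-independent on each subsystem of size $\zeta$, which should follow from the subsystem-by-subsystem analysis already developed for Theorem~\ref{thm: main theorem 1}.
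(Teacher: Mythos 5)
Your proposal follows essentially the same route as the paper's proof: cluster approximation at scale $\delta=\xi\log(2\|g\|_1 mp/\epsilon)$, then a translation-averaging (``dilution'') of $g_{\rm CA}$ over all $n$ subsystems in $\AGL(\zeta)$, so that the $\ell_2$-norm of the coefficients drops by a factor $1/n$ and exactly cancels the factor of $n$ coming from the $1/|\AGL(\zeta)|^{h/2}$ normalization of the $h=1$ feature map, yielding an $n$-independent dual-vector norm; the kernel-ridge-regression generalization bound and the shadow-estimation lemma then give $N=\mOt(1/\epsilon^4)$ and $T=\mOt(1/\epsilon^2)$ exactly as in the paper. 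The only slip is that each local template is a polynomial of degree at most $mp$ (not $\beta_g$, which involves $\min_i$ of the term degrees), but this affects only $n$-independent constants, not the scaling.
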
    
%%%%%%%%%%%%%%%%%%%%%%%%%%%%%%%%%%%%%%%%%%%%%%%%%%%%%%%%%%%%%%%%%%%%%%%%

This theorem shows the GLQK's excellent scalability, where a constant number of training samples, independent of $n$, is sufficient for learning any polynomial $g(\rho)$ from translationally symmetric data.
This significantly improves the learning cost of the shadow kernel, where $\mO(n^{mp-\beta_g})$ training samples are sufficient to learn the polynomial, as shown in SI~VI.
Here, $\beta_g={\rm LFC}(g;\delta)$ with $\delta=\xi \log(2 \|g\|_{1} mp/\epsilon)$, satisfying $p\leq \beta_g\leq mp$.
This improvement is remarkable for polynomials with small $\beta_g$, i.e., when $g(\rho)$ is local relative to $\delta=\mO(\xi)$.

The improved scalability in Theorems~\ref{thm: main theorem 1} and \ref{thm: main theorem 2} stems from the reduced dimensionality of the feature space. Unlike the shadow kernel, which encompasses all polynomials within its feature space, the polynomial GLQK incorporates only local features and their polynomials, resulting in efficient learning (see Methods for details). 
Notably, this restriction in GLQK never sacrifices learning universality due to the ECP.

%%%%%%%%%%%%%%%%%%%%%%%%%%%%%%%%%%%%%%%%%%%%%%%%%%%%%%%%%%%%%%%%%%%%%%%%
%%%%%%%%%%%%%%%%%%%%%%%%%%%%%%%%%%%%%%%%%%%%%%%%%%%%%%%%%%%%%%%%%%%%%%%%
%%%%%%%%%%%%%%%%%%%%%%%%%%%%%%%%%%%%%%%%%%%%%%%%%%%%%%%%%%%%%%%%%%%%%%%%
\subsection*{Numerical experiment (Random quantum dynamics)} \label{sec: numerical experiments}
\vspace{-0.42cm}

\noindent
We numerically demonstrate the GLQK's high scalability in the regression task of $g(\rho)$ for quantum states generated by random quantum dynamics~\cite{Chinzei2025-pa} (see Methods for details).
To investigate the impact of translation symmetry, we explore two types of random local Hamiltonians $H_1$ and $H_2$, where $H_1$ ($H_2$) is (not) translationally symmetric.
For $k=1,2$, given an initial product state $\ket{\phi_k}$ (that is translationally symmetric for $k=1$), we consider quantum dynamics $\ket{\psi_k} = e^{-iH_k t} \ket{\phi_k}$.
Here, $\ket{\psi_k}$ is used as quantum data in this task.
We generate quantum data by randomly sampling the local Hamiltonian $H_k$ and the initial product state $\ket{\phi_k}$, thereby defining the data distribution $\mD_k$ of $\ket{\psi_k}$. 
Both $N$ training data and $M$ test data are independently sampled from $\mD_k$.
The finite evolution time $t$ ensures the ECP of $\ket{\psi_k}$, suggesting that the GLQK is suitable for this task~\cite{Lieb1972-mo, Nachtergaele2006-un, Bravyi2006-df, Poulin2010-qy}. 
Furthermore, the translation symmetry of $\ket{\psi_1}$ implies that GLQK is likely to be even more effective for $\mD_1$.
For quantum data $\rho=\ket{\psi_k}\bra{\psi_k}$, we consider three types of target polynomials: local linear function $g_1(\rho)= \braket{X_1 Y_2}$, local nonlinear function $g_2(\rho)=\braket{X_1 X_2}\braket{Y_1 Y_2}$, and nonlocal linear correlation function $g_3(\rho)=\braket{X_1 Y_{n/2+1}}$.
The kernel ridge regression~\cite{Shalev-Shwartz2014-rj} is invoked to solve this task, based on the conventional shadow kernel and the polynomial GLQK with the truncated shadow kernel of Eqs.~\eqref{eq: horder GLQK} and \eqref{eq: truncated shadow kernel}.

Figure~\ref{fig: regression} demonstrates the GLQK's superior learning efficiency over the shadow kernel across all qubit numbers and target polynomials, for both $\mD_1$ and $\mD_2$.
Even though the nonlocal quantity $g_3(\rho)=\braket{X_1Y_{n/2+1}}$ is not directly included in the feature space of GLQK, it is learnable due to the cluster approximation $\braket{X_1Y_{n/2+1}}\approx \braket{X_1}\braket{Y_{n/2+1}}$.
The scatter plots of true and predicted values for test data also evidence the higher performance of GLQK.
This improved efficiency is particularly obvious for $\mD_1$ where data exhibits translation symmetry.
%for the translationally symmetric data distribution $\mD_1$.
In Fig.~\ref{fig: regression} (a), the prediction accuracy of GLQK remains high even as the number of qubits $n$ increases, indicating that $N=\mO(1)$ classical shadows of size $T=\mO(1)$ suffice to learn the polynomials from translationally symmetric data. 
This constant sample complexity contrasts sharply with the shadow kernel, where the prediction accuracy for $g_1$ and $g_2$ degrades with increasing $n$. 
Note that the accuracy of the shadow kernel for $g_3$ with $m=\beta_g=2$ and $p=1$ does not significantly decrease, as indicated by $\mO(n^{mp-\beta_g})=\mO(1)$.
In Fig.~\ref{fig: regression} (b), the GLQK exhibits better performance even for $\mD_2$, while its accuracy is no longer constant with respect to $n$.

%%%%%%%%%%%%%%%%%%%%%%%%%%%%%%%%%%%%%%%%%%%%%%%%%%%%%%%%%%%%%%%%%%%%%%%%
%%%%%%%%%%%%%%%%%%%%%%%%%%%%%%%%%%%%%%%%%%%%%%%%%%%%%%%%%%%%%%%%%%%%%%%%
%%%%%%%%%%%%%%%%%%%%%%%%%%%%%%%%%%%%%%%%%%%%%%%%%%%%%%%%%%%%%%%%%%%%%%%%
\subsection*{Numerical experiment (Quantum phase recognition)}
\vspace{-0.42cm}

\noindent
We also tackle quantum phase recognition, a more practical and application-oriented task~\cite{Cong2019-ov, Chinzei2024-nm} (see Methods for details).
Let us consider the bond-alternating XXZ model $H(J)$, where $J$ is the interaction strength.
The ground state of this Hamiltonian, $\ket{\phi(J)}$, exhibits a quantum phase transition from the trivial phase to the symmetry protected topological (SPT) phase at $J\approx 1$. 
The task here is to classify noisy ground-state data into these two phases.
We use locally disturbed ground states as quantum data: $\ket{\tilde{\phi}(J)} = R \ket{\phi(J)}$, where $R$ is a random local unitary.
Both $N$ training data and $M$ test data are independently generated by randomly sampling $J$ and $R$.
We solve this classification task using the support vector machine~\cite{Cortes1995-ic} with the shadow kernel and the polynomial GLQK, where the target polynomial $g(\rho)$ is the effective ``order parameter," and $g(\rho)=0$ corresponds to the phase boundary [Fig.~\ref{fig: QPR} (a)].
The topological order parameter for this transition is defined on a local subsystem of size $\mO(\xi)$~\cite{Elben2020-ip}, suggesting the validity of GLQK [Fig.~\ref{fig: QPR} (b)].

Figure~\ref{fig: QPR} (c) shows the test accuracy of the shadow kernel and the GLQK as the number of qubits $n$ is varied.
The accuracy of the shadow kernel significantly decreases with increasing $n$, whereas the GLQK maintains high accuracy even with up to 80 qubits.
This underscores the high learning efficiency of GLQK, enabling a substantial reduction in the number of training samples.
Furthermore, we perform kernel principal component analysis (PCA)~\cite{Mika1998-wh} to visualize the data geometry in the feature space [Figs.~\ref{fig: QPR} (d) and (e)].
In the shadow kernel, data points corresponding to the trivial and SPT phases overlap in the two-dimensional PCA space, indicating the difficulty in distinguishing between the two quantum phases.
Conversely, the PCA with the GLQK reveals a clear separation of data points, highlighting not only the easier classification than the shadow kernel but also the necessary and sufficient expressivity of GLQK for this task.

%%%%%%%%%%%%%%%%%%%%%%%%%%%%%%%%%%%%%%%%%%%%%%%%%%%%%%%%%%%%%%%%%%%%%%%%
%%%%%%%%%%%%%%%%%%%%%%%%%%%%%%%%%%%%%%%%%%%%%%%%%%%%%%%%%%%%%%%%%%%%%%%%
%%%%%%%%%%%%%%%%%%%%%%%%%%%%%%%%%%%%%%%%%%%%%%%%%%%%%%%%%%%%%%%%%%%%%%%%
\section*{Discussion} \label{sec: conclusions}
\vspace{-0.42cm}

\noindent
We have formulated the GLQK, a provably scalable ML framework for learning quantum many-body experimental data by leveraging locality. 
Although this work has primarily focused on the kernel method, the underlying principle---that any polynomial $g(\rho)$ can be learned solely from local information---is broadly applicable to other ML approaches, including neural networks (NNs). 
While the kernel method ensures an optimal solution within its feature space, NNs provide a more flexible methodology. 
Moreover, although we have adopted random Pauli measurements as classical shadows, alternative measurement protocols, such as shallow shadows~\cite{Hu2023-fm, Akhtar2023-bj, Bertoni2024-hz, Ippoliti2023-lq}, could reduce the sample complexity for extracting local information from quantum data.
Investigating these directions would further advance the utilization of quantum experimental data.

Demonstrating the quantum advantages of GLQK in practical problems is a significant open problem.
The quantum advantages of our learning framework rely on preparing quantum data and sampling measurement outcomes, as the learning phase is performed on a classical computer.
Despite the controversy surrounding the boundary between classical and quantum computational complexities~\cite {Lee2023-tn}, quantum data preparation and measurement are believed to be classically hard in certain situations.
Even in our problems, although the ECP may allow the efficient tensor network representation of the quantum state~\cite{Brandao2013-rz}, utilizing the tensor network often struggles to solve actual problems in systems with more than two dimensions due to the computational complexity of tensor contraction~\cite{Schuch2007-sv}.
This highlights the potential benefit of preparing such quantum states on quantum devices in GLQK.
Exploring the quantum advantages of our method presents an intriguing opportunity to identify how ML affects the computational complexity of classical and quantum algorithms for finitely correlated quantum systems.

%%%%%%%%%%%%%%%%%%%%%%%%%%%%%%%%%%%%%%%%%%%%%%%%%%%%%%%%%%%%%%%%%%%%%%%%
%%%%%%%%%%%%%%%%%%%%%%%%%%%%%%%%%%%%%%%%%%%%%%%%%%%%%%%%%%%%%%%%%%%%%%%%
%%%%%%%%%%%%%%%%%%%%%%%%%%%%%%%%%%%%%%%%%%%%%%%%%%%%%%%%%%%%%%%%%%%%%%%%
\section*{Methods}
\vspace{-0.5cm}

%%%%%%%%%%%%%%%%%%%%%%%%%%%%%%%%%%%%%%%%%%%%%%%%%%%%%%%%%%%%%%%%%%%%%%%%
\subsection*{Local-Cover Number and Local-Factor Count} 
\vspace{-0.42cm}

\noindent
Consider an $m$-body, degree-$p$ polynomial $g(\rho)$, its $\delta$-cluster approximation $g_{\rm CA}(\rho)=\sum_i c_i \prod_j\prod_k\tr(P_{ijk}\rho)$, and the set of local subsystems $\AGL(\zeta)$.
The local-cover number is a function of $g$, $\delta$, and $\zeta$, defined as
\begin{align}
    \alpha_g = {\rm LCN}(g;\delta,\zeta)\equiv\max_i (a_i). \label{eq: LCN}
\end{align}
Here, $a_i$ is the minimum number of subsystems in $\AGL(\zeta)$ required to encompass the support of the $i$th term of the $\delta$-cluster approximation.
That is, there exist a partition $\mP_i\equiv\{P_{ijk}\}_{j,k}=\mP_{i,1}\sqcup \cdots \sqcup \mP_{i,a_i}$ and local subsystems $\{A_{i,1},\ldots,A_{i,a_i}\}$ ($A_{i,j}\subseteq\AGL(\zeta)$) such that $\supp(P) \subseteq A_{i,j}$ for all $P\in\mP_{i,j}$, where $a_i$ is minimized among all possible partitions.
We have assumed that for any $P_{ijk}$, there exists $A\in\AGL(\zeta)$ such that $\supp(P_{ijk})\in A$ (this is necessarily satisfied if $\zeta\geq m\delta$).
Then, we can rewrite $g_{\rm CA}(\rho)$ as
\begin{align}
    g_{\rm CA}(\rho) = \sum_i c_i \prod_{j=1}^{a_i} \prod_{P\in\mP_{i,j}} \tr(P\rho) = \sum_i c_i \prod_{j=1}^{a_i} \ell_{ij}(\rho),
\end{align}
where 
\begin{align}
    \ell_{ij}(\rho)=\prod_{P\in\mP_{i,j}} \tr(P\rho)
\end{align}
is a local quantity of $\rho$ on the subsystem $A_{i,j}$.
This means that each term of $g_{\rm CA}(\rho)$ can be represented as the product of at most $\alpha_g=\max_i(a_i)$ local quantities.
The local-cover number satisfies $\alpha_g \leq mp$ because the degree of $g_{\rm CA}$ (i.e., $|\mP_i|$) is bounded by $mp$.

The local-factor count is a function of $g$ and $\delta$, defined as follows:
\begin{align}
    \beta_g = {\rm LFC}(g,\delta)\equiv\max(p,\min_i(b_i)), \label{eq: LFC}
\end{align}
where $b_i=|\mP_i|$ is the degree of the $i$th term in $g_{\rm CA}$.
By definition, the local-factor count satisfies $p\leq \beta_g \leq mp$.

%%%%%%%%%%%%%%%%%%%%%%%%%%%%%%%%%%%%%%%%%%%%%%%%%%%%%%%%%%%%%%%%%%%%%%%%
\subsection*{Kernel method} 
\vspace{-0.42cm}

\noindent
The kernel method~\cite{Shalev-Shwartz2014-rj} addresses a nonlinear learning problem by mapping data $\bx$ to a high-dimensional feature vector $\phi(\bx)$ and solving a linear optimization problem in the feature space.
This method approximates a target function $g(\bx)$ with a linear function in the feature space $\braket{\bw,\phi(\bx)}$, where $\bw$ is a dual vector and $\braket{\cdot,\cdot}$ represents an inner product.
Instead of explicitly mapping to the high-dimensional space, the kernel method computes the inner product of feature vectors as the kernel function, $k(\bx,\bx')=\braket{\phi(\bx),\phi(\bx')}$, allowing us to utilize potentially infinite-dimensional feature space.
Formally, given training data $\bx_1,\ldots,\bx_N$, we consider the following model $h_{\balpha}(\bx)$ that approximates the target function $g(\bx)$:
\begin{align}
    h_{\balpha}(\bx) = \sum_i \alpha_i k(\bx_i, \bx), \label{eq: kernel approx}
\end{align}
where $\balpha=(\alpha_1,\ldots,\alpha_N)$ are trainable parameters.
Remarkably, the representer theorem ensures that this kernel model of Eq.~\eqref{eq: kernel approx} contains the optimal solution that minimizes the regularized empirical loss in the entire feature space via $\bw^\ast = \sum_i \alpha_i^\ast \phi(\bx_i)$, where $^\ast$ indicates that it is optimal.
Moreover, the optimal $\balpha^\ast$ can be obtained efficiently by solving an $N$-dimensional optimization problem on a classical computer.
Thus, if the target function $g(\bx)$ can be represented as a linear function in the feature space $g(\bx)=\braket{\bw,\phi(\bx)}$ with some dual vector $\bw$, the kernel method can learn $g(\bx)$ with high accuracy, given a sufficient amount of training data.
The statistical learning theory guarantees this fact, for example, in kernel ridge regression with $\ell_2$ regularization (see SI~IV).

%%%%%%%%%%%%%%%%%%%%%%%%%%%%%%%%%%%%%%%%%%%%%%%%%%%%%%%%%%%%%%%%%%%%%%%%
\subsection*{Truncated shadow kernel} 
\vspace{-0.42cm}

\noindent
The feature vector of the truncated shadow kernel is given by (see SI~III for derivation)
\begin{align}
    &\phi^{\rm TSK}_A(S_{T}(\rho)) \notag \\
    &= \bigoplus_{d=0}^\infty \sqrt{ \frac{\tau^d}{d!}} \left( \bigoplus_{r=0}^{|A|} \sqrt{\left(\frac{\gamma}{|A|}\right)^{r}} \bigoplus_{\substack{\{i_1,\cdots,i_r\}\\ \subseteq A}} {\rm vec}\left(\sigma_{\{i_1,\cdots,i_r\}}\right) \right)^{\otimes d},
\end{align}
where ${\rm vec}(X)$ denotes the vectorization of the matrix $X$, and $\sigma_{\{i_1,\cdots,i_r\}}$ is the reduced density matrix on the subsystem $\{i_1,\cdots,i_r\}$ estimated from the classical shadow:
\begin{align}
    \sigma_{\{i_1,\cdots,i_r\}} = \frac{1}{T}\sum_{t=1}^T \sigma_{i_1}^{(t)} \otimes \cdots \otimes \sigma_{i_r}^{(t)}.
\end{align}
This feature vector includes arbitrarily large reduced density matrices on $A$ and their arbitrarily high-degree polynomials.
Compared to the shadow kernel, the truncated one excludes ``unphysical" elements like $\sigma_{\{i_1, \cdots, i_r\}}$ with duplicated indices, thereby potentially improving learning efficiency.

%%%%%%%%%%%%%%%%%%%%%%%%%%%%%%%%%%%%%%%%%%%%%%%%%%%%%%%%%%%%%%%%%%%%%%%%
\subsection*{Intuitive mechanism for Theorems~\ref{thm: main theorem 1} and \ref{thm: main theorem 2}} 
\vspace{-0.42cm}

\noindent
In Theorem~\ref{thm: main theorem 1} for general quantum data, the improved sample complexity can be understood by counting the number of linearly independent polynomials included in the feature space.
Consider the independent basis of $m$-body, degree-$p$ polynomials represented as $\prod_{j=1}^p\braket{P_j}$, where $P_j$ is an $m$-weight Pauli string.
Then, the number of bases (i.e., the number of combinations in choosing $\{P_j\}$) is $\mO(n^{mp})$.
The shadow kernel includes all these bases within the feature space~\cite{Huang2022-ip}, resulting in the sample complexity of $\mO(n^{mp})$.
In contrast, the polynomial GLQK with $h=\alpha_g$ includes only $\mO(n^{\alpha_g})$ bases because its feature space consists of degree-$\alpha_g$ polynomials in $\mO(n)$ local features.
This leads to the sample complexity of $\mO(n^{\alpha_g})$.

In Theorem~\ref{thm: main theorem 2} for translationally symmetric data, the constant sample complexity can be explained from the following argument.
Let $A^\ast \in \AGL(\zeta)$ be a representative local subsystem.
Then, using translation symmetry, the cluster approximation $g_{\rm CA}(\rho)=\sum_i c_i \prod_j\prod_k\tr(P_{ijk}\rho)$ can be written as a polynomial in the local reduced density matrix $\rho_{A^\ast}$, $g_{\rm CA}(\rho)=\sum_i c_i \prod_{j}\prod_{k} \tr_{A^\ast}(\tilde{P}_{ijk}\rho_{A^\ast})$, where $\tilde{P}_{ijk}$ is a Pauli string obtained by translating $P_{ijk}$ such that $\supp(\tilde{P}_{ijk})\subseteq A^\ast$.
Meanwhile, translation symmetry reduces the polynomial GLQK with $h=1$ to the local quantum kernel on $A^\ast$, $k_{\rm GL}(\cdot,\cdot)=\sum_A k_A(\cdot,\cdot)/n \approx k_{A^\ast}(\cdot,\cdot)$, up to statistical errors originating from a finite shadow size $T$.
This consideration implies that the original learning problem on the entire system is approximately equivalent to that on the local subsystem $A^\ast$.
Since the size of $A^\ast$, $\zeta=m\xi \log(2 \|g\|_{1} mp/\epsilon)$, is independent of $n$, the GLQK requires only a constant number of training samples to achieve certain accuracy.

%%%%%%%%%%%%%%%%%%%%%%%%%%%%%%%%%%%%%%%%%%%%%%%%%%%%%%%%%%%%%%%%%%%%%%%%
\subsection*{Regression task involving random quantum dynamics} 
\vspace{-0.42cm}

\noindent
We explore two types of one-dimensional local Hamiltonians: one translationally symmetric and the other not, defined as follows:
\begin{align}
H_{1} =\sum_{j=1}^n \sum_{\mu,\nu\in\{X,Y,Z\}}J^{\mu \nu}\sigma^\mu_j \sigma^{\nu}_{j+1}, \\
H_{2} =\sum_{j=1}^n \sum_{\mu,\nu\in\{X,Y,Z\}}J^{\mu \nu}_j \sigma^\mu_j \sigma^{\nu}_{j+1},
\end{align}
where $\sigma^\mu_j$ ($\mu=X,Y,Z$) is the single-qubit Pauli operator acting on the $j$th qubit, and $J^{\mu \nu}$ and $J^{\mu \nu}_j$ are the interaction strengths.
Since $J^{\mu \nu}$ does not depend on the qubit index, $H_1$ is translationally symmetric.

Given an initial product state $\ket{\phi_k}$, we consider the following quantum dynamics by $H_k$: $\ket{\psi_k} = e^{-iH_k t} \ket{\phi_k}$, where $k=1,2$.
Here, $\ket{\psi_k}$ is used as quantum data in this regression task.
We generate quantum data by randomly sampling the interaction strengths and the initial product states. 
Specifically, the interaction strengths $J^{\mu\nu}$ and $J^{\mu\nu}_j$ are drawn from the uniform distribution $[-1,1]$. 
For the initial product states, we define $\ket{\phi_1}$ as $\ket{u}\otimes \cdots \otimes \ket{u}$, where $\ket{u}$ is a single-qubit Haar random state, representing a translationally symmetric initial state. 
Alternatively, $\ket{\phi_2}$ is defined as $\ket{u_1}\otimes \cdots \otimes \ket{u_n}$, where $\ket{u_1},\ldots,\ket{u_n}$ are independent single-qubit Haar random states, representing a general initial state.
The evolution time is fixed at $t=0.5$.
The translation symmetry of $H_1$ and $\ket{\phi_1}$ ensures that $\ket{\psi_1}$ is also translationally symmetric.
For these quantum data, we consider three types of target polynomials: local linear function $g_1(\rho)= \braket{X_1 Y_2}$, local nonlinear function $g_2(\rho)=\braket{X_1 X_2}\braket{Y_1 Y_2}$, and nonlocal linear correlation function $g_3(\rho)=\braket{X_1 Y_{n/2+1}}$.

To solve these learning problems, we invoke the kernel ridge regression using the shadow kernel and the polynomial GLQK with the truncated shadow kernel, defined in Eqs.~\eqref{eq: horder GLQK} and \eqref{eq: truncated shadow kernel}.
During training, some hyperparameters (the regularization parameter $\lambda$ for the shadow kernel, and $\zeta, h$, and $\lambda$ for the GLQK) are optimized using grid search with cross-validation on $N$ training data.
The hyperparameters $\tau$ and $\gamma$ are fixed to 1 for both the shadow kernel and GLQK.
We also use $M=500$ test data to evaluate the performance of the trained models.
For each quantum data, we perform $T=500$ measurement shots to obtain a classical shadow.

In this numerical experiment, we represent the quantum data $\ket{\psi_k}$ using a matrix product state (MPS) implemented with ITensor~\cite{Fishman2022-lh}, a tensor network simulation library. 
The one-dimensional nature of the Hamiltonian enables highly accurate calculations with the MPS. Additionally, we perform kernel ridge regression using scikit-learn~\cite{Pedregosa2011-mb}, an ML library. 
Further details regarding this experiment are provided in SI~VII.

%%%%%%%%%%%%%%%%%%%%%%%%%%%%%%%%%%%%%%%%%%%%%%%%%%%%%%%%%%%%%%%%%%%%%%%%
\subsection*{Quantum phase recognition} 
\vspace{-0.42cm}

\noindent
We consider the following bond-alternating XXZ model:
\begin{align}
    H(J)
    = 
    &\sum_{j=1}^{n/2} \left( X_{2j-1}X_{2j} + Y_{2j-1}Y_{2j} + \Delta Z_{2j-1}Z_{2j}\right) \notag \\
    + J &\sum_{j=1}^{n/2-1} \left( X_{2j}X_{2j+1} + Y_{2j}Y_{2j+1} + \Delta Z_{2j}Z_{2j+1}\right),
\end{align}
where $J$ and $\Delta$ are the parameters of Hamiltonian.
We fix $\Delta=0.5$ for simplicity.
For $\Delta=0.5$, the ground state of this Hamiltonian, $\ket{\phi(J)}$, exhibits a quantum phase transition from the trivial phase to the SPT phase at $J\approx 1$. 
Our task is to classify noisy ground-state data into these two phases.
This SPT phase is protected by the inversion symmetry that swaps the $j$th and $(n-j+1)$th qubits for $j=1,\ldots,n/2$, and characterized by a topological order parameter $z=\sqrt{2}\tr(R_I \rho_I)/[\tr(\rho_{I_1}^2)+\tr(\rho_{I_2}^2)]^{1/2}$, where $I_1=\{n/2-a+1, \ldots, n/2\}$ and $I_2=\{n/2+1, \ldots, n/2+a\}$ are local subsystems with width $a=\mO(\xi)$, $I=I_1\cup I_2$ is the union of $I_1$ and $I_2$, and $R_I$ is the inversion operator for $I_1$ and $I_2$ with respect to the reflection center~\cite{Elben2020-ip}.
The existence of this order parameter guarantees that the GLQK can learn the phase transition when the size of local subsystems is set to at least $\zeta = O(\xi)$.
Note that despite the divergence of the correlation length $\xi$ at the transition point, the GLQK based on local subsystems of finite size achieves high classification accuracy, as shown in the results.

Here, we assume that the ground state is disturbed by inversion-symmetric local noise as
\begin{align}
    &\ket{\tilde{\phi}(J)} = R \ket{\phi(J)}, \\
    &R= \left( U_1 \otimes \cdots \otimes U_{n/2} \right) \otimes \left( U_{n/2} \otimes \cdots \otimes U_{1} \right),
\end{align}
where $U_j$ ($j=1,\ldots,n/2$) is a single-qubit Haar random unitary. 
Note that $\ket{\tilde{\phi}(J)}$ is not translationally symmetric.
Since $R$ is local and inversion symmetric, it does not destroy the SPT phase that is protected by the inversion symmetry.
We adopt $\ket{\tilde{\phi}(J)}$ as quantum data in this task, which is randomly generated by drawing $J$ and $U_1, \ldots, U_{n/2}$ from the uniform distribution $[0.1, 1.9]$ and the single-qubit Haar random unitary ensemble, respectively.
The class label $y$ for training data $\ket{\tilde{\phi}(J)}$ is assigned as $y=0$ for the trivial phase (i.e., $J\lesssim 1$) and $y=1$ for the SPT phase (i.e., $J\gtrsim 1$).

We solve this classification task using the support vector machine~\cite{Cortes1995-ic} with the shadow kernel and the polynomial GLQK based on the truncated shadow kernel.
The calculation conditions are the same as those of the first numerical experiment: we learn from $N$ training data while optimizing some hyperparameters and use $M=500$ test data to evaluate the performance of the trained model. 
Each data is a classical shadow of size $T=500$.

In this numerical experiment, we represent the quantum data $\ket{\tilde{\phi}(J)}$ using an MPS implemented with ITensor~\cite{Fishman2022-lh}, a tensor network simulation library. 
The one-dimensional nature of the Hamiltonian enables highly accurate calculations with the MPS. Additionally, we perform the support vector machine using scikit-learn~\cite{Pedregosa2011-mb}, an ML library. 
Further details regarding this experiment are provided in SI~VII.

%%%%%%%%%%%%%%%%%%%%%%%%%%%%%%%%%%%%%%%%%%%%%%%%%%%%%%%%%%%%%%%%%%%%%%%%%%%%%%%%%%%%%%%
%%%%%%%%%%%%%%%%%%%%%%%%%%%%%%%%%%%%%%%%%%%%%%%%%%%%%%%%%%%%%%%%%%%%%%%%%%%%%%%%%%%%%%%
%%%%%%%%%%%%%%%%%%%%%%%%%%%%%%%%%%%%%%%%%%%%%%%%%%%%%%%%%%%%%%%%%%%%%%%%%%%%%%%%%%%%%%%
\section*{Acknowledgments}
\vspace{-0.5cm}
\noindent
Fruitful discussions with Yuichi Kamata, Nasa Matsumoto, Riki Toshio, and Shintaro Sato are gratefully acknowledged.

\newpage
\,
\newpage
\begin{widetext} %\newpageをしとくとwidetext開始時の変な線は出てこない

\begin{center}
\textbf{\raggedright\Large\bfseries\sffamily Supplementary Information}{\Large\par}    
\end{center}

\vspace{0cm}
\tableofcontents

%%%%%%%%%%%%%%%%%%%%%%%%%%%%%%%%%%%%%%%%%%%%%%%%%%%%%%%%%%%%%%%%%%%%%%%%
%%%%%%%%%%%%%%%%%%%%%%%%%%%%%%%%%%%%%%%%%%%%%%%%%%%%%%%%%%%%%%%%%%%%%%%%
%%%%%%%%%%%%%%%%%%%%%%%%%%%%%%%%%%%%%%%%%%%%%%%%%%%%%%%%%%%%%%%%%%%%%%%%
\section{Related works}

\subsection{Leveraging locality in quantum machine learning}

The constraint of locality can significantly improve the efficiency of many quantum algorithms, including simulation, tomography, and circuit compilation, sometimes exponentially~\cite{Cerezo2021-tq, Cerezo2023-hz, Cramer2010-yw, Rouze2024-gm, Mizuta2022-ua, Kanasugi2023-te, Huang2024-vx}. 
This constraint means that quantum information, such as entanglement, correlations, and interactions, does not stretch across the entire system arbitrarily, but is confined to small neighborhoods, thereby reducing the problem for the entire Hilbert space to one concerning a small subspace.

The concept of locality is also important to improve machine learning (ML) for quantum many-body systems~\cite{Lewis2024-yh, Wanner2024-ei, Wu2024-az}.
For instance, previous studies have considered the learning task of predicting a local linear property $g(\rho(\bx))=\tr(O\rho(\bx))$, where $\rho(\bx)$ is the ground state of an unknown local Hamiltonian $H(\bx)$ with parameters $\bx$, and $O$ is an unknown local observable.
The goal of this problem is to predict the value $g(\rho(\bx))$ for an unseen parameter point $\bx$ by learning from a training dataset $\{\bx_i,g(\rho(\bx_i))\}_{i=1}^N$ within the same quantum phase as $\bx$.
While an initial ML approach without utilizing locality~\cite{Huang2022-ip} has demonstrated the potential to solve this task, it suffers from poor sample complexity, requiring a number of training samples that scales polynomially with the system size $n$, and exponentially with precision $\epsilon$.
Recent studies~\cite{Lewis2024-yh, Wanner2024-ei} have made substantial progress in overcoming these limitations by explicitly leveraging the physical principle of locality.
They have succeeded in reducing the sample complexity with respect to $n$ and $\epsilon$ exponentially.

Compared to these previous results, our ML framework is applicable to more general situations.
First, our method can be applied to more general quantum data $\rho$, extending beyond ground and thermal states, and to more general target quantities $g(\rho)$, including nonlocal and nonlinear ones.
Our theory only assumes the exponential clustering property (ECP) and eliminates the condition that all data belongs to the same quantum phase.
Moreover, we do not need the Hamiltonian parameter $\bx$ as training data, only requiring measurement outcomes from quantum experiments for $\rho$.
The methodology that utilizes measurement outcomes as data has been explored in Ref.~\cite{Wu2024-az}; however, it has not provided theoretical guarantees for applicability and sample complexity.
Our results present a provably versatile and efficient approach, thereby accelerating the utilization of quantum many-body data obtained from experiments.

\subsection{Quantum kernel}

The quantum kernel method has been proposed to harness the quantum feature space that is classically intractable, offering a potential pathway to solve problems beyond the reach of classical computation~\cite{Havlicek2019-wd, Schuld2019-hm}.
While this method has been proven to exhibit quantum speedup for artificially designed datasets~\cite{Liu2021-oe}, achieving quantum advantages for practical problems is still challenging.
One bottleneck is the exponential concentration phenomenon~\cite{Thanasilp2024-qc}: the value of kernel functions concentrates around a fixed value exponentially with the number of qubits $n$, due to the exponentially large dimensionality of the Hilbert space.
This prevents the quantum kernel method from solving large-scale problems that cannot be addressed using classical approaches.
Several quantum kernels can overcome this difficulty in specific situations.
For instance, the projected quantum kernel~\cite{Huang2021-wb} can avoid this concentration by projecting the quantum state onto local reduced density matrices.
The shadow kernel~\cite{Huang2022-ip} also circumvents this problem by using the classical shadow instead of treating the quantum state directly.
In particular, the shadow kernel has been proven to learn quantum many-body phases with polynomial sample and computational time complexities, highlighting its potential for efficiently analyzing intricate quantum systems.
However, the polynomial complexities of the shadow kernel remain too demanding for near-term quantum devices, presenting a challenge to reduce resource requirements.

\subsection{Machine learning for quantum experimental data}

Applying classical ML to quantum measurement results, including the shadow kernel method, presents a promising approach for leveraging the advantages of quantum technologies. 
This methodology has found diverse applications across various learning tasks, including quantum phase recognition \cite{Huang2022-ip, Rem2019-zr, Cao2025-mx}, the prediction of quantum properties \cite{Wu2024-az, Che2024-ue, Du2023-hz, Tang2024-ei}, and the generation of quantum many-body states \cite{Wang2022-sq, Yao2024-jo, Tang2025-kp}. 
This approach often assumes the ``measure-first" protocol, where quantum states are initially measured independently of a specific task, and the resulting measurement outcomes are subsequently used for the task. 
This contrasts with the ``fully-quantum" protocol, where measurements are adapted during the training process. 
Recent advancements have demonstrated both the limitations and potential of these protocols \cite{Gyurik2023-qg}. 
While theoretical findings indicate that the fully-quantum protocol can efficiently resolve certain learning tasks that demand exponential resources from the measure-first protocol, the practical applicability of this distinction to real-world problems remains an open question. 
Identifying the precise boundaries of quantum advantage within these approaches constitutes a compelling research inquiry.

%%%%%%%%%%%%%%%%%%%%%%%%%%%%%%%%%%%%%%%%%%%%%%%%%%%%%%%%%%%%%%%%%%%%%%%%
%%%%%%%%%%%%%%%%%%%%%%%%%%%%%%%%%%%%%%%%%%%%%%%%%%%%%%%%%%%%%%%%%%%%%%%%
%%%%%%%%%%%%%%%%%%%%%%%%%%%%%%%%%%%%%%%%%%%%%%%%%%%%%%%%%%%%%%%%%%%%%%%%
\section{Exponential clustering and cluster approximation}

This section provides the detailed definitions of the ECP and cluster approximation, and proves Lemma 1 in the main text, which quantifies the accuracy of cluster approximation.

%%%%%%%%%%%%%%%%%%%%%%%%%%%%%%%%%%%%%%%%%%%%%%%%%%%%%%%%%%%%%%%%%%%%%%%%
\subsection{Exponential clustering property}

Here, we consider an $n$-qubit system on the $D$-dimensional hypercubic lattice $G\subset \mathbb{Z}^D$ with the periodic boundary condition, where each qubit is located at a lattice point (i.e., $|G|=n$).
The distance between two lattice points, $\bm{a}=(a_1,\cdots,a_D)\in G$ and $\bm{b}=(b_1,\cdots,b_D) \in G$, is defined as $\text{dist}(\bm{a},\bm{b})=\sum_{i=1}^D |a_i-b_i|$.
The following arguments can be extended to general lattices, where ${\rm dist}(\bm{a},\bm{b})$ is defined as the length of the shortest path connecting $\bm{a}$ and $\bm{b}$.
For notational simplicity, we may represent $G$ by $[n]=\{1,2,\ldots,n\}$, where each element corresponds to a lattice point, or a qubit.
We may also denote the power set of $G$ (the set of all subsets of $G$) by $2^G$.
For a Pauli string $P\in\{I,X,Y,Z\}^{\otimes n}$ on $G$, let $\supp(P)\subseteq G$ be the support of $P$, i.e., the set of qubits on which $P$ acts nontrivially (e.g., $\supp(X_1Z_2Y_4)=\{1,2,4\}$).
Also, the Pauli weight of $P$ is defined as the number of $X, Y$, and $Z$ operators in $P$ (e.g., the weight of $X_1Z_2Y_4$ is 3).

Let $\rho$ be an $n$-qubit quantum state on $G$.
We say that $\rho$ satisfies the ECP if the following inequality holds for any observables $O_A$ and $O_B$, each acting on subsystems $A\subseteq G$ and $B\subseteq G$, respectively:
\begin{align}
    \left| \braket{O_A O_B} - \braket{O_A}\braket{O_B}\right| 
    \leq \| O_A \|_{S} \| O_B \|_{S} \, e^{-{\rm dist}(A,B)/\xi},
\end{align}
where ${\rm dist}(A,B)={\rm min}_{\bm{a}\in A, \bm{b}\in B} \dist(\bm{a},\bm{b})$ is the shortest distance between $A$ and $B$ on the lattice, $\xi$ is the correlation length, $\braket{X}=\tr(X\rho)$ is the expectation value, and $\|X\|_{S}$ denotes the spectral norm defined as the maximum eigenvalue of $(X^\dag X)^{1/2}$.
This clustering property indicates that quantum correlations decay exponentially in space, justifying the approximation of $\braket{O_A O_B} \approx \braket{O_A}\braket{O_B}$ for any observables $O_A$ and $O_B$ with ${\rm dist}(A,B) \gg \xi$.

%%%%%%%%%%%%%%%%%%%%%%%%%%%%%%%%%%%%%%%%%%%%%%%%%%%%%%%%%%%%%%%%%
%%%%%%%%%%%%%%%%%%%%%%%%%%%%%%%%%%%%%%%%%%%%%%%%%%%%%%%%%%%%%%%%%
\subsection{Cluster approximation} \label{sec_sm: cluster approx}

The focus of this work is learning an unknown polynomial $g(\rho)$.
We characterize the polynomial as follows:
%%%%%%%%%%%%%%%%%%%%%%%%%%%%%%%%%%%%%%%%%%%%%%%%%%%%%%%%%%%%%%%%%
%\begin{tcolorbox}[enhanced, breakable=true]
\begin{dfn}[$m$-body, degree-$p$ polynomial]
Consider the following function $g(\rho)$ of a quantum state $\rho$:
\begin{align}
g(\rho) = \sum_i c_i \left( \prod_{j=1}^p \tr \left[ P_{ij} \rho \right] \right),
\end{align}
where $P_{ij}\in\{I,X,Y,Z\}^{\otimes n}$ is an $n$-qubit Pauli string, and $c_i$ is an expansion coefficient.
Then, if the Pauli weights of all $P_{ij}$'s are less than or equal to $m$, we say that $g(\rho)$ is an $m$-body, degree-$p$ polynomial in $\rho$.
Also, we define the $\ell_1$- and $\ell_2$-norms of Pauli coefficients as $\|g\|_{1}=\sum_i |c_i|$ and $\|g\|_{2}= (\sum_i |c_i|^2)^{1/2}$, respectively.
\end{dfn}
%\end{tcolorbox}
%%%%%%%%%%%%%%%%%%%%%%%%%%%%%%%%%%%%%%%%%%%%%%%%%%%%%%%%%%%%%%%%%

Here, we introduce the cluster approximation of the polynomial $g(\rho)$.
This is defined as follows:
%%%%%%%%%%%%%%%%%%%%%%%%%%%%%%%%%%%%%%%%%%%%%%%%%%%%%%%%%%%%%%%%%
%\begin{tcolorbox}[enhanced, breakable=true]
\begin{dfn}[Cluster approximation]
Let $g(\rho) = \sum_i c_i \prod_{j=1}^p \tr \left[ P_{ij} \rho \right]$ be an $m$-body, degree-$p$ polynomial.
Given a distance $\delta$, we decompose $P_{ij}$ in the following manner.
Define a graph $Q_{ij}$ consisting of nodes and edges, where each node corresponds to an element in $\text{supp}(P_{ij})$, and two nodes $\bm{a},\bm{b}\in \text{supp}(P_{ij})$ are connected by an edge if and only if $\text{dist}(\bm{a},\bm{b})\leq \delta$.
Then, let $Q_{ij}$ be separated into $d_{ij}$ connected subgraphs, called clusters, $Q_{ij1},Q_{ij2},\cdots,Q_{ijd_{ij}}$ ($1\leq d_{ij} \leq m$).
That is, there exists a path connecting any pair of nodes within each cluster, and there are no edges connecting different clusters.
Based on this graph, we decompose the Pauli string $P_{ij}$ as $P_{ij}=P_{ij1}\otimes P_{ij2} \otimes\cdots\otimes P_{ijd_{ij}}$, where $P_{ijk}$ is the partial Pauli string of $P_{ij}$ acting on the cluster $Q_{ijk}$ ($k=1,\ldots,d_{ij}$).
This decomposition defines the $\delta$-cluster approximation of $g(\rho)$ as
\begin{align}
g_{\rm CA}(\rho) 
= \sum_i c_i \prod_{j=1}^p \prod_{k=1}^{d_{ij}} \tr \left[ P_{ijk} \rho \right].
\end{align}
\end{dfn}    
%\end{tcolorbox}
%%%%%%%%%%%%%%%%%%%%%%%%%%%%%%%%%%%%%%%%%%%%%%%%%%%%%%%%%%%%%%%%%

Intuitively, this approximation decomposes $\text{supp}(P_{ij})$ by grouping spatially close qubits together and separating distant qubits into different clusters.
If $g(\rho)$ is an $m$-body, degree-$p$ polynomial, its cluster approximation $g_{\rm CA}(\rho)$ is at most $m$-body and degree-$mp$.

To tightly evaluate the $\ell_1$-norm of $g_{\rm CA}(\rho)$, we combine its duplicated terms as follows.
Let $\mP_i^0=\{P_{ijk}\}_{jk}$.
We partition the domain of the index $i$, $\{1,2,3,\ldots\}$, into $N_1 \sqcup N_2 \sqcup \cdots$ such that $\mP_i^0=\mP_j^0$ if $i$ and $j$ are in the same $N_k$ and $\mP_i^0\neq\mP_j^0$ if $i$ and $j$ are in different $N_k$'s.
Then, we combine duplicated terms in $g_{\rm CA}(\rho)$ as $\sum_{i\in N_k} c_i \prod_{P\in\mP_{i}^0}\tr[P\rho] = \hc_k \prod_{P\in\mP_{k}}\tr[P\rho]$, where we have defined $\hc_k = \sum_{i\in N_k} c_i$ and $\mP_k=\mP^0_i$ for some $i\in N_k$.
As a result, we obtain
\begin{align}
g_{\rm CA}(\rho) 
= \sum_i \hc_i \prod_{P\in\mP_i} \tr \left[ P \rho \right], \label{eq_sm: combined cluster approximation}
\end{align}
where we have rewritten the index $k$ as $i$.
This expression for $g_{\rm CA}(\rho)$ will be used in what follows.
The $\ell_1$-norm of $g_{\rm CA}(\rho)$ is defined as $\|g_{\rm CA}\|_1=\sum_i |\hc_i|$, which is smaller than that of the original polynomial:
\begin{align}
\|g_{\rm CA}\|_1 = \sum_i |\hc_i| \leq \sum_i |c_i| = \|g\|_1 \label{eq_sm: cluster norm}
\end{align}
because of the triangle inequality $|x+y|\leq|x|+|y|$.

We show the following inequalities for later use: 
\begin{align}
    \sum_{P\in\mP_i}|\supp(P)|\leq mp \quad \text{and} \quad b_i\equiv |\mP_i|\leq mp.\label{eq_sm: convenient ineq}
\end{align}
These inequalities hold because $\sum_{P\in\mP_i}|\supp(P)|=\sum_{j=1}^p \sum_{k=1}^{d_{ij}} |\supp(P_{ijk})|$, $\sum_{k=1}^{d_{ij}}|\supp(P_{ijk})| = |\supp(P_{ij})| \leq m$, and $|\mP_i| =\sum_{P\in\mP_i} 1 \leq\sum_{P\in\mP_i}|\supp(P)|\leq mp$.

For any $m$-body polynomial, each cluster $\supp(P_{ijk})$ in the cluster approximation is encompassed by a local subsystem of size $\zeta=m\delta$, since the number of qubits included in each cluster is at most $m$, and the distance between neighboring qubits within the cluster is less than $\delta$.
Considering this, we define a set of local subsystems $\AGL(\zeta)$ as follows:
\begin{align}
    \AGL(\zeta) = \{ A_{\bm{a}}(\zeta) \,|\, \bm{a}\in G\}, \label{eq_sm: set of subsystem}
\end{align}
where $A_{\bm{a}}(\zeta) = \{ \bm{b}\in G \,|\,  a_j\leq b_j < a_j+\zeta, \,\forall j \}$ is a local subsystem of width $\zeta$ whose corner is located at $\bm{a}\in G$. 
By definition, $|\AGL(\zeta)|=n$ and $|A_{\bm{a}}(\zeta)|=\zeta^D$ hold.

%%%%%%%%%%%%%%%%%%%%%%%%%%%%%%%%%%%%%%%%%%%%%%%%%%%%%%%%%%%%%%%%%
%%%%%%%%%%%%%%%%%%%%%%%%%%%%%%%%%%%%%%%%%%%%%%%%%%%%%%%%%%%%%%%%%
\subsection{Accuracy in cluster approximation}

For quantum states exhibiting the ECP, $g_{\rm CA}(\rho)$ well approximates the original polynomial $g(\rho)$ if $\delta$ is sufficiently large.
To show this, we first prove two lemmas for preliminaries:
%%%%%%%%%%%%%%%%%%%%%%%%%%%%%%%%%%%%%%%%%%%%%%%%%%%%%%%%%%%%%%%%%%%%%%%%
%%%%%%%%%%%%%%%%%%%%%%%%%%%%%%%%%%%%%%%%%%%%%%%%%%%%%%%%%%%%%%%%%%%%%%%%
%%%%%%%%%%%%%%%%%%%%%%%%%%%%%%%%%%%%%%%%%%%%%%%%%%%%%%%%%%%%%%%%%%%%%%%%
%\begin{tcolorbox}[enhanced, breakable=true]
\begin{lem} \label{lem_sm: simultaneous inequalities}
Let $X_1,X_2,\cdots \in [-1,1]$ and $y_1,y_2,\cdots, \in [-1,1]$ be real numbers satisfying $X_1=y_1$.
If $|X_{i+1} -  X_{i}y_{i+1}| \leq \epsilon$ for any $i\in\{1,2,\cdots\}$, then
\begin{align}
    |X_k - Y_k|\leq (k-1)\epsilon \label{eq_sm: lemma2_2}
\end{align}
holds for any $k\in\{1,2,\cdots\}$, where we have defined $Y_k = \prod_{i=1}^k y_{i}$.
\end{lem}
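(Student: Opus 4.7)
The plan is to proceed by induction on $k$. The base case $k=1$ is immediate from the hypothesis $X_1 = y_1 = Y_1$, which gives $|X_1 - Y_1| = 0 \leq 0 \cdot \epsilon$.

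For the inductive step, assuming $|X_k - Y_k| \leq (k-1)\epsilon$, I would write the target quantity $X_{k+1} - Y_{k+1}$ as $X_{k+1} - Y_k y_{k+1}$ and insert the ``interpolation'' term $X_k y_{k+1}$. Applying the triangle inequality then yields
\begin{align}
|X_{k+1} - Y_{k+1}| &\leq |X_{k+1} - X_k y_{k+1}| + |y_{k+1}|\,|X_k - Y_k|.
\end{align}
The first term is bounded by $\epsilon$ by the hypothesis of the lemma, while the second term is bounded by $1 \cdot (k-1)\epsilon$ by the inductive hypothesis combined with $|y_{k+1}|\leq 1$. Summing gives the desired $k\epsilon$ bound.

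The argument is essentially a one-step error-propagation estimate, so there is no serious obstacle; the only subtle point is ensuring that the bound $|y_{k+1}|\leq 1$ prevents the accumulated error from being amplified multiplicatively, which is exactly what makes the $(k-1)\epsilon$ bound linear in $k$ rather than exponential. This linearity is what will eventually be needed downstream (e.g., when iteratively peeling off clusters in the proof of Lemma~1 of the main text), so I would keep the proof short and explicit about the role of the normalization $|y_i|\leq 1$.
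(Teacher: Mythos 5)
Your proof is correct and follows essentially the same route as the paper: induction on $k$, inserting the interpolation term $X_k y_{k+1}$, and using $|y_{k+1}|\leq 1$ to keep the error accumulation additive. No gaps.
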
    
%\end{tcolorbox}

\begin{proof}
We prove this lemma by mathematical induction with respect to $k$.
\begin{enumerate}
    \item[(i)] For $k=1$, the lemma holds from $|X_1-Y_1|=|X_1-y_1|=0$, where we have used $X_1=y_1$.
    \item[(ii)] Assume $|X_k-Y_k| \leq (k-1)\epsilon$. 
    Then, we have
    \begin{align}
        |X_{k+1}-Y_{k+1}|
        &=|X_{k+1} - Y_{k}y_{k+1}| \\
        &=| (X_{k+1} - X_k y_{k+1}) + (X_k y_{k+1} - Y_{k}y_{k+1})| \\
        &\leq |X_{k+1} - X_k y_{k+1}| + |X_k-Y_k| \cdot |y_{k+1}| \\
        &\leq \epsilon + (k-1)\epsilon \\
        &=k\epsilon,
    \end{align}
    where we have used $|X_{k+1}-X_k y_{k+1}|\leq \epsilon$, $|X_k-Y_k| \leq (k-1)\epsilon$, and $|y_{k+1}|\leq 1$ in the third line.
\end{enumerate}
These calculations prove the lemma for any $k$ by mathematical induction.
\end{proof}

%%%%%%%%%%%%%%%%%%%%%%%%%%%%%%%%%%%%%%%%%%%%%%%%%%%%%%%%%%%%%%%%%%%%%%%%
%\begin{tcolorbox}[enhanced, breakable=true]
\begin{lem} \label{lem_sm: product formula}
Let $z_1,z_2,\cdots \in [-1,1]$ and $w_1,w_2,\cdots \in [-1,1]$ be real numbers.
If $|z_i-w_i| \leq \epsilon$ for any $i\in\{1,2,\ldots\}$, then \begin{align}
    \left| Z_k - W_k \right| \leq k\epsilon
\end{align}
holds for any $k\in\{1,2,\ldots\}$, where we have defined $Z_k = \prod_{i=1}^k z_i$ and $W_k = \prod_{i=1}^k w_i$.
\end{lem}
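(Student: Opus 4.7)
The plan is to prove this by straightforward induction on $k$, in the same spirit as Lemma~\ref{lem_sm: simultaneous inequalities}, using a standard telescoping trick inside the product. (Alternatively, one could write the difference as a single telescoping sum $Z_k-W_k=\sum_{i=1}^k z_1\cdots z_{i-1}(z_i-w_i)w_{i+1}\cdots w_k$ and bound each summand by $\epsilon$ using $|z_j|,|w_j|\leq 1$; I will present the inductive version since it parallels the previous lemma.)

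For the base case $k=1$, the claim $|Z_1-W_1|=|z_1-w_1|\leq\epsilon$ is just the hypothesis. For the inductive step, assume $|Z_k-W_k|\leq k\epsilon$. I would add and subtract $z_{k+1}W_k$ to get
\begin{align}
|Z_{k+1}-W_{k+1}|
&=|z_{k+1}Z_k-w_{k+1}W_k| \\
&\leq |z_{k+1}|\cdot|Z_k-W_k|+|W_k|\cdot|z_{k+1}-w_{k+1}|.
\end{align}
Then I would invoke three ingredients: the uniform bound $|z_{k+1}|\leq 1$ from the hypothesis, the bound $|W_k|=\prod_{i=1}^k|w_i|\leq 1$ which follows from $|w_i|\leq 1$, the inductive hypothesis $|Z_k-W_k|\leq k\epsilon$, and the given $|z_{k+1}-w_{k+1}|\leq\epsilon$. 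Combining these yields $|Z_{k+1}-W_{k+1}|\leq k\epsilon+\epsilon=(k+1)\epsilon$, closing the induction.

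There is no real obstacle; the only point that needs to be used (and which is why the assumption $|z_i|,|w_i|\leq 1$ appears in the statement) is that the factors multiplying the errors in the telescoping expansion do not amplify them. The result is thus a clean additive accumulation of the per-factor error $\epsilon$ over the $k$ factors of the product.
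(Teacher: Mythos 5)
Your proof is correct and follows essentially the same route as the paper: induction on $k$, splitting $Z_{k+1}-W_{k+1}$ by adding and subtracting $z_{k+1}W_k$, and using $|z_{k+1}|\leq 1$ and $|W_k|\leq 1$ to prevent error amplification. The telescoping-sum alternative you mention in passing is also valid, but the version you present matches the paper's argument.
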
    
%\end{tcolorbox}

\begin{proof}
    We prove this lemma by mathematical induction with respect to $k$.
    \begin{enumerate}
        \item[(i)] For $k=1$, the lemma holds by the assumption of $|z_1-w_1|\leq \epsilon$.
        \item[(ii)] Assume $|Z_k- W_k| \leq k\epsilon$.
        Then, we have
        \begin{align}
            |Z_{k+1}-W_{k+1}| 
            &= |Z_{k}z_{k+1}-W_{k}w_{k+1}| \notag \\
            &= |(Z_{k}z_{k+1} - W_{k}z_{k+1}) + (W_{k}z_{k+1} - W_{k}w_{k+1})| \\
            &\leq |Z_k-W_k|\cdot|z_{k+1}| + |W_k|\cdot|z_{k+1}-w_{k+1}| \\
            &\leq k\epsilon + \epsilon \notag \\
            &= (k+1)\epsilon,
        \end{align}
        where we have used $|Z_k-W_k|\leq k\epsilon$, $|z_{k+1}|\leq 1$, $|W_{k}|\leq 1$, and $|z_{k+1}-w_{k+1}|\leq \epsilon$ in the third line.
    \end{enumerate}
    These prove the lemma for any $k$ by mathematical induction.
\end{proof}

Based on these, we prove the following lemma to quantify the accuracy of cluster approximation:
%%%%%%%%%%%%%%%%%%%%%%%%%%%%%%%%%%%%%%%%%%%%%%%%%%%%%%%%%%%%%%%%%
%\begin{tcolorbox}[enhanced, breakable=true]
\begin{lem}[Lemma~1 in the main text] \label{lem_sm: approx}
    Let $g(\rho)$ be an $m$-body, degree-$p$ polynomial.
    For any $\epsilon\in(0,\infty)$ and $\xi\in(0,\infty)$, the $\delta$-cluster approximation $g_{\rm CA}(\rho)$ with $\delta=\xi \log(\|g\|_{1} mp/\epsilon)$ satisfies
    \begin{align}
        &\left| g(\rho) - g_{\rm CA}(\rho) \right| \leq \epsilon, 
    \end{align}
    for any $\rho$ satisfying the ECP with a correlation length less than or equal to $\xi$.
\end{lem}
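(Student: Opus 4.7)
The plan is to peel the problem apart with the triangle inequality so that only a per-Pauli-monomial estimate is needed. Writing $g(\rho)-g_{\rm CA}(\rho)=\sum_i c_i \bigl(\prod_{j=1}^p\tr[P_{ij}\rho]-\prod_{j=1}^p\prod_{k=1}^{d_{ij}}\tr[P_{ijk}\rho]\bigr)$ and pulling absolute values through the sum, the problem reduces to bounding
\[
  \Delta_i\;=\;\Bigl|\prod_{j=1}^{p}\tr[P_{ij}\rho]\;-\;\prod_{j=1}^{p}\prod_{k=1}^{d_{ij}}\tr[P_{ijk}\rho]\Bigr|
\]
uniformly in $i$, which contributes an overall $\|g\|_1$ factor after summation.

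The per-monomial bound is where the ECP enters. For fixed $i,j$, I would set $X_k=\tr[(P_{ij1}\otimes\cdots\otimes P_{ijk})\rho]$ and $y_k=\tr[P_{ijk}\rho]$, noting $X_1=y_1$ and $|X_k|,|y_k|\le 1$ because the operators involved are Pauli strings of unit spectral norm. The decisive step is that by the cluster-graph construction, the support of $P_{ij,k+1}$ lies at distance strictly greater than $\delta$ from $\bigcup_{s\le k}\supp(P_{ijs})$, so the ECP applied to $O_A=P_{ij1}\otimes\cdots\otimes P_{ijk}$ and $O_B=P_{ij,k+1}$ gives $|X_{k+1}-X_k y_{k+1}|\le e^{-\delta/\xi}$. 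Feeding this into Lemma~\ref{lem_sm: simultaneous inequalities} then yields
\[
  \Bigl|\tr[P_{ij}\rho]-\prod_{k=1}^{d_{ij}}\tr[P_{ijk}\rho]\Bigr|\le (d_{ij}-1)e^{-\delta/\xi}\le (m-1)e^{-\delta/\xi},
\]
using $d_{ij}\le m$ from the $m$-body assumption.

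Next, I would promote this factor-wise estimate to the full degree-$p$ product by setting $z_j=\tr[P_{ij}\rho]$ and $w_j=\prod_k\tr[P_{ijk}\rho]$, both bounded by $1$ in absolute value, and invoking Lemma~\ref{lem_sm: product formula} with per-factor error $(m-1)e^{-\delta/\xi}$. This gives $\Delta_i\le p(m-1)e^{-\delta/\xi}$. Summing over $i$ with the triangle inequality delivers $|g(\rho)-g_{\rm CA}(\rho)|\le \|g\|_1\, pm\, e^{-\delta/\xi}$, so choosing $\delta=\xi\log(\|g\|_1 mp/\epsilon)$ closes the estimate at $\epsilon$.

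The main obstacle—really the only non-bookkeeping point—is verifying that the ECP can be applied iteratively in the form used above, namely that the ``accumulated'' support of already-processed clusters is separated from the next cluster by distance exceeding $\delta$, and that the relevant spectral norms all equal $1$. Both follow directly from the cluster-graph definition (any two nodes in distinct connected components have pairwise lattice distance $>\delta$) and from the fact that tensor products of single-qubit Paulis have unit spectral norm; after this, the remainder of the argument is a clean two-level application of Lemmas~\ref{lem_sm: simultaneous inequalities} and~\ref{lem_sm: product formula} followed by the triangle inequality.
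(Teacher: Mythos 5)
Your proposal is correct and follows essentially the same route as the paper's proof: the same telescoping quantities $X_k, y_k$, the ECP applied to consecutive clusters to get $|X_{k+1}-X_k y_{k+1}|\le e^{-\delta/\xi}$, Lemma~\ref{lem_sm: simultaneous inequalities} for the within-factor bound, Lemma~\ref{lem_sm: product formula} for the degree-$p$ product, and the triangle inequality over $i$. The only cosmetic difference is that you carry the explicit $e^{-\delta/\xi}$ factor and substitute $\delta$ at the end, whereas the paper substitutes it up front; the resulting bound $p(m-1)e^{-\delta/\xi}\le \epsilon/\|g\|_1$ per term is identical.
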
    
%\end{tcolorbox}
%%%%%%%%%%%%%%%%%%%%%%%%%%%%%%%%%%%%%%%%%%%%%%%%%%%%%%%%%%%%%%%%%

\begin{proof}

Let $g(\rho)=\sum_i c_i \prod_{j=1}^p \tr[P_{ij}\rho]$ and $g_{\rm CA}(\rho) = \sum_i c_i \prod_{j=1}^p \prod_{k=1}^{d_{ij}} \tr[P_{ijk}\rho]$.
We prove this Lemma based on the ECP and Lemmas~\ref{lem_sm: simultaneous inequalities} and \ref{lem_sm: product formula}.
To this end, we define
\begin{align}
    &&&X_k^{(ij)} = \tr[P_{ij1}\cdots P_{ijk}\rho], &&&&\\
    &&&y_k^{(ij)} = \tr[P_{ijk}\rho], &&Y_k^{(ij)} = \prod_{\ell=1}^k y_\ell^{(ij)}, &&\\
    &&&z_j^{(i)} = X_{d_{ij}}^{(ij)} = \tr[P_{ij}\rho], &&Z_j^{(i)}=\prod_{\ell=1}^j z_\ell^{(i)}, &&\\
    &&&w_j^{(i)} = Y_{d_{ij}}^{(ij)} = \prod_{k=1}^{d_{ij}} \tr[P_{ijk}\rho], &&W_j^{(i)} = \prod_{\ell=1}^j w_\ell^{(i)}, &&
\end{align}
where we have used $P_{ij}=P_{ij1}\cdots P_{ijd_{ij}}$ in the equality $X_{d_{ij}}^{(ij)} = \tr[P_{ij}\rho]$.
As $P_{ij}$ and $P_{ijk}$ are Pauli strings, the absolute values of these quantities are bounded by one, and $X_1^{(ij)}=y_1^{(ij)}$ holds by definition.
These are necessary conditions for Lemmas~\ref{lem_sm: simultaneous inequalities} and \ref{lem_sm: product formula} to be applied.
The polynomials are represented as $g(\rho)=\sum_i c_i Z_p^{(i)}$ and $g_{\rm CA}(\rho)=\sum_i c_i W_p^{(i)}$.

In the cluster approximation, since the distance between clusters is more than $\delta=\xi \log(\|g\|_{1} mp/\epsilon)$, the ECP leads to
\begin{align}
    \left|X_{k+1}^{(ij)} - X_k^{(ij)}y_{k+1}^{(ij)}\right|
    = \left| \tr\left[P_{ij1} \cdots P_{ijk+1}\rho\right] - \tr\left[P_{ij1} \cdots P_{ijk}\rho\right]\tr\left[P_{ijk+1}\rho\right] \right| \leq e^{-\delta/\xi} = \frac{\epsilon}{\|g\|_{1} mp}, \label{eq_sm: lemma3_1}
\end{align}
where we have used $\|P_{ij1} \cdots P_{ijk}\|_S=\|P_{ijk+1}\|_S=1$.
By Lemma~\ref{lem_sm: simultaneous inequalities} and Eq.~\eqref{eq_sm: lemma3_1}, we have
\begin{align}
    \left|z_j^{(i)} - w_j^{(i)}\right| = \left|X_{d_{ij}}^{(ij)} - Y_{d_{ij}}^{(ij)}\right| \leq \frac{(d_{ij}-1)\epsilon}{\|g\|_{1} mp} \leq \frac{\epsilon}{\|g\|_{1} p}, \label{eq_sm: lemma3_2}
\end{align}
where $d_{ij}\leq m$ have been used.
Then, Lemma~\ref{lem_sm: product formula} and Eq.~\eqref{eq_sm: lemma3_2} show that 
\begin{align}
    \left|Z_p^{(i)} - W_p^{(i)}\right| \leq \frac{\epsilon}{\|g\|_{1}} = \frac{\epsilon}{\sum_i |c_i|}. \label{eq_sm: lemma3_3}
\end{align}
Therefore, we obtain
\begin{align}
\left| g(\rho)-g_{\rm CA}(\rho) \right| 
&= \left| \sum_i c_i Z_p^{(i)} - \sum_i c_i W_p^{(i)} \right| \notag \\
&\leq  \sum_i \left|c_i (Z_p^{(i)} - W_p^{(i)})\right| \notag \\
&= \sum_i |c_i| \cdot |Z_p^{(i)} - W_p^{(i)}| \notag \\
&\leq \sum_i |c_i| \cdot \frac{\epsilon}{\sum_i |c_i|} \notag \\
&= \epsilon.
\end{align}
\end{proof}

%%%%%%%%%%%%%%%%%%%%%%%%%%%%%%%%%%%%%%%%%%%%%%%%%%%%%%%%%%%%%%%%%%%%%%%%
%%%%%%%%%%%%%%%%%%%%%%%%%%%%%%%%%%%%%%%%%%%%%%%%%%%%%%%%%%%%%%%%%%%%%%%%
%%%%%%%%%%%%%%%%%%%%%%%%%%%%%%%%%%%%%%%%%%%%%%%%%%%%%%%%%%%%%%%%%%%%%%%%
\subsection{Local-cover number and local-factor count}

Here, we introduce two quantities of the polynomial $g(\rho)$, the local-cover number and local-factor count, which are crucial for evaluating the learning cost scaling of the GLQK and shadow kernel.

We first define the local-cover number $\alpha_g={\rm LCN}(g;\delta,\zeta)$.
Let us consider the $\delta$-cluster approximation of the $m$-body, degree-$p$ polynomial $g(\rho)$: $g_{\rm CA}(\rho)=\sum_i \hc_i \prod_{P\in\mP_i} \tr \left[ P \rho \right]$.
Given a set of local subsystems $\AGL(\zeta)$, assume that the support of any Pauli string in $\mP_i$ is encompassed by some subsystem in $\AGL(\zeta)$ (i.e., $\forall P\in\mP_i$, $\exists A\in\AGL(\zeta)$ s.t. $\supp(P)\subseteq A$).
This assumption is necessarily satisfied if $\zeta\geq m\delta$.
Then, we partition $\mP_i$ as
\begin{align}
\mP_i=\mP_{i,1}\sqcup \mP_{i,2} \sqcup \cdots \sqcup \mP_{i,a_{i}},
\end{align}
such that for $\forall \mP_{i,j}$, there exists $\exists A_{i,j}\in\AGL(\zeta)$ satisfying
\begin{align}
\text{supp}(P) \subseteq A_{i,j} \text{ for all $P\in\mP_{i,j}$}.
\end{align}
Here, $a_i$ represents the number of partitions, and this value is assumed to be minimized among all possible partitions.
Using this partition, we can rewrite the polynomial as
\begin{align}
g_{\rm CA}(\rho) 
= \sum_i \hc_i \prod_{j=1}^{a_i} \prod_{P\in \mP_{i,j}} \tr \left[ P \rho \right]=\sum_i \hc_i \prod_{j=1}^{a_i} \ell_{ij}(\rho), \label{eq_sm: gcluster LCN}
\end{align}
where $\ell_{ij}(\rho)=\prod_{P\in \mP_{i,j}} \tr \left[ P \rho \right]$ is a local quantity on the subsystem $A_{i,j}$.
%We will aim to learn this polynomial using the kernel method.
Here, we define the local-cover number of $g(\rho)$ as
\begin{align}
    \alpha_g = {\rm LCN}(g;\delta,\zeta) \equiv \underset{i}{\text{max}}(a_i),
\end{align}
which is a function of $g$, $\delta$, and $\zeta$.
This quantity describes the locality of $g_{\rm CA}(\rho)$ relative to the scale $\zeta$, satisfying $\alpha_g \leq mp$ because the degree of $g_{\rm CA}$ (i.e., $|\mP_i|$) is bounded by $mp$.
For instance, the expectation values of local observables (e.g., local Hamiltonians, magnetization) and the purity/entanglement entropy of a local subsystem both correspond to $\alpha_g = 1$ if $\zeta$ is sufficiently large to cover each local term.
Meanwhile, $t$-point correlation functions satisfy $\alpha_g = t$ in general.

The local-factor count, roughly corresponding to the degree of $g_{\rm CA}$, is defined as 
\begin{align}
    \beta_g = {\rm LFC}(g;\delta) \equiv \max(p, \min_i (b_i)),
\end{align}
where $b_i=|\mP_i|$ is the degree of the $i$th term in $g_{\rm CA}$.
By definition, the local-factor count satisfies $p\leq \beta_g \leq mp$.
The quantity $\beta_g$ takes a large value ($\sim mp$) if $\supp(P_{ij})$ in $g(\rho)$ is dispersed across spatially distant positions compared to $\delta$, while it takes a small value ($\sim p$) if the support is concentrated locally.
For instance, the expectation values of local observables satisfy $\beta_g=p=1$, while the purity on a local subsystem corresponds to $\beta_g=p=2$, if $\delta$ is sufficiently large.

%%%%%%%%%%%%%%%%%%%%%%%%%%%%%%%%%%%%%%%%%%%%%%%%%%%%%%%%%%%%%%%%%%%%%%%%
%%%%%%%%%%%%%%%%%%%%%%%%%%%%%%%%%%%%%%%%%%%%%%%%%%%%%%%%%%%%%%%%%%%%%%%%
%%%%%%%%%%%%%%%%%%%%%%%%%%%%%%%%%%%%%%%%%%%%%%%%%%%%%%%%%%%%%%%%%%%%%%%%
\section{Classical shadows for machine learning}

This section elaborates on classical shadows and several quantum kernels based on them.
Furthermore, we derive the sample complexity required for estimating the value of $g(\rho)$ from a classical shadow of $\rho$.

\subsection{Classical shadows}

In classical shadow tomography based on random Pauli measurements~\cite{Aaronson2020-uc, Huang2020-ti}, we prepare a quantum state $\rho$ and measure each qubit of $\rho$ on a random Pauli basis, repeating this procedure $T$ times.
Let $W_i^{(t)}=X_i, Y_i, Z_i$ and $o_i^{(t)}=\pm1$ be the measurement basis and the measurement outcome at the $i$th qubit in the $t$th round.
We call $S_T(\rho)=\{(W_i^{(t)},o_i^{(t)})\}_{i=1,t=1}^{n,T}$ a classical shadow of $\rho$.
The original quantum state $\rho$ can be reconstructed from the classical shadow as
\begin{align}
    &\rho \sim \sigma = \frac{1}{T} \sum_{t=1}^T \sigma_1^{(t)}\otimes \cdots \otimes \sigma_n^{(t)},
\end{align}
where $\sigma_i^{(t)}$ is a $2\times2$ matrix acting on the $i$th qubit, defined as
\begin{align}
    &\sigma_i^{(t)} = \frac{1}{2} \left( 3o_i^{(t)} W_i^{(t)} + I \right).
\end{align}
This constructed quantum state $\sigma$ is an unbiased estimator of $\rho$ such that $\mathbb{E}[\sigma]=\rho$.
In the limit of $T\to\infty$, it approaches $\rho$: $\lim_{T\to\infty}\sigma = \rho$. 
Furthermore, the reduced density matrix on a subsystem $\{i_1,\cdots,i_r\}\subseteq [n]$ is estimated from a classical shadow as
\begin{align}
    \rho_{\{i_1,\cdots,i_r\}} \sim \sigma_{\{i_1,\cdots,i_r\}} = \frac{1}{T} \sum_{t=1}^{T} \bigotimes_{\ell=1}^r \sigma_{i_\ell}^{(t)}.
\end{align}
It is known that classical shadows based on random Pauli measurements can estimate the expectation value of an $m$-body observable with additive error 
$\epsilon$ using $T=\mO(4^m/\epsilon^2)$ samples, indicating high efficiency in estimating few-body observables.
Hereafter, let $\mD_\rho$ be the probability distribution of classical shadows for $\rho$.

%%%%%%%%%%%%%%%%%%%%%%%%%%%%%%%%%%%%%%%%%%%%%%%%%%%%%%%%%%%%%%%%%%%%%%%%
%%%%%%%%%%%%%%%%%%%%%%%%%%%%%%%%%%%%%%%%%%%%%%%%%%%%%%%%%%%%%%%%%%%%%%%%
\subsection{Quantum kernels based on classical shadows}

%%%%%%%%%%%%%%%%%%%%%%%%%%%%%%%%%%%%%%%%%%%%%%%%%%%%%%%%%%%%%%%%%%%%%%%%
\subsubsection{Shadow kernel}

The shadow kernel, which has been originally proposed in Ref.~\cite{Huang2022-ip}, is defined for two classical shadows $S_T(\rho)$ and $S_T(\tilde{\rho})$ as follows:
\begin{align}
    k_{\rm SK}(S_T(\rho),S_T(\tilde{\rho})) = \exp\left[ \frac{\tau}{T^2}\sum_{t,t'=1}^T \exp\left( \frac{\gamma}{n}\sum_{i=1}^n \text{tr}\left(\sigma_i^{(t)} \tilde{\sigma}_i^{(t')} \right) \right) \right], \label{eq_sm: def of shadow kernel}
\end{align}
where $\tau$ and $\gamma$ are positive real hyperparameters.
The feature vector is given by
\begin{align}
    \phi_{\rm SK}(S_T(\rho)) 
    = \bigoplus_{d=0}^\infty \sqrt{\frac{\tau^d}{d!}} \left( \bigoplus_{r=0}^{\infty} \sqrt{\frac{1}{r!}\left(\frac{\gamma}{n}\right)^r} \bigoplus_{i_1=1}^n \cdots \bigoplus_{i_r=1}^n {\rm vec}\left(\sigma_{\{i_1,\cdots,i_r\}}\right) \right)^{\otimes d},
\end{align}
where we have defined the vectorized reduced density matrix as $\left({\rm vec}(\sigma_{\{i_1,\cdots,i_r\}})\right)_j=\tr(P_j \sigma)/\sqrt{2^r}$ with the $j$th Pauli string $P_j \in \{I,X,Y,Z\}^{\otimes r}$ on the subsystem $\{i_1,\cdots,i_r\}$ ($j=1,\cdots,4^r$).
This indicates that the feature vector of the shadow kernel includes arbitrarily large reduced density matrices and their arbitrarily high-degree polynomials.
Note that the indices $i_1,\ldots,i_r$ can be duplicated.
See Ref.~\cite{Huang2022-ip} for the derivation of this feature vector.
This kernel is bounded as $|k_{\rm SK}(\cdot,\cdot)| \leq \exp(\tau\exp(5\gamma))$ because $\text{tr}(\sigma_i^{(t)} \tilde{\sigma}_i^{(t')})=5,1/2,-4$.

We organize the feature vector components for later use.
Consider a set of Pauli strings $\mP=\{P_{1},P_{2},\cdots,P_{b}\}$, where $b$ is the number of Pauli strings contained in $\mP$.
Then, the feature vector of the shadow kernel has the following components:
\begin{align}
    \sqrt{\frac{\tau^{b}}{b!}} \left( \prod_{P\in \mc{P}} \sqrt{ \frac{1}{|\supp(P)|!}\left(\frac{\gamma}{2n}\right)^{|\text{supp}(P)|}} \tr\left[P\sigma\right] \right), \label{eq_sm: SK component}
\end{align}
where $|\text{supp}(P)|$ is the Pauli weight of $P$.

%%%%%%%%%%%%%%%%%%%%%%%%%%%%%%%%%%%%%%%%%%%%%%%%%%%%%%%%%%%%%%%%%%%%%%%%
\subsubsection{Truncated shadow kernel}

We define a new quantum kernel called the truncated shadow kernel for classical shadows $S_T(\rho)$ and $S_T(\tilde{\rho})$:
\begin{align}
    k^{\rm TSK}(S_T(\rho),S_T(\tilde{\rho})) = \exp\left[ \frac{\tau}{T^2}\sum_{t,t'=1}^T \prod_{i=1}^n \left(1+\frac{\gamma}{n} \text{tr}\left(\sigma_i^{(t)} \tilde{\sigma}_i^{(t')}\right) \right) \right],
\end{align}
where $\tau$ and $\gamma$ are positive real hyperparameters.
The feature vector of this kernel is given by
\begin{align}
    \phi^{\rm TSK}(S_T(\rho)) 
    = \bigoplus_{d=0}^\infty \sqrt{\frac{\tau^d}{d!}} \left( \bigoplus_{r=0}^{n} \sqrt{\left(\frac{\gamma}{n}\right)^{r}} \bigoplus_{\{i_1,\cdots,i_r\} \subseteq [n]} {\rm vec}\left(\sigma_{\{i_1,\cdots,i_r\}}\right) \right)^{\otimes d}. \label{eq_sm: TSK feature vector}
\end{align}
Indeed, this feature vector reproduces the truncated shadow kernel as
\begin{align}
    &\braket{\phi^{\rm TSK}(S_T(\rho)), \phi^{\rm TSK}(S_T(\tilde{\rho}))} \notag \\
    &= \sum_{d=0}^\infty \frac{\tau^d}{d!} \left( \sum_{r=0}^n \left(\frac{\gamma}{n}\right)^r \sum_{\{i_1,\cdots,i_r\}\subseteq [n]} \tr(\sigma_{\{i_1,\cdots,i_r\}} \tilde{\sigma}_{\{i_1,\cdots,i_r\}}) \right)^d\\
    &= \sum_{d=0}^\infty \frac{\tau^d}{d!} \left( \sum_{r=0}^n \left(\frac{\gamma}{n}\right)^r \sum_{\{i_1,\cdots,i_r\}\subseteq [n]} \frac{1}{T^2}\sum_{t,t'=1}^T \tr\left( (\sigma_{i_1}^{(t)}\otimes \cdots \otimes\sigma_{i_r}^{(t)})  (\tilde{\sigma}_{i_1}^{(t')}\otimes \cdots \otimes \tilde{\sigma}_{i_r}^{(t')})\right) \right)^d\\
    &= \sum_{d=0}^\infty \frac{1}{d!} \left( \frac{\tau}{T^2}\sum_{t,t'=1}^T \sum_{r=0}^n \sum_{\{i_1,\cdots,i_r\}\subseteq [n]} \left(\frac{\gamma}{n}\right)^r \tr\left( \sigma_{i_1}^{(t)} \tilde{\sigma}_{i_1}^{(t')}\right) \cdots \tr\left( \sigma_{i_r}^{(t)} \tilde{\sigma}_{i_r}^{(t')} \right) \right)^d\\
    &= \sum_{d=0}^\infty \frac{1}{d!} \left(  \frac{\tau}{T^2} \sum_{t,t'=1}^T  \left(1 + \frac{\gamma}{n}\tr(\sigma_{1}^{(t)} \tilde{\sigma}_{1}^{(t')}) \right) \cdots \left(1 + \frac{\gamma}{n}\tr(\sigma_{n}^{(t)} \tilde{\sigma}_{n}^{(t')}) \right) \right)^d \\
    &= \exp \left[ \frac{\tau}{T^2} \sum_{t,t'=1}^T  \left(1 + \frac{\gamma}{n}\tr(\sigma_{1}^{(t)} \tilde{\sigma}_{1}^{(t')}) \right) \cdots \left(1 + \frac{\gamma}{n}\tr(\sigma_{n}^{(t)} \tilde{\sigma}_{n}^{(t')})\right) \right] \\
    &=k^{\rm TSK}(\rho,\tilde{\rho}),
\end{align}
where we have used $\braket{x_1\oplus x_2, y_1\oplus y_2}=\braket{x_1,y_1}+\braket{x_2,y_2}$, $\braket{x_1\otimes x_2, y_1\otimes y_2}=\braket{x_1,y_1}\times\braket{x_2,y_2}$, and $\braket{{\rm vec}(\sigma_{\{i_1,\cdots,i_r\}}), {\rm vec}(\tilde{\sigma}_{\{i_1,\cdots,i_r\}})}=\tr(\sigma_{\{i_1,\cdots,i_r\}}\tilde{\sigma}_{\{i_1,\cdots,i_r\}})$.
In common with the shadow kernel, the truncated one has arbitrarily large reduced density matrices and their arbitrarily high-degree polynomials within its feature space.
Meanwhile, unlike the shadow kernel, the truncated one excludes terms where some of $i_1,\cdots,i_r$ are duplicated in Eq.~\eqref{eq_sm: TSK feature vector}.
Eliminating these terms, whose physical meaning is unclear, may improve learning efficiency.
Also, this kernel is bounded as
\begin{align}
    |k^{\rm TSK}(S_T(\rho),S_T(\tilde{\rho}))| 
    &\leq \exp\left[ \frac{\tau}{T^2} \sum_{t,t'=1}^T \prod_{i=1}^n \left|1 + \frac{\gamma}{n}\text{tr}(\sigma_i^{(t)} \tilde{\sigma}_i^{(t')})\right| \right] \\
    &\leq \exp\left[ \tau \left(1 + \frac{5\gamma}{n}\right)^n \right] \\
    &= \exp(\tau\exp(5\gamma)),
\end{align}
where we have used $\text{tr}(\sigma_i^{(t)} \tilde{\sigma}_i^{(t')})=5,1/2,-4$ in the first line and $(1+x/n)^n \leq \exp(x)$ for $n,x\geq 0$ in the second line.

%%%%%%%%%%%%%%%%%%%%%%%%%%%%%%%%%%%%%%%%%%%%%%%%%%%%%%%%%%%%%%%%%%%%%%%%
\subsubsection{Polynomial GLQK with truncated shadow kernel}

We consider the following polynomial GLQK:
\begin{align}
    &k_{\rm GL}(S_T(\rho),S_T(\tilde{\rho})) = \left[ \frac{1}{|\AGL(\zeta)|} \sum_{A\in \AGL(\zeta)} k^{\rm TSK}_{A}(S_{T}(\rho),S_{T}(\tilde{\rho})) \right]^h, \label{eq_sm: GLQK}
\end{align}
where $k^{\rm TSK}_{A}(\cdot,\cdot)$ is the truncated shadow kernel limited to the subsystem $A$.
The feature vector of this kernel is given by
\begin{align}
    &\phi_{\rm GL}(S_T(\rho)) \\
    &= \frac{1}{|\AGL(\zeta)|^{h/2}}\left(\bigoplus_{A\in\AGL(\zeta)} \phi^{\rm TSK}_{A}(S_{T}(\rho)) \right)^{\otimes h} \\
    &= \frac{1}{|\AGL(\zeta)|^{h/2}}\left(\bigoplus_{A\in\AGL(\zeta)} \left[\bigoplus_{d=0}^\infty \sqrt{\frac{\tau^d}{d!}} \left( \bigoplus_{r=0}^{|A|} \sqrt{\left(\frac{\gamma}{|A|}\right)^{r}} \bigoplus_{\{i_1,\cdots,i_r\} \subseteq A} {\rm vec}\left(\sigma_{\{i_1,\cdots,i_r\}}\right) \right)^{\otimes d} \right]\right)^{\otimes h}. \label{eq_sm: feature vector of GLQK-TSK}
\end{align}
Also, this kernel is bounded as
\begin{align}
    |k_{\rm GL}(S_T(\rho),S_T(\tilde{\rho}))| 
    &\leq \left[ \frac{1}{|\AGL(\zeta)|} \sum_{A\in \AGL(\zeta)}  |k^{\rm TSK}_A(S_{T}(\rho),S_{T}(\tilde{\rho}))| \right]^h \\
    &\leq \left[ \frac{1}{|\AGL(\zeta)|} \sum_{A\in \AGL(\zeta)}  \exp(\tau\exp(5\gamma)) \right]^h \\
    &\leq \exp(h \tau\exp(5\gamma)),
\end{align}
where we have used $|k^{\rm TSK}(\cdot,\cdot)|\leq \exp(\tau\exp(5\gamma))$.

We organize the feature vector components.
Consider a set of Pauli strings $\mP_j=\{P_{j,1},P_{j,2},\cdots,P_{j,b_j}\}$ over the index $j=1,2,\cdots,h$, where $b_j$ is the number of Pauli strings contained in $\mP_j$.
Assume that for $\forall \mP_j \in \{\mP_1,\ldots,\mP_h\}$, there exists $\exists A_j \in \AGL(\zeta)$ such that $\text{supp}(P)\subseteq A_j$ for $\forall P\in\mP_j$.
Then, there exist the following components in the feature vector:
\begin{align}
    &\frac{1}{|\AGL(\zeta)|^{h/2}}  \prod_{j=1}^h \sqrt{\frac{\tau^{b_j}}{b_j!}} \left( \prod_{P\in \mc{P}_j} \sqrt{\left(\frac{\gamma}{2|A_j|}\right)^{|\text{supp}(P)|} }\tr\left[P\sigma\right] \right) \\
    &=\frac{1}{n^{h/2}}  \prod_{j=1}^h \sqrt{\frac{\tau^{b_j}}{b_j!}} \left( \prod_{P\in \mc{P}_j} \sqrt{\left(\frac{\gamma}{2\zeta^D}\right)^{|\text{supp}(P)|} }\tr\left[P\sigma\right] \right), \label{eq_sm: GLQK component}
\end{align}
where we have used $|\AGL(\zeta)|=n$ and $|A_j|=\zeta^D$ for $\forall A_j \in\AGL(\zeta)$.

%%%%%%%%%%%%%%%%%%%%%%%%%%%%%%%%%%%%%%%%%%%%%%%%%%%%%%%%%%%%%%%%%%%%%%%%
%%%%%%%%%%%%%%%%%%%%%%%%%%%%%%%%%%%%%%%%%%%%%%%%%%%%%%%%%%%%%%%%%%%%%%%%
\subsection{Estimating polynomial value from classical shadow}

We show that estimating the value of a polynomial $g(\rho)$ from a classical shadow only requires a constant number of measurement shots in $n$.
To this end, we first prove the following lemma, quantifying the amount of quantum resources required for estimating reduced density matrices of $\rho$.
%%%%%%%%%%%%%%%%%%%%%%%%%%%%%%%%%%%%%%%%%%%%%%%%%%%%%%%%%%%%%%%%%%%%%%%%
%\begin{tcolorbox}[enhanced, breakable=true]
\begin{lem} \label{lem_sm: classical shadow RDM}
Consider a set of $V$ subsystems $\mA=\{A_1,A_2,\cdots,A_V\} \subseteq 2^{[n]}$ with $|A_i|\leq m$ for all $i=1,\ldots,V$.
For any $\epsilon\in(0,1)$, let $\sigma$ be a classical shadow for a quantum state $\rho$ with size
\begin{align}
    T=\frac{8}{3} 12^m \left[\log(2^{m+1}V) + \log(1/\delta)\right] \frac{1}{\epsilon^2}.
\end{align}
Then, with probability at least $1-\delta$, 
\begin{align}
    \|\rho_{A_i} - \sigma_{A_i}\|_{\rm tr} \leq \epsilon
\end{align}
holds for all $A_i\in\mA$, where $\rho_{A_i}$ and $\sigma_{A_i}$ are the reduced density matrices of $\rho$ and $\sigma$ on the subsystem $A_i$, respectively.
Here, $\|X\|_{\rm tr}=\tr(\sqrt{X^\dag X})$ represents the trace norm of $X$.

\end{lem}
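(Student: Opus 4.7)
The plan is to reduce the trace-norm approximation of each $\rho_{A_i}$ to a spectral-norm approximation via the dimension inequality, control the spectral-norm error using the matrix Bernstein inequality applied to the i.i.d.\ single-shot shadows, and close with a union bound over the $V$ subsystems in $\mA$. This is a classical-shadow statement whose per-observable version is standard; the key is to package it so that it yields a simultaneous guarantee for all density matrices $\rho_{A_i}$ rather than for a fixed set of observables.

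First I would use that $\rho_{A_i} - \sigma_{A_i}$ acts on a Hilbert space of dimension at most $2^{|A_i|} \le 2^m$, so that $\|\rho_{A_i}-\sigma_{A_i}\|_{\rm tr} \le 2^{|A_i|}\,\|\rho_{A_i}-\sigma_{A_i}\|_\infty \le 2^m\,\|\rho_{A_i}-\sigma_{A_i}\|_\infty$. Hence it suffices to show that $\|\rho_{A_i}-\sigma_{A_i}\|_\infty \le \epsilon' := \epsilon/2^m$ simultaneously for every $A_i \in \mA$ with probability at least $1-\delta$.

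Next I would view each reduced empirical shadow as $\sigma_{A_i} = \frac{1}{T}\sum_{t=1}^T X_t^{(i)}$ with $X_t^{(i)} = \bigotimes_{j\in A_i}\sigma_j^{(t)}$, i.i.d.\ Hermitian matrices of unbiased mean $\rho_{A_i}$. A short single-qubit calculation, using that $\sigma_j^{(t)}=(3 o_j^{(t)} W_j^{(t)}+I)/2$ has spectrum $\{2,-1\}$ and that $\mathbb{E}[(\sigma_j^{(t)})^2] = 2I + \rho_j$ has spectral norm at most $3$, gives the operator bound $\|X_t^{(i)}\|_\infty \le 2^m$ and the variance proxy $\|\mathbb{E}[(X_t^{(i)})^2]\|_\infty \le 3^m$ by tensorization. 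Matrix Bernstein for Hermitian random matrices of dimension at most $2^m$ then yields
\begin{equation*}
\Pr\!\left[\|\sigma_{A_i}-\rho_{A_i}\|_\infty > \epsilon'\right] \;\le\; 2^{m+1}\,\exp\!\left(-\frac{T\,\epsilon'^2/2}{3^m + c\cdot 2^m \epsilon'}\right)
\end{equation*}
for an absolute constant $c>0$.

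Finally, I would apply a union bound over the $V$ subsystems of $\mA$, which replaces the prefactor by $2^{m+1}V$, and then substitute $\epsilon'=\epsilon/2^m$; the factor $4^m$ from $1/\epsilon'^2$ combines with the variance $3^m$ to give the $12^m$ scaling, reproducing $T = \Theta\!\left(12^m[\log(2^{m+1}V)+\log(1/\delta)]/\epsilon^2\right)$. The main obstacle is purely technical: nailing down the precise prefactor $\tfrac{8}{3}$ requires a sharp form of matrix Bernstein and a careful regime analysis to ensure the second-order term $c\cdot 2^m\epsilon'$ is absorbed into the variance term $3^m$ under the relevant range of $\epsilon$; the overall strategy is otherwise entirely standard.
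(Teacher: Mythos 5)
Your proposal follows essentially the same route as the paper's proof: matrix Bernstein applied to the i.i.d.\ tensor-product shadows $X_t=\bigotimes_{j\in A_i}\sigma_j^{(t)}$ with dimension $2^m$, operator bound $2^m$, and variance proxy $3^m$, followed by the conversion $\|\cdot\|_{\rm tr}\leq 2^m\|\cdot\|_{S}$ and a union bound over the $V$ subsystems, yielding the $12^m$ scaling. The only piece you flag as a technical obstacle---absorbing the linear term into the variance term to get the constant $8/3$---is handled in the paper simply by noting that $(2^m+1)\epsilon/3\leq 3^m/3$ for $\epsilon\in(0,1)$, so the denominator is at most $(4/3)\cdot 3^m$; otherwise the two arguments coincide.
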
    
%\end{tcolorbox}
%%%%%%%%%%%%%%%%%%%%%%%%%%%%%%%%%%%%%%%%%%%%%%%%%%%%%%%%%%%%%%%%%%%%%%%%

\begin{proof}
Most of this proof follows the proof of Lemma~1 in Ref.~\cite{Huang2022-ip}, which is based on the matrix Bernstein inequality~\cite{Tropp2012-fo} that provides tail bounds in terms of spectral norm deviation. 
Let $X_1,\cdots,X_T$ be {\it iid} random $D$-dimensional matrices that obey $\|X_t - \mathbb{E}(X_t)\|_{S} \leq R$, where $\|X\|_{S}$ is the spectral norm of $X$.
Then, for $\epsilon>0$, the following inequality holds by the matrix Bernstein inequality:
\begin{align}
   \Pr \left[ \left\| \mathbb{E}(X_t) - \frac{1}{T} \sum_{t=1}^{T} X_t \right\|_{S} \geq \epsilon \right] \leq 2D \exp \left( -\frac{T \epsilon^2 / 2}{s^2 + R \epsilon / 3} \right),
\end{align}
where $s^2 = \|\mathbb{E}(X_t^2)\|_{S}$.

We apply this inequality to our problem.
For $A_i\in\mA$, set $X_t=\bigotimes_{i\in A_i} \sigma^{(t)}_i$ such that $\sum_t X_t/T=\sigma_{A_i}$ and $\mathbb{E}(X_t)=\rho_{A_i}$.
Then, we have $D\leq 2^m$ and $\|X_t - \mathbb{E}(X_t)\|_{S} \leq \|X_t\|_{S}+\|\mathbb{E}(X_t)\|_{S}\leq 2^m+1 \equiv R$.
Also, $s^2\leq3^m$ is known to hold (see Ref.~\cite{Huang2022-ip} for details).
For this random variable, the matrix Bernstein inequality leads to
\begin{align}
   \Pr \left[ \left\| \rho_{A_i} - \sigma_{A_i} \right\|_{S} \geq \epsilon \right] 
   \leq 2^{m+1} \exp \left( -\frac{T \epsilon^2 / 2}{3^m + (2^m+1) \epsilon / 3} \right)
   \leq 2^{m+1} \exp\left(-\frac{3T\epsilon^2}{8\times 3^m}\right)
\end{align}
for $\epsilon\in(0,1)$.  
Using the relationship between the trace- and spectral-norms $\|X\|_{\rm tr}\leq D\|X\|_{S}$, we have the tail bound for the trace norm deviation:
\begin{align}
   \Pr \left[ \left\| \rho_{A_i} - \sigma_{A_i} \right\|_{\rm tr} \geq \epsilon \right] 
   \leq \Pr \left[ 2^m\left\| \rho_{A_i} - \sigma_{A_i} \right\|_{S} \geq \epsilon \right]
   \leq 2^{m+1} \exp\left(-\frac{3T\epsilon^2}{8\times 12^m}\right).
\end{align}
Based on the union bound, the trace norm deviations are bounded simultaneously for all subsystems in $\mA$:
\begin{align}
   \Pr \left[ \max_{A_i\in\mA} \left\| \rho_{A_i} - \sigma_{A_i} \right\|_{\rm tr} \geq \epsilon \right] 
   \leq \sum_{A_i\in\mA} \Pr \left[ \left\| \rho_{A_i} - \sigma_{A_i} \right\|_{\rm tr} \geq \epsilon \right]
   \leq  2^{m+1} V \exp\left(-\frac{3T\epsilon^2}{8\times 12^m}\right).
\end{align}
Therefore, setting $T=(8/3)12^m [\log(2^{m+1}V) + \log(1/\delta)]/\epsilon^2$ ensures that the failure probability does not exceed $\delta$.
\end{proof}

Based on this, we prove the following lemma to evaluate the number of measurement shots required for accurately estimating the value of a polynomial $g(\rho)$ from a classical shadow.
%%%%%%%%%%%%%%%%%%%%%%%%%%%%%%%%%%%%%%%%%%%%%%%%%%%%%%%%%%%%%%%%%%%%%%%%
%\begin{tcolorbox}[enhanced, breakable=true]
\begin{lem} \label{lem_sm: shadow polynomial}
Consider an $m$-body, degree-$p$ polynomial $g(\rho)$.
For any $\epsilon \in (0,\|g\|_1)$, a classical shadow $\sigma$ for $\rho$ of size
\begin{align}
    T = \frac{64}{3 \epsilon^2} \|g\|_{1}^2 12^m p^2 \log\left[\frac{\|g\|_{1}^2 2^{m+3} p (3^{mp}+1)^2}{\epsilon^2}\right] \label{eq_sm: shadow estimate lemma}
\end{align}
suffices to estimate $g(\rho)$ with error $\epsilon$:
\begin{align}
    \underset{\sigma\sim\mD_\rho}{\mathbb{E}} \left[ \left|g(\rho) - g(\sigma) \right|^2 \right] \leq \epsilon^2,
\end{align}
where $\mD_\rho$ is the probability distribution of classical shadows for $\rho$.
\end{lem}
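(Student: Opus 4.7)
The plan is to reduce the global mean-squared error to a term-by-term estimate, so that the union bound inside Lemma~\ref{lem_sm: classical shadow RDM} needs to run only over the $p$ Pauli supports inside a single term of $g$, rather than over all distinct Pauli supports appearing in $g$. This is precisely the point that allows the final shadow size $T$ to be independent of the number of qubits $n$ and of the number of terms of $g$. Writing $\Delta_i := \prod_{j=1}^p \tr[P_{ij}\rho] - \prod_{j=1}^p \tr[P_{ij}\sigma]$, I would first apply Cauchy--Schwarz in the form $(\sum_i c_i \Delta_i)^2 \leq (\sum_i |c_i|)(\sum_i |c_i|\Delta_i^2)$ to obtain
\begin{align}
\mathbb{E}_{\sigma\sim\mD_\rho}\!\left[(g(\rho)-g(\sigma))^2\right] \leq \|g\|_1^2 \max_i \mathbb{E}[\Delta_i^2],
\end{align}
which reduces the task to controlling $\mathbb{E}[\Delta_i^2]$ for a single index $i$.

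For each fixed $i$, I would introduce the good event $E_i$ defined by $\|\rho_{A_{ij}}-\sigma_{A_{ij}}\|_{\rm tr}\leq \epsilon'$ for all $j=1,\ldots,p$, where $A_{ij}=\supp(P_{ij})$ has size at most $m$. Applying Lemma~\ref{lem_sm: classical shadow RDM} to the family $\mathcal{A}=\{A_{i1},\ldots,A_{ip}\}$ (so $V=p$) gives $\Pr[\bar E_i]\leq \delta$ as long as $T\geq (8/3)\cdot 12^m[\log(2^{m+1}p)+\log(1/\delta)]/\epsilon'^2$. On $E_i$, the Schatten trace inequality $|\tr[P_{ij}(\sigma_{A_{ij}}-\rho_{A_{ij}})]|\leq \|P_{ij}\|_S\|\sigma_{A_{ij}}-\rho_{A_{ij}}\|_{\rm tr}\leq \epsilon'$ implies $|\tr[P_{ij}\sigma]|\leq 1+\epsilon'$; the telescoping identity $\prod_j z_j - \prod_j w_j = \sum_k (\prod_{j<k}z_j)(z_k-w_k)(\prod_{j>k}w_j)$ together with $|\tr[P_{ij}\rho]|\leq 1$ then yields $|\Delta_i|\leq p(1+\epsilon')^{p-1}\epsilon'\leq e p \epsilon'$ whenever $\epsilon'\leq 1/(p-1)$. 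On $\bar E_i$ I would fall back on the deterministic classical-shadow bound $|\tr[P\sigma]|\leq 3^{|\supp(P)|}\leq 3^m$ (which follows from $\tr[P\sigma^{(t)}]\in\{0,\pm 3^{|\supp(P)|}\}$), giving the worst-case estimate $|\Delta_i|\leq 1+3^{mp}$.

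Combining the two regimes produces $\mathbb{E}[\Delta_i^2]\leq (ep\epsilon')^2+\delta(3^{mp}+1)^2$. Splitting the target budget $\epsilon^2/\|g\|_1^2$ between the two contributions fixes $\epsilon'=\Theta(\epsilon/(\|g\|_1 p))$ and $\delta=\Theta(\epsilon^2/(\|g\|_1^2(3^{mp}+1)^2))$; substituting these into the $T$ requirement above gives a shadow size of the form $T=\mOt(\|g\|_1^2\cdot 12^m\cdot p^2/\epsilon^2)$ whose logarithm carries the argument $\|g\|_1^2\cdot 2^{\Theta(m)}\cdot p\cdot (3^{mp}+1)^2/\epsilon^2$ as stated. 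Tuning how one splits the error budget, together with the constant in the telescoping bound, should land exactly on the coefficients $64/3$ and $2^{m+3}$ that appear in the lemma.

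The main obstacle I foresee is keeping the good-event estimate on $|\Delta_i|$ proportional to $p\epsilon'$ rather than to $(3^m)^{p-1}\epsilon'$. Without the refinement $|\tr[P_{ij}\sigma]|\leq 1+\epsilon'$ available only on $E_i$, the telescoping identity would inherit the a priori shadow bound $3^m$ per factor and produce a prohibitive prefactor $9^{m(p-1)}$ in the main term. Carefully separating the crude deterministic bound (used only on the rare event $\bar E_i$, where its exponential cost is absorbed inside a logarithm) from the refined conditional bound (used on $E_i$, where it gives the desired $p\epsilon'$ scaling) is the essential technical point, and is also what lets the union bound in Lemma~\ref{lem_sm: classical shadow RDM} run over only $p$ subsystems so that the final $T$ carries no $n$-dependence at all.
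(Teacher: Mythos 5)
Your proposal is correct and follows essentially the same route as the paper: reduce the mean-squared error to a per-term bound weighted by $\|g\|_1^2$, invoke Lemma~\ref{lem_sm: classical shadow RDM} with a union bound over only the $\mO(p)$ Pauli supports appearing in a single term (the paper uses $V=2p$ for pairs $\mP_i\cup\mP_j$, you use $V=p$ per term — immaterial inside the logarithm), telescope on the good event to get an $\mO(p\epsilon')$ bound, fall back on the crude $1+3^{mp}$ bound on the bad event, and tune $\epsilon'$ and $\delta$. The paper's telescoping is phrased in terms of trace norms of tensor products of reduced density matrices rather than the scalar identity you use, but the resulting bounds $(1+\eta)^p-1$ versus $p\epsilon'(1+\epsilon')^{p-1}$ are interchangeable.
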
    
%\end{tcolorbox}
%%%%%%%%%%%%%%%%%%%%%%%%%%%%%%%%%%%%%%%%%%%%%%%%%%%%%%%%%%%%%%%%%%%%%%%%

\begin{proof}
Let $g(\rho)=\sum_i c_i \prod_{P\in\mP_i} \tr[P\rho]$ with a set of Pauli strings $\mP_i$, where $|\mP_i|\leq p$ and $|\supp(P)|\leq m$ for all $P\in\mP_i$.
The squared error is bounded as
\begin{align}
\left|g(\rho) - g(\sigma) \right|^2
&=\left| \sum_i c_i \left(\prod_{P\in\mP_i} \tr(P\rho) - \prod_{P\in\mP_i} \tr(P\sigma)\right) \right|^2 \\
&\leq \left( \sum_i |c_i| \left|\prod_{P\in\mP_i} \tr(P\rho) - \prod_{P\in\mP_i} \tr(P\sigma)\right| \right)^2 \\
&= \sum_{i,j} |c_i|  |c_j|  \left|\prod_{P\in\mP_i} \tr(P\rho) - \prod_{P\in\mP_i} \tr(P\sigma)\right|  \left|\prod_{P\in\mP_j} \tr(P\rho) - \prod_{P\in\mP_j} \tr(P\sigma)\right| \\
&\equiv \sum_{i,j} |c_i|  |c_j| G_i(\sigma)  G_j(\sigma)
\end{align}
where we have defined $G_i(\sigma)= |\prod_{P\in\mP_i} \tr(P\rho) - \prod_{P\in\mP_i} \tr(P\sigma)|$.
Thus, the following holds:
\begin{align}
\underset{\sigma\sim\mD_\rho}{\mathbb{E}} \left[ \left|g(\rho) - g(\sigma) \right|^2 \right] 
\leq \sum_{i,j} |c_i| |c_j| \underset{\sigma\sim\mD_\rho}{\mathbb{E}} \left[ G_i(\sigma)G_j(\sigma) \right]. \label{eq_sm: gr-gs inequality}
\end{align}
Below, we evaluate $\underset{}{\mathbb{E}} \left[ G_i(\sigma)G_j(\sigma) \right]$ based on Lemma~\ref{lem_sm: classical shadow RDM}.

Let $\mA_{ij}=\{\supp(P) | P\in \mP_i\cup \mP_j\}$ be the set of subsystems associated with Pauli strings in $\mP_i$ and $\mP_j$.
Also, let $V=2p\geq\max_{i,j}(|\mA_{ij}|)$.
According to Lemma~\ref{lem_sm: classical shadow RDM}, setting $T=(8/3)12^m [\log(2^{m+1}V)+\log(1/\delta)]/\eta^2$ with $i$ and $j$ fixed ensures that 
\begin{align}
    \|\rho_A - \sigma_A\|_{\rm tr} \leq \eta \label{eq_sm: trace distance of RDM}
\end{align}
for all subsystems $A\in\mA_{ij}$ with probability at least $1-\delta$.

We upper bound $G_i(\sigma)$ under the assumption that Eq.~\eqref{eq_sm: trace distance of RDM} holds.
The following calculations are partially based on the proof of Lemma~11 in Ref.~\cite{Huang2022-ip}.
Let $\mP_i=\{P_1,P_2,\cdots,P_{b_i}\}$ and $A_k=\supp(P_k)$, where $b_i$ is the number of Pauli strings included in $\mP_i$.
Then, the Matrix Hoelder inequality ($|\tr(XY)|\leq \|X\|_S \|Y\|_{\rm tr}$) ensures
\begin{align}
G_i(\sigma)
&= \left|\tr \left(\left( P_1\otimes \cdots \otimes P_{b_i}  \right)(\rho_{A_1}\otimes\cdots\otimes \rho_{A_{b_i}} - \sigma_{A_1}\otimes\cdots\otimes \sigma_{A_{b_i}}) \right) \right| \\
&\leq \|\rho_{A_1}\otimes\cdots\otimes \rho_{A_{b_i}} - \sigma_{A_1}\otimes\cdots\otimes \sigma_{A_{b_i}} \|_{\rm tr},
\end{align}
where we have used $\|P_1\otimes \cdots \otimes P_{b_i}\|_S=1$.
Using a telescoping trick $X_1\otimes X_2 - Y_1\otimes Y_2 = (X_1-Y_1)\otimes X_2 + Y_1\otimes (X_2-Y_2)$, a reverse triangle inequality $\|\sigma_{A_1}\|_{\rm tr} - \|\rho_{A_1}\|_{\rm tr} \leq \|\sigma_{A_1} - \rho_{A_1}\|_{\rm tr}$, and $\|\rho_{A_i}\|_{\rm tr}=1$, we have
\begin{align}
&\|\rho_{A_1}\otimes\cdots\otimes \rho_{A_{b_i}} - \sigma_{A_1}\otimes\cdots\otimes \sigma_{A_{b_i}}\|_{\rm tr} \\
&= \|(\rho_{A_1}-\sigma_{A_1})\otimes \rho_{A_2}\otimes \cdots \otimes \rho_{A_{b_i}} + \sigma_{A_1} \otimes(\rho_{A_2} \otimes \cdots\otimes \rho_{A_{b_i}} - \sigma_{A_2} \otimes \cdots\otimes \sigma_{A_{b_i}}) \|_{\rm tr} \\
&\leq \|\rho_{A_1}-\sigma_{A_1}\|_{\rm tr} \|\rho_{A_2}\|_{\rm tr} \cdots \|\rho_{A_{b_i}}\|_{\rm tr} + \|\sigma_{A_1}\|_{\rm tr} \|\rho_{A_2} \otimes \cdots\otimes \rho_{A_{b_i}} - \sigma_{A_2} \otimes \cdots\otimes \sigma_{A_{b_i}} \|_{\rm tr} \\
&\leq \|\rho_{A_1}-\sigma_{A_1}\|_{\rm tr} + (1 + \|\rho_{A_1}-\sigma_{A_1}\|_{\rm tr}) \|\rho_{A_2} \otimes \cdots\otimes \rho_{A_{b_i}} - \sigma_{A_2} \otimes \cdots\otimes \sigma_{A_{b_i}} \|_{\rm tr} \\
&\leq \eta + (1+\eta)  \|\rho_{A_2} \otimes \cdots\otimes \rho_{A_{b_i}} - \sigma_{A_2} \otimes \cdots\otimes \sigma_{A_{b_i}} \|_{\rm tr}.
\end{align}
Repeating this procedure results in
\begin{align}
\|\rho_{A_1}\otimes\cdots\otimes \rho_{A_{b_i}} - \sigma_{A_1}\otimes\cdots\otimes \sigma_{A_{b_i}}\|_{\rm tr} 
&\leq \eta \sum_{k=0}^{b_i-1} (1+\eta)^k = (1+\eta)^{b_i} - 1
\end{align}
and thus
\begin{align}
G_i(\sigma)\leq (1+\eta)^{b_i} - 1.
\end{align}
The same evaluation is also possible for $G_j(\sigma)$.
Since Eq.~\eqref{eq_sm: trace distance of RDM} holds for all subsystems in $\mA_{ij}$ with probability at least $1-\delta$, the following two inequalities hold at the same time with probability at least $1-\delta$:
\begin{align}
&G_i(\sigma)
\leq (1+\eta)^{b_i} - 1, \\
&G_j(\sigma)
\leq (1+\eta)^{b_j} - 1.
\end{align}
Using these, we can upper bound $\mathbb{E}_{\sigma\sim\mD_\rho} \left[ G_i(\sigma)G_j(\sigma) \right]$ as
\begin{align}
\underset{\sigma\sim\mD_\rho}{\mathbb{E}} \left[ G_i(\sigma)G_j(\sigma) \right]
&\leq \left[(1+\eta)^p-1\right]^2\cdot(1-\delta) + \max_{\sigma}[G_i(\sigma)G_j(\sigma)]\cdot \delta,
\end{align}
where we have used $b_i \leq p$ for all $i$.
Because $G_i(\sigma)\leq |\prod_{P\in\mP_i} \tr(P\rho)| + |\prod_{P\in\mP_i} \tr(P\sigma)|\leq 1+3^{mp}$, we have
\begin{align}
\underset{\sigma\sim\mD_\rho}{\mathbb{E}} \left[ G_i(\sigma)G_j(\sigma) \right]
&\leq \left[(1+\eta)^p-1\right]^2\cdot(1-\delta) + (3^{mp}+1)^2\cdot \delta. \label{eq_sm: GiGj inequality}
\end{align}
Note that this inequality holds for all pairs of $i$ and $j$.

Substituting Eq.~\eqref{eq_sm: GiGj inequality} to Eq.~\eqref{eq_sm: gr-gs inequality}, we have
\begin{align}
\underset{\sigma\sim\mD_\rho}{\mathbb{E}} \left[ \left|g(\rho) - g(\sigma) \right|^2 \right] 
&\leq  \left( \left[(1+\eta)^p-1\right]^2\cdot(1-\delta) + (3^{mp}+1)^2\cdot \delta \right) \sum_{ij} |c_i| |c_j| \\
&= \left( \left[(1+\eta)^p-1\right]^2 + (3^{mp}+1)^2\cdot \delta \right) \|g\|_1^2.
\end{align}
By setting $\eta=(1/p)\sqrt{\epsilon^2/8\|g\|_1^2}$ and $\delta=\epsilon^2/2(3^{mp}+1)^2\|g\|_1^2$, we obtain
\begin{align}
\underset{\sigma\sim\mD_\rho}{\mathbb{E}} \left[ \left|g(\rho) - g(\sigma) \right|^2 \right] 
&\leq \left( \left[\left(1+\frac{1}{p}\sqrt{\frac{\epsilon^2}{8\|g\|_1^2}}\right)^p-1 \right]^2 + (3^{mp}+1)^2\cdot \frac{\epsilon^2}{2(3^{mp}+1)^2\|g\|_1^2} \right) \|g\|_1^2 \\
&\leq \left[\exp\left(\sqrt{\frac{\epsilon^2}{8\|g\|_1^2}}\right)-1 \right]^2 \|g\|_1^2 + \frac{\epsilon^2}{2} \\
&\leq \frac{\epsilon^2}{2} + \frac{\epsilon^2}{2} \\
&= \epsilon^2
\end{align}
where we have used $(1+x/n)^n \leq \exp(x)$ for $\forall n,x\geq 0$ and $\exp(x)\leq 2x+1$ for $\forall x\in[0,1]$.
The Lemma follows from substituting this specific choice of $\eta$ and $\delta$ into $T=(8/3)12^m [\log(2^{m+1}V)+\log(1/\delta)]/\eta^2$:
\begin{align}
    T
    &=\frac{8}{3}12^m \left[ \log(2^{m+1}V)+\log(1/\delta) \right]/\eta^2 \\
    &= \frac{8}{3}12^m \left[\log(2^{m+2}p)+\log(2(3^{mp}+1)^2 \|g\|_1^2/\epsilon^2) \right]/(\epsilon^2/8\|g_1\|_1^2 p^2) \\
    &= \frac{64}{3\epsilon^2} \|g\|_1^2 12^m p^2 \log\left[\frac{\|g\|_1^2 2^{m+3} p (3^{mp}+1)^2}{\epsilon^2}\right]
\end{align}
where we have used $V=2p$.
\end{proof}

We emphasize that the number of measurement shots required for estimating $g(\rho)$ with error $\epsilon$, denoted as $T(g;\epsilon)\equiv(64/3 \epsilon^2) \|g\|_{1}^2 12^m p^2 \log\left[\|g\|_{1}^2 2^{m+3} p (3^{mp}+1)^2/\epsilon^2 \right]$, is independent of the number of qubits $n$, provided that $m, p$ and $\|g\|_1$ are fixed.
This lemma can also be applied to the cluster approximation $g_{\rm CA}(\rho)$, which is generally an $m$-body, degree-$mp$ polynomial.
The lemma claims that a classical shadow of size
\begin{align}
T_{\rm CA}(g;\epsilon) \equiv \frac{64}{3 \epsilon^2} \|g\|_{1}^2 12^m (mp)^2 \log\left[\frac{\|g\|_{1}^2 2^{m+3}  mp (3^{m^2p}+1)^2}{\epsilon^2}\right] \geq T(g_{\rm CA};\epsilon) \label{eq_sm: T_cluster}
\end{align}
suffices to estimate $g_{\rm CA}(\rho)$ with accuracy $\epsilon$, where we have replaced $p$ with $mp$ in Eq.~\eqref{eq_sm: shadow estimate lemma} and used $\|g\|_1 \geq \|g_{\rm CA}\|_1$ [Eq.~\eqref{eq_sm: cluster norm}].

%%%%%%%%%%%%%%%%%%%%%%%%%%%%%%%%%%%%%%%%%%%%%%%%%%%%%%%%%%%%%%%%%%%%%%%%
%%%%%%%%%%%%%%%%%%%%%%%%%%%%%%%%%%%%%%%%%%%%%%%%%%%%%%%%%%%%%%%%%%%%%%%%
%%%%%%%%%%%%%%%%%%%%%%%%%%%%%%%%%%%%%%%%%%%%%%%%%%%%%%%%%%%%%%%%%%%%%%%%
\section{Theory of kernel ridge regression}

In this section, we review the theory of kernel ridge regression and introduce an established theorem about generalization error, which is central for proving our main theorems.

%%%%%%%%%%%%%%%%%%%%%%%%%%%%%%%%%%%%%%%%%%%%%%%%%%%%%%%%%%%%%%%%%%%%%%%%
%%%%%%%%%%%%%%%%%%%%%%%%%%%%%%%%%%%%%%%%%%%%%%%%%%%%%%%%%%%%%%%%%%%%%%%%
\subsection{Ridge regression}

Regression tasks aim to learn an unknown relationship between an input $\bx\in\mathbb{R}^d$ and an output $y\in\mathbb{R}$ over a probability distribution $(\bx,y)\sim\mD$.
In a supervised learning setting, we are given a training dataset $Z = \{\bz_i\}_{i=1}^N$ of $N$ samples drawn independently from the distribution $\mD$, where each sample is defined as $\bz_i = (\bx_i, y_i)$.
The linear regression models the input-output relationship with
\begin{align}
    y \sim h_{\bw}(\bx) = \braket{\bw, \bx},
\end{align}
where $\braket{\bw, \bx}=\bw^{\rm T}\cdot \bx$ denotes the inner product of an input $\bx$ and a trainable dual vector $\bw\in\mathbb{R}^d$.
Here, we assume that the norm of $\bw$ is bounded as $\|\bw\|\leq B$.

The goal is to minimize the following expected loss with respect to $\bw$:
\begin{align}
    L_{\mD}(\bw) = \underset{\bz \sim \mD}{\mathbb{E}} \left[\ell(\bw,\bz)\right], \label{eq_sm: RR expected loss}
\end{align}
where $\ell(\bw,\bz)=\left(y-\braket{\bw, \bx}\right)^2/2$.
As the data distribution $\mD$ is unknown in general, we approximate the expected loss by the empirical one calculated from the dataset $Z$,
\begin{align}
    L_Z(\bw) = \frac{1}{N} \sum_{i=1}^N \ell(\bw,\bz_i), \label{eq_sm: RR emprical loss}
\end{align}
and minimize it to find an optimal $\bw$.
In practice, to avoid overfitting the dataset, the ridge regression minimizes the regularized loss function instead.
The optimal $\bw$ for the dataset $Z$ is defined as 
\begin{align}
    \bw^\ast_Z = \underset{\bw}{\rm argmin}\left( L_Z(\bw) + \lambda \|\bw\|^2 \right) \notag \\
    \text{subject to $\|\bw\|^2\leq B^2$}, \label{eq_sm: RR optimization}
\end{align}
where $\lambda \|\bw\|^2$ is the regularization term.
This optimization problem can be efficiently solved on classical computers due to its convexity.

The generalization error of the linear model obtained by solving this optimization problem can be suppressed by increasing the number of training samples $N$.
This is quantified by the statistical learning theory through the following theorem:
%%%%%%%%%%%%%%%%%%%%%%%%%%%%%%%%%%%%%%%%%%%%%%%%%%%%%%%%%%%%%%%%%%%%%%%%
%\begin{tcolorbox}[enhanced, breakable=true]
\begin{thm}[Theorem 13.1 in Ref.~\cite{Shalev-Shwartz2014-rj}] \label{thm_sm: mother theorem 1}
Let $\mD$ be a distribution over $\mX\times \mY$, where $\mX=\{\bx\in \mathbb{R}^d : \|\bx\|\leq 1 \}$ and $\mY=[-1,1]$.
Let $\mH=\{\bw\in \mathbb{R}^d:\|\bw\|\leq B\}$.
For any $\epsilon\in (0,1)$, let $N\geq 150 B^2/\epsilon^2$.
Then, applying the ridge regression algorithm with parameter $\lambda=\epsilon/3B^2$ satisfies
\begin{align}
    \underset{Z\sim \mD^N}{\mathbb{E}}[L_\mD(\bw^\ast_Z)] \leq \underset{\bw\in\mH}{\rm min} L_\mD(\bw) + \epsilon.
\end{align}
\end{thm}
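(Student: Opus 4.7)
The plan is to give the classical algorithmic-stability proof for regularized loss minimization (RLM), adapted to the squared loss. The ridge penalty $\lambda\|\bw\|^2$ makes the objective $(2\lambda)$-strongly convex, which forces the minimizer to be insensitive to the replacement of a single training example. Combined with a standard bias-variance-style oracle inequality, this yields the claimed generalization bound.

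First, I would bound the Lipschitz constant of the per-sample loss $\ell(\bw,\bz)=\frac{1}{2}(y-\braket{\bw,\bx})^2$ over the relevant domain. Since $|y|\leq 1$, $\|\bx\|\leq 1$, and $\|\bw\|\leq B$, the gradient $\nabla_{\bw}\ell=(\braket{\bw,\bx}-y)\bx$ has norm at most $\rho:=1+B$, so $\ell$ is $\rho$-Lipschitz in $\bw$ on the feasible set. Second, I would establish on-average-replace-one-stability of the algorithm $A(Z)=\bw^\ast_Z$: if $Z^{(i)}$ denotes $Z$ with its $i$-th example replaced by an i.i.d.\ copy $\bz'$, then
\begin{align}
\underset{Z,\bz',i}{\mathbb{E}}\bigl[\ell(A(Z^{(i)}),\bz_i)-\ell(A(Z),\bz_i)\bigr]\leq \frac{2\rho^2}{\lambda N}.
\end{align}
The derivation combines $(2\lambda)$-strong convexity of $L_Z+\lambda\|\cdot\|^2$ at its minimizer with $\rho$-Lipschitzness of the single-sample perturbation to show $\|A(Z)-A(Z^{(i)})\|\leq 2\rho/(\lambda N)$, and then uses Lipschitzness once more to translate this parameter perturbation into a loss perturbation.

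Third, I would invoke the standard identity
\begin{align}
\underset{Z}{\mathbb{E}}[L_\mD(A(Z))-L_Z(A(Z))]=\underset{Z,\bz',i}{\mathbb{E}}\bigl[\ell(A(Z^{(i)}),\bz_i)-\ell(A(Z),\bz_i)\bigr],
\end{align}
which follows from the exchangeability of the i.i.d.\ samples. Combining this with the defining inequality $L_Z(A(Z))+\lambda\|A(Z)\|^2\leq L_Z(\bw)+\lambda\|\bw\|^2$ for any competitor $\bw\in\mH$, taking expectations, and using $\|\bw\|^2\leq B^2$, yields the oracle inequality
\begin{align}
\underset{Z}{\mathbb{E}}[L_\mD(\bw^\ast_Z)]\leq \min_{\bw\in\mH}L_\mD(\bw)+\lambda B^2+\frac{2\rho^2}{\lambda N}.
\end{align}
Finally, plugging in $\lambda=\epsilon/(3B^2)$ makes the bias term $\lambda B^2$ equal to $\epsilon/3$, and $N\geq 150 B^2/\epsilon^2$ makes the stability term $2\rho^2/(\lambda N)$ at most $2\epsilon/3$ (after bounding $\rho^2=(1+B)^2$ by an absolute constant times $B^2$ in the regime $B\geq 1$, with the $B<1$ case following by an even simpler estimate). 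Summing gives the claimed excess-risk bound of $\epsilon$.

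The main obstacle is the stability estimate $\|A(Z)-A(Z^{(i)})\|\leq 2\rho/(\lambda N)$. It relies on viewing the regularized objective on $Z$ as the objective on $Z^{(i)}$ plus a single $\rho$-Lipschitz perturbation of weight $1/N$, and then exploiting strong convexity at the minimizer: any $\bw$ away from $A(Z)$ raises the objective quadratically in $\|\bw-A(Z)\|$, whereas the perturbation can lower it only linearly, which pins down how far the minimizer can move. All remaining steps---the oracle inequality, the expectation identity, and the final parameter substitution---are essentially mechanical once stability is in hand.
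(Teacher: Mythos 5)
The paper does not actually prove this statement---it imports it verbatim as Theorem~13.1 of the cited textbook---so the comparison must be with the textbook's proof. Your overall architecture (on-average-replace-one stability of regularized loss minimization, the exchangeability identity, and the oracle inequality $\mathbb{E}[L_Z(\bw^\ast_Z)]\le L_\mD(\bw)+\lambda\|\bw\|^2$) is the right one, and your strong-convexity mechanism for bounding $\|A(Z)-A(Z^{(i)})\|$ is sound. The gap is in the specific stability lemma you invoke: your stability term is $2\rho^2/(\lambda N)$ with $\rho=1+B$ the Lipschitz constant of the squared loss over the ball $\|\bw\|\le B$. Substituting $\lambda=\epsilon/(3B^2)$ and $N=150B^2/\epsilon^2$ gives $2\rho^2/(\lambda N)=(1+B)^2\epsilon/25$, which exceeds $2\epsilon/3$ once $B>3.1$ and exceeds the entire error budget $\epsilon$ once $B>4$. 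Bounding $(1+B)^2\le 4B^2$ does not rescue the step: the stability term becomes $24B^4/(\epsilon N)$, and the $B^4$ does not cancel against the $B^2$ in $N$. The Lipschitz route therefore proves the theorem only with $N=\Theta(B^4/\epsilon^2)$, not the claimed $150B^2/\epsilon^2$---a distinction that matters downstream, where $B^2$ is taken exponentially large in $mp$ and polynomially large in $n$.

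The missing idea is to exploit smoothness rather than Lipschitzness, which is how the cited reference proceeds. The squared loss $\ell(\bw,\bz)=\tfrac12(y-\braket{\bw,\bx})^2$ is $1$-smooth (its Hessian is $\bx\bx^{\rm T}$ with $\|\bx\|\le1$) and nonnegative, hence self-bounding: $\|\nabla_{\bw}\ell(\bw,\bz)\|^2\le 2\,\ell(\bw,\bz)$. The corresponding stability bound for smooth nonnegative losses scales with $\mathbb{E}[L_Z(\bw^\ast_Z)]$ rather than with the worst-case squared gradient norm $(1+B)^2$ over the feasible set, and $\mathbb{E}[L_Z(\bw^\ast_Z)]\le \min_{\bw\in\mH}L_\mD(\bw)+\lambda B^2\le \tfrac12+\tfrac{\epsilon}{3}=O(1)$, since $\bw=0\in\mH$ already achieves population loss at most $1/2$ when $|y|\le1$. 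This is what removes the extra factor of $B^2$ and yields the stated sample complexity; in your argument, only the translation from parameter perturbation to loss perturbation needs to be redone using the local, self-bounded gradient in place of the global Lipschitz constant.
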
    
%\end{tcolorbox}
%%%%%%%%%%%%%%%%%%%%%%%%%%%%%%%%%%%%%%%%%%%%%%%%%%%%%%%%%%%%%%%%%%%%%%%%
This theorem states that if the relationship between $\bx$ and $y$ can be well approximated by $h_{\bw}(\bx)$ with $\|\bw\|\leq B$, then the first term on the right-hand side, $\min_{\bw\in\mH}L_\mD(\bw)$, becomes small, thereby allowing the expected loss to be upper bounded as $\mathbb{E}_{Z\sim \mD^N}[L_\mD(\bw^\ast_Z)] \lesssim \epsilon$ by increasing the number of training samples to $N \sim 150B^2/\epsilon^2$.

For later use, we slightly generalize this theorem such that the sizes of the domain $\mX$ and the range $\mY$ are arbitrary:
%%%%%%%%%%%%%%%%%%%%%%%%%%%%%%%%%%%%%%%%%%%%%%%%%%%%%%%%%%%%%%%%%%%%%%%%
%\begin{tcolorbox}[enhanced, breakable=true]
\begin{cor} \label{cor: mother theorem 2}
Let $\mD$ be a distribution over $\mX\times \mY$, where $\mX=\{\bx\in \mathbb{R}^d : \|\bx\|\leq X \}$ and $\mY=[-Y,Y]$.
Let $\mH=\{\bw\in \mathbb{R}^d:\|\bw\|\leq B\}$.
For any $\epsilon\in (0,Y^2)$, let $N\geq 150 B^2 X^2 Y^2/\epsilon^2$.
Then, applying the ridge regression algorithm with parameter $\lambda=\epsilon/3B^2$ satisfies
\begin{align}
    \underset{Z\sim \mD^N}{\mathbb{E}}[L_\mD(\bw^\ast_Z)] \leq \underset{\bw\in\mH}{\rm min} L_\mD(\bw) + \epsilon.
\end{align}
\end{cor}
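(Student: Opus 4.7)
The plan is to reduce Corollary~1 to Theorem~2 (\texttt{thm\_sm: mother theorem 1}) by a rescaling argument. I would introduce rescaled variables $\tilde{\bx} = \bx/X$ and $\tilde{y} = y/Y$, so that $\|\tilde{\bx}\|\leq 1$ and $\tilde{y}\in[-1,1]$, placing the induced distribution $\tilde{\mD}$ on $\tilde{\mX}\times\tilde{\mY}$ in exactly the regime where Theorem~2 applies. Since the linear prediction obeys $\braket{\bw,\bx} = X\braket{\bw,\tilde{\bx}}$, I would also define a rescaled dual vector $\tilde{\bw} = (X/Y)\bw$, giving $\braket{\bw,\bx} = Y\braket{\tilde{\bw},\tilde{\bx}}$, and then the squared loss factors cleanly as $\ell(\bw,\bz) = Y^2\,\tilde{\ell}(\tilde{\bw},\tilde{\bz})$, so both the empirical loss $L_Z$ and the expected loss $L_\mD$ pick up a global factor of $Y^2$ under the rescaling.

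Next, I would translate the hypothesis-class constraint $\|\bw\|\leq B$ into $\|\tilde{\bw}\|\leq \tilde{B}\equiv BX/Y$, and show that the regularized optimization problem $L_Z(\bw)+\lambda\|\bw\|^2$ over $\|\bw\|\leq B$ is equivalent, up to an overall factor of $Y^2$, to the rescaled problem $\tilde{L}_{\tilde{Z}}(\tilde{\bw})+\tilde{\lambda}\|\tilde{\bw}\|^2$ over $\|\tilde{\bw}\|\leq \tilde{B}$ with regularization strength $\tilde{\lambda}=\lambda/X^2$, since $\lambda\|\bw\|^2=\lambda(Y/X)^2\|\tilde{\bw}\|^2$. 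This ensures that the ridge-regression minimizer in the original variables, $\bw^\ast_Z$, pulls back to the minimizer $\tilde{\bw}^\ast_{\tilde{Z}}=(X/Y)\bw^\ast_Z$ of the rescaled problem, and vice versa.

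Now I would invoke Theorem~2 on the rescaled problem with target accuracy $\tilde{\epsilon} = \epsilon/Y^2$, which lies in $(0,1)$ precisely because the corollary assumes $\epsilon\in(0,Y^2)$. The theorem then yields $\mathbb{E}_{\tilde{Z}\sim\tilde{\mD}^N}[\tilde{L}_{\tilde{\mD}}(\tilde{\bw}^\ast_{\tilde{Z}})]\leq \min_{\|\tilde{\bw}\|\leq\tilde{B}}\tilde{L}_{\tilde{\mD}}(\tilde{\bw})+\tilde{\epsilon}$ provided $N\geq 150\tilde{B}^2/\tilde{\epsilon}^2 = 150\,B^2 X^2 Y^2/\epsilon^2$ and $\tilde{\lambda}=\tilde{\epsilon}/(3\tilde{B}^2)=\epsilon/(3B^2X^2)$. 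Multiplying both sides by $Y^2$ restores the original expected loss and matches the right-hand side $\min_{\bw\in\mH}L_\mD(\bw)+\epsilon$, while the rescaling $\lambda=X^2\tilde{\lambda}=\epsilon/(3B^2)$ agrees exactly with the regularization stated in the corollary.

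The main obstacle is bookkeeping rather than anything deep: one has to verify carefully that the rescaling $(\bx,y,\bw,\lambda)\mapsto(\tilde{\bx},\tilde{y},\tilde{\bw},\tilde{\lambda})$ gives a genuine bijective correspondence of constrained optimization problems, so that (i) the hypothesis-class constraints $\|\bw\|\leq B$ and $\|\tilde{\bw}\|\leq BX/Y$ are equivalent, (ii) the empirical and expected loss terms both scale by the same factor $Y^2$, and (iii) the regularization term rescales consistently. Once this correspondence is pinned down, the conclusion follows directly from Theorem~2 without any further analysis.
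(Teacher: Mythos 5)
Your proposal is correct and follows essentially the same route as the paper's own proof: both rescale $\bx\mapsto\bx/X$, $y\mapsto y/Y$, $\bw\mapsto(X/Y)\bw$, $\lambda\mapsto\lambda/X^2$, verify that the regularized objective picks up a uniform factor $Y^2$ so the minimizers correspond, and then apply Theorem~\ref{thm_sm: mother theorem 1} with accuracy $\epsilon/Y^2$. The resulting conditions $N\geq 150B^2X^2Y^2/\epsilon^2$ and $\lambda=\epsilon/3B^2$ match the paper exactly.
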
    
%\end{tcolorbox}
%%%%%%%%%%%%%%%%%%%%%%%%%%%%%%%%%%%%%%%%%%%%%%%%%%%%%%%%%%%%%%%%%%%%%%%%

\begin{proof}
We rescale the random variables $\bz=(\bx,y)\sim \mD$ as $\bx'=\bx/X$ and $y'=y/Y$, defining a new distribution $\mD'$ over $\mX' \times \mY'$, where $\mX'=\{\bx'\in \mathbb{R}^d : \|\bx'\|\leq 1 \}$ and $\mY'=[-1,1]$.
Following Eqs.~\eqref{eq_sm: RR expected loss} and \eqref{eq_sm: RR emprical loss}, we consider the expected loss $L_{\mD'}(\bw')=\mathbb{E}_{\bz'\sim \mD'}[\ell(\bw', \bz')]$ and the empirical loss $L_{Z'}(\bw')= \sum_{i=1}^N\ell(\bw', \bz'_i)/N$ for the rescaled dataset $Z'\sim (\mD')^N$.
The optimal $\bw'$ for $Z'$ is determined from
\begin{align}
    \bw^\ast_{Z'} = \underset{\bw'\in \mH'}{\rm argmin}\left( L_{Z'}(\bw') + \lambda' \|\bw'\|^2 \right),
\end{align}
where $\mH'=\{w'\in \mathbb{R}^d:\|w'\|\leq B'\}$.
The rescaling allows us to apply Theorem~\ref{thm_sm: mother theorem 1} to $\mD'$.
That is, for any $\epsilon'\in(0,1)$, if $N\geq 150(B')^2/(\epsilon')^2$ and $\lambda'=\epsilon'/3(B')^2$, the following inequality holds:
\begin{align}
    \underset{Z'\sim (\mD')^N}{\mathbb{E}}[L_{\mD'}(\bw^\ast_{Z'})] \leq \underset{\bw'\in\mH'}{\rm min} L_{\mD'}(\bw') + \epsilon'. \label{eq_sm: inequality_Dprime}
\end{align}

Assume $B'=(X/Y)B$ and $\lambda'=\lambda/X^2$.
Then, since $L_{Z}(\bw)+\lambda \|\bw\|^2 = Y^2 \left( L_{Z'}(\bw') + \lambda' \|\bw'\|^2 \right)$ holds for $\bw'=(X/Y)\bw$, we have $\bw^\ast_{Z'} = (X/Y) \bw^\ast_Z$.
This leads to $\ell(\bw_Z^\ast,\bz)=Y^2 \ell(\bw_{Z'}^\ast,\bz')$ for any $\bx'=\bx/X$ and $y'=y/Y$, implying 
\begin{align}
    \underset{Z\sim \mD^N}{\mathbb{E}}[L_{\mD}(\bw^{\ast}_Z)] = Y^2 \underset{Z'\sim (\mD')^N}{\mathbb{E}}[L_{\mD'}(\bw^{\ast}_{Z'})].
\end{align}
Also, since $\ell(\bw,\bz)=Y^2 \ell(\bw',\bz')$ holds for $\bw'=(X/Y)\bw$, $\bx'=\bx/X$, and $y'=y/Y$,
we have 
\begin{align}
    \underset{w\in\mH}{\rm min} L_{\mD}(w) = Y^2 \underset{w'\in\mH'}{\rm min} L_{\mD'}(\bw'). \label{eq_sm: minM2}
\end{align}
Note that the domain of $\bw'$ is also rescaled as $\|\bw'\|\leq B'=(X/Y)B$ in $\mH'$.
These show 
\begin{align}
    \underset{Z\sim \mD^N}{\mathbb{E}}[L_{\mD}(\bw^{\ast}_Z)] 
    &= Y^2 \underset{Z'\sim (\mD')^N}{\mathbb{E}}[L_{\mD'}(\bw^{\ast}_{Z'})] \\
    &\leq Y^2 \left( \underset{w'\in\mH'}{\rm min} L_{\mD'}(\bw') + \epsilon' \right) \\
    &= \underset{\bw\in\mH}{\rm min} L_{\mD}(\bw) + Y^2 \epsilon',
\end{align}
where we have used Eqs.~\eqref{eq_sm: inequality_Dprime}--\eqref{eq_sm: minM2}.
By rescaling $Y^2\epsilon' = \epsilon$, we have
\begin{align}
    \underset{Z\sim \mD^N}{\mathbb{E}}[L_{\mD}(\bw^{\ast}_Z)] 
    &\leq \underset{\bw\in\mH}{\rm min} L_{\mD}(\bw) + \epsilon. \label{eq_sm: fin_inequality}
\end{align}
To summarize, Eq.~\eqref{eq_sm: fin_inequality} holds if
\begin{align}
    &N\geq 150 (B')^2/(\epsilon')^2 = 150 B^2 X^2 Y^2/\epsilon^2, \\
    &\lambda=X^2\lambda'=X^2\epsilon'/3(B')^2 = \epsilon/3B^2.
\end{align}
\end{proof}

%%%%%%%%%%%%%%%%%%%%%%%%%%%%%%%%%%%%%%%%%%%%%%%%%%%%%%%%%%%%%%%%%%%%%%%%
%%%%%%%%%%%%%%%%%%%%%%%%%%%%%%%%%%%%%%%%%%%%%%%%%%%%%%%%%%%%%%%%%%%%%%%%
\subsection{Kernel ridge regression}

The kernel method addresses nonlinear learning tasks by mapping an input data $\bx \in \mathbb{R}^d$ to a higher-dimensional feature vector $\phi(\bx)\in \mathbb{R}^D$ and then solving the linear optimization problem in the feature space.
The formulation is parallel to the aforementioned regression on $\bx$.
The linear model in the feature space is defined as $h_{\bw}(\bx) = \braket{\bw, \phi(\bx)}$ with a dual vector $\bw\in \mathbb{R}^D$.
The expected and empirical losses are defined similarly as $L_{\mD}(\bw)=\mathbb{E}_{\bz\sim\mD}[\ell(\bw,\bz)]$ and $L_Z(\bw)=\sum_{i=1}^N \ell(\bw,\bz_i)/N$ with $\ell(\bw,\bz)=(y-\braket{\bw, \phi(\bx)})^2/2$.
We optimize $\bw$ by minimizing the regularized loss function:
\begin{align}
    \bw^\ast_Z = \underset{\bw}{\rm argmin}\left( L_Z(\bw) + \lambda \|\bw\|^2 \right) \notag \\
    \text{subject to $\|\bw\|^2\leq B^2$},
\end{align}
where $\lambda \|\bw\|^2$ is the regularization term.

From the representer theorem, $\bw^\ast_Z$ can be represented as a linear combination of training data as $\bw^\ast_Z = \sum_{j=1}^N (\balpha_Z^\ast)_j \phi(\bx_j)$, where $\balpha_Z^\ast \in \mathbb{R}^N$ is an $N$-dimensional vector.
Then, the linear model is reduced to 
\begin{align}
    h_{\bw_Z^\ast}(\bx)=\sum_{j=1}^N (\balpha_Z^\ast)_j k(\bx_j,\bx) \label{eq_sm: kernel model}
\end{align}
with the kernel function $k(\bx,\bx')=\braket{\phi(\bx),\phi(\bx')}$.
Therefore, given an unseen input data $\bx$, we can predict its output $y$ through Eq.~\eqref{eq_sm: kernel model} using the kernel function between $\bx$ and the training data $\bx_j$.
Substituting $\bw^\ast_Z = \sum_{j=1}^N (\balpha_Z^\ast)_j \phi(\bx_j)$ into Eq.~\eqref{eq_sm: RR optimization}, we have the optimization problem for $\balpha_Z^\ast$:
\begin{align}
    \balpha^\ast_Z 
    = \underset{\balpha}{\rm argmin}\left( \frac{1}{2N} \balpha^{\rm T} KK \balpha - \frac{1}{N}\balpha^{\rm T} K \by + \frac{1}{2N} \by^{\rm T} \by + \lambda \balpha^{\rm T} K \balpha \right) \notag \\
    \text{subject to $\balpha^{\rm T} K \balpha \leq B^2$}, \label{eq_sm: KRR optimization}
\end{align}
where we have defined $\balpha = (\alpha_1,\cdots,\alpha_N)^{\rm T}$, $\by = (y_1,\cdots,y_N)^{\rm T}$, and the kernel matrix $(K)_{ij}=k(\bx_i,\bx_j)$.
In this dual representation, we do not need to calculate the feature vectors explicitly, only requiring the $N$-dimensional kernel matrix.
This enables us to treat high-dimensional, potentially infinite-dimensional, feature spaces that cannot be computed 
directly.
Equation~\eqref{eq_sm: KRR optimization} is a convex optimization problem and thus can be solved efficiently with classical computers.

Even in the kernel method, Corollary~\ref{cor: mother theorem 2} holds by considering the feature vector $\phi(\bx)$ instead of the original vector $\bx$. 
Then, the upper bound of the input vectors, $|\braket{\bx,\bx'}|\leq X^2$ for any $\bx$ and $\bx'$, is replaced with the upper bound of the kernel function, $|k(\bx,\bx')|=|\braket{\phi(\bx), \phi(\bx')}|\leq X^2$ for any $\bx$ and $\bx'$.
Also, we replace $\epsilon$ with $\epsilon^2$ in accordance with the convention in the field of quantum information, where additive error is denoted as $\epsilon$.
Specifically, the following corollary holds:
%%%%%%%%%%%%%%%%%%%%%%%%%%%%%%%%%%%%%%%%%%%%%%%%%%%%%%%%%%%%%%%%%%%%%%%%
%\begin{tcolorbox}[enhanced, breakable=true]
\begin{cor} \label{cor: mother theorem 3}
Let $\mD$ be a distribution over $\mX\times \mY$, where $\mY=[-Y,Y]$.
Let $k:\mX\times \mX \to \mathbb{R}$ be a kernel function associated with a feature space $\mathbb{R}^D$, bounded as $|k(\bx,\bx')|\leq X^2$ for any $\bx,\bx'\in\mX$.
Let $\mH=\{\bw\in\mathbb{R}^D : \|\bw\|\leq B\}$.
For any $\epsilon\in (0,Y)$, let $N\geq 150B^2 X^2 Y^2/\epsilon^4$.
Then, applying the kernel ridge regression algorithm with parameter $\lambda=\epsilon^2/3B^2$ satisfies
\begin{align}
    \underset{Z\sim \mD^N}{\mathbb{E}}[L_\mD(\bw^\ast_Z)] \leq \underset{\bw \in \mH}{\rm min} L_\mD(\bw) + \epsilon^2. \label{eq_sm: mother theorem 3 eq}
\end{align}
\end{cor}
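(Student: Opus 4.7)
The plan is to reduce Corollary~3 directly to Corollary~2 by viewing kernel ridge regression as ordinary ridge regression performed on the feature vector $\phi(\bx)\in\mathbb{R}^D$ rather than on $\bx$ itself. All the rescaling heavy lifting has already been absorbed into Corollary~2, so the only substantive work is to convert the hypothesis on the kernel into a hypothesis on the norm of the feature vector, and to rename $\epsilon$ as $\epsilon^2$.

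First, I would observe that the assumed bound $|k(\bx,\bx')|\leq X^{2}$ in particular gives $k(\bx,\bx)=\|\phi(\bx)\|^{2}\leq X^{2}$, so the image of $\mX$ under $\phi$ lies in the Euclidean ball of radius $X$ in $\mathbb{R}^{D}$. Next, I would note that the kernel ridge regression program of Eq.~\eqref{eq_sm: KRR optimization} is, by the representer-theorem substitution $\bw=\sum_j\alpha_j\phi(\bx_j)$, literally the ridge regression program \eqref{eq_sm: RR optimization} on the pushforward distribution $\phi_{*}\mD$ on $\{\bv\in\mathbb{R}^{D}:\|\bv\|\leq X\}\times[-Y,Y]$, with the same hypothesis class $\mH=\{\bw\in\mathbb{R}^{D}:\|\bw\|\leq B\}$ and the same regularization parameter $\lambda$.

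Now I would apply Corollary~2 with the roles $d\to D$, domain radius $X$ as given, label range $Y$ as given, norm bound $B$ as given, and accuracy parameter $\tilde{\epsilon}=\epsilon^{2}$. The hypothesis $\tilde{\epsilon}\in(0,Y^{2})$ becomes $\epsilon\in(0,Y)$, as in the statement. The sample-size requirement becomes $N\geq 150 B^{2}X^{2}Y^{2}/\tilde{\epsilon}^{2}=150 B^{2}X^{2}Y^{2}/\epsilon^{4}$, and the regularizer becomes $\lambda=\tilde{\epsilon}/3B^{2}=\epsilon^{2}/3B^{2}$, matching the Corollary. The conclusion of Corollary~2 then reads $\mathbb{E}_{Z\sim\mD^{N}}[L_{\mD}(\bw^{\ast}_{Z})]\leq \min_{\bw\in\mH}L_{\mD}(\bw)+\tilde{\epsilon}=\min_{\bw\in\mH}L_{\mD}(\bw)+\epsilon^{2}$, which is exactly Eq.~\eqref{eq_sm: mother theorem 3 eq}.

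There is essentially no obstacle to overcome: the only point requiring a line of justification is the implication $|k(\bx,\bx')|\leq X^{2}\Rightarrow\|\phi(\bx)\|\leq X$, which follows by taking $\bx'=\bx$ and using non-negativity of $k(\bx,\bx)$. The reason this works is that the statistical learning bound of Corollary~2 depends on the data only through the diameter of the input support and the range of labels, both of which are controlled by the stated kernel bound and label bound. Thus the proof is a one-line invocation of Corollary~2 together with the substitution $\tilde\epsilon=\epsilon^{2}$ to move from an additive-error convention to the squared-additive-error convention used throughout the quantum-information portion of the paper.
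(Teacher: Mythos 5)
Your proposal is correct and matches the paper's own (very brief) justification: the paper likewise obtains this corollary by applying Corollary~\ref{cor: mother theorem 2} to the feature vectors $\phi(\bx)$, with the kernel bound $|k(\bx,\bx')|\leq X^2$ playing the role of the input-norm bound and with $\epsilon$ replaced by $\epsilon^2$. Your write-up simply makes explicit the one small step the paper leaves implicit, namely that $k(\bx,\bx)=\|\phi(\bx)\|^2\leq X^2$ places the pushforward distribution inside the radius-$X$ ball.
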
    
%\end{tcolorbox}
%%%%%%%%%%%%%%%%%%%%%%%%%%%%%%%%%%%%%%%%%%%%%%%%%%%%%%%%%%%%%%%%%%%%%%%%

%%%%%%%%%%%%%%%%%%%%%%%%%%%%%%%%%%%%%%%%%%%%%%%%%%%%%%%%%%%%%%%%%%%%%%%%
%%%%%%%%%%%%%%%%%%%%%%%%%%%%%%%%%%%%%%%%%%%%%%%%%%%%%%%%%%%%%%%%%%%%%%%%
\section{Rigorous guarantee for GLQK}
%%%%%%%%%%%%%%%%%%%%%%%%%%%%%%%%%%%%%%%%%%%%%%%%%%%%%%%%%%%%%%%%%%%%%%%%

In this section, we evaluate the amount of quantum resources that suffices for GLQK to learn an unknown $g(\rho)$ from classical shadow data.
Let $\mD_S$ be the probability distribution over $\mX\times\mY$, where $\mX$ is the input domain of $n$-qubit classical shadows of size $T$, and $\mY=\mathbb{R}$ is the output range.
Specifically, a quantum state $\rho$ is first drawn from a certain distribution $\mD$, and then a classical shadow $S_T(\rho)$ is generated from the distribution $\mD_{\rho}$ by performing random Pauli measurements over $T$ copies of $\rho$, defining the distribution $\mD_S$ based on the sampled classical shadow and its target label $(S_T(\rho), g(\rho))$. 
Assume that $\rho$ sampled from $\mD$ satisfies the ECP with a correlation length bounded by $\xi$.
Suppose that a training dataset of $N$ samples, $Z=\{S_T(\rho_i), g(\rho_i) \}_{i=1}^N \sim \mD_S^N$, is given.

We model $g(\rho)$ using the polynomial GLQK with the truncated shadow kernel as:
\begin{align}
    g(\rho) \sim h_{\bw}(S_T(\rho)) = \braket{\bw, \phi_{\rm GL}(S_T(\rho))} = \sum_{i=1}^N \alpha_i  k_{\rm GL}(S_T(\rho_i), S_T(\rho)),
\end{align}
where $\bw = \sum_i \alpha_i \phi_{\rm GL}(S_T(\rho_i))$ by the representer theorem.
Then, the optimal $\balpha^\ast_Z$ (and thus $\bw^\ast_Z$) is obtained by solving the linear optimization problem \eqref{eq_sm: KRR optimization}.
The goal of this section is to evaluate the amount of quantum resources for $N$ and $T$ sufficient to ensure a small expected loss averaged over the training data distribution $\mD_S^N$, i.e., 
\begin{align}
    \mathbb{E}_{Z\sim \mD_S^N}[L_{\mD_S}(\bw_Z^\ast)]
    =\mathbb{E}_{Z\sim \mD_S^N} \left[\mathbb{E}_{(S_T(\rho),g(\rho))\sim \mD_S}  \left[ \left| g(\rho) - \braket{\bw_Z^\ast, \phi_{\rm GL}(S_T(\rho))} \right|^2/2\right]\right],
\end{align}
for both general data and translationally symmetric data.

%%%%%%%%%%%%%%%%%%%%%%%%%%%%%%%%%%%%%%%%%%%%%%%%%%%%%%%%%%%%%%%%%%%%%%%%
\subsection{General cases}

In this learning task, the performance of GLQK is guaranteed by the following theorem:
%%%%%%%%%%%%%%%%%%%%%%%%%%%%%%%%%%%%%%%%%%%%%%%%%%%%%%%%%%%%%%%%%%%%%%%%
%\begin{tcolorbox}[enhanced, breakable=true]
\begin{thm}[Theorem~1 in the main text] \label{thm_sm: main_nontranslation}
Consider an $m$-body, degree-$p$ polynomial $g(\rho)$ and a distribution $\mD_S$ over $\mX\times\mY$ such that the correlation length of the sampled quantum state is less than or equal to $\xi$ on the $D$-dimensional hypercubic lattice.
For any $\epsilon\in (0,\|g\|_1)$, let $\delta = \xi \log(2 \|g\|_{1} mp/\epsilon)$, $\zeta=m\delta$, and $\alpha_g={\rm LCN}(g;\delta,\zeta)$.
Suppose that we obtain $N$ classical shadows of size $T$ and their target labels, $Z=\{S_T(\rho_i),g(\rho_i)\}_{i=1}^N \sim \mD_S^N$, as a training dataset such that
\begin{align}
    &N = \frac{600}{\epsilon^4}  \|g\|_{1}^4  \exp(\alpha_g\tau\exp(5\gamma)) \left( \frac{2 mp\zeta^D}{\tau\gamma} \right)^{mp}  n^{\alpha_g}, \\
    &T = T_{\rm CA}(g;\epsilon/2)=\frac{256}{3 \epsilon^2} \|g\|_{1}^2 12^m (mp)^2 \log\left[\frac{\|g\|_{1}^2 2^{m+5}  mp (3^{m^2p}+1)^2}{\epsilon^2}\right].
\end{align}
Then, by setting the hyperparameters as $B^2 = \|g\|_{1}^2 ( 2m p \zeta^D/\tau\gamma )^{mp} n^{\alpha_g}$ and $\lambda=\epsilon^2/6B^2$, the kernel ridge regression using the polynomial GLQK  based on the truncated shadow kernel with $h=\alpha_g$ and $\zeta=m\xi \log(2 \|g\|_{1} mp/\epsilon)$ achieves
\begin{align}
    \underset{Z\sim \mD_S^N}{\mathbb{E}}[L_{\mD_S}(\bw_Z^\ast)] \leq \epsilon^2.
\end{align}
Here, we have assumed that $2\zeta^D/\gamma \geq 1$ and $mp/\tau \geq 1$.
\end{thm}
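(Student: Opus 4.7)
The plan is to decompose the expected loss into three controllable pieces and then invoke the generalization bound of Corollary~\ref{cor: mother theorem 3}. By Lemma~\ref{lem_sm: approx} with accuracy $\epsilon/2$ and $\delta=\xi\log(2\|g\|_1 mp/\epsilon)$, $|g(\rho)-g_{\rm CA}(\rho)|\le\epsilon/2$ holds for every $\rho$ in the support of $\mD$; the choice $\zeta=m\delta$ guarantees that the support of every Pauli factor appearing in $g_{\rm CA}$ is contained in some subsystem of $\AGL(\zeta)$, so $\alpha_g={\rm LCN}(g;\delta,\zeta)$ is well defined with $a_i\le\alpha_g$ for every term $i$. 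Since $g_{\rm CA}$ is an $m$-body degree-$mp$ polynomial with $\|g_{\rm CA}\|_1\le\|g\|_1$ (Eq.~\eqref{eq_sm: cluster norm}), Lemma~\ref{lem_sm: shadow polynomial} applied in the form of Eq.~\eqref{eq_sm: T_cluster} with accuracy $\epsilon/2$ gives $\mathbb{E}_{\sigma\sim\mD_\rho}[(g_{\rm CA}(\rho)-g_{\rm CA}(\sigma))^2]\le(\epsilon/2)^2$ for exactly the shadow size $T$ stated in the theorem.

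The central step is to construct a witness dual vector $\bw_{\star}$ in the GLQK feature space satisfying $\braket{\bw_{\star},\phi_{\rm GL}(S_T(\rho))}=g_{\rm CA}(\sigma)$ pointwise in the classical shadow, with $\|\bw_{\star}\|^2\le B^2$. Using $g_{\rm CA}(\rho)=\sum_i\hc_i\prod_{j=1}^{a_i}\prod_{P\in\mP_{i,j}}\tr[P\rho]$ from Eq.~\eqref{eq_sm: gcluster LCN}, each term factors over at most $a_i\le\alpha_g$ local subsystems $A_{i,j}\in\AGL(\zeta)$. Because the $d=0$ summand of every $\phi^{\rm TSK}_A$ equals the scalar $1$, one can pad the product up to $h=\alpha_g$ tensor factors by placing constants on any of the $n$ subsystems in the $\alpha_g-a_i$ remaining slots. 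I would distribute the weight of the $i$-th term equally across all $\sim n^{\alpha_g-a_i}$ such padded realizations and choose the amplitude to cancel the prefactor $F_i/n^{\alpha_g/2}$ read off from Eq.~\eqref{eq_sm: GLQK component}, where $F_i=\prod_{j=1}^{a_i}\sqrt{\tau^{b_{ij}}/b_{ij}!}\prod_P\sqrt{(\gamma/2\zeta^D)^{|\supp(P)|}}$.

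The main obstacle is the ensuing norm bookkeeping, which is precisely where the $n^{\alpha_g}$ (rather than the shadow-kernel $n^{mp}$) scaling arises. Equal redistribution over the $\sim n^{\alpha_g-a_i}$ padded components yields a net contribution $\hc_i^2\,n^{a_i}/F_i^2$ to $\|\bw_{\star}\|^2$. I would then bound $1/F_i^2\le(2mp\zeta^D/\tau\gamma)^{mp}$ uniformly in $i$ using $\prod_j b_{ij}!\le(mp)^{b_i}$, $\sum_P|\supp(P)|\le mp$ and $b_i\le mp$ from Eq.~\eqref{eq_sm: convenient ineq}, combined with the hypotheses $mp/\tau\ge 1$ and $2\zeta^D/\gamma\ge 1$. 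Summing over $i$ and applying $n^{a_i}\le n^{\alpha_g}$ together with $\sum_i\hc_i^2\le(\sum_i|\hc_i|)^2=\|g_{\rm CA}\|_1^2\le\|g\|_1^2$ produces $\|\bw_{\star}\|^2\le B^2=\|g\|_1^2(2mp\zeta^D/\tau\gamma)^{mp}n^{\alpha_g}$.

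Combining the three pieces via $(a+b)^2\le 2(a^2+b^2)$ applied to $g(\rho)-g_{\rm CA}(\sigma)=(g(\rho)-g_{\rm CA}(\rho))+(g_{\rm CA}(\rho)-g_{\rm CA}(\sigma))$ gives $L_{\mD_S}(\bw_{\star})\le\epsilon^2/2$. The kernel is bounded by $X^2=\exp(\alpha_g\tau\exp(5\gamma))$ and the labels by $Y=\|g\|_1$, so Corollary~\ref{cor: mother theorem 3} invoked at effective accuracy $\epsilon/\sqrt{2}$ (so that the excess term equals $\epsilon^2/2$) requires $N\ge 600 B^2 X^2 Y^2/\epsilon^4$ and $\lambda=\epsilon^2/6B^2$---the exact choices in the theorem---and delivers $\mathbb{E}_{Z\sim\mD_S^N}[L_{\mD_S}(\bw^\ast_Z)]\le L_{\mD_S}(\bw_{\star})+\epsilon^2/2\le\epsilon^2$.
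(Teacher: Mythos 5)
Your proposal is correct and follows essentially the same route as the paper's proof: cluster approximation plus shadow-estimation error to bound the loss at a witness vector by $\epsilon^2/2$, explicit construction of that witness in the GLQK feature space with $\|\bw_\star\|^2\le B^2$, and an application of Corollary~\ref{cor: mother theorem 3} at excess $\epsilon^2/2$ with $X^2=\exp(\alpha_g\tau\exp(5\gamma))$ and $Y=\|g\|_1$. Your explicit padding-and-redistribution of terms with $a_i<\alpha_g$ over the $n^{\alpha_g-a_i}$ trivial $d=0$ slots is a slightly more careful (and marginally tighter, giving $n^{a_i}$ in place of $n^{\alpha_g}$ per term) version of what the paper does implicitly by writing the product up to $j=\alpha_g$ with empty $\mP_{i,j}$'s, and it lands on the same bound $B^2$.
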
    
%\end{tcolorbox}
%%%%%%%%%%%%%%%%%%%%%%%%%%%%%%%%%%%%%%%%%%%%%%%%%%%%%%%%%%%%%%%%%%%%%%%%

\begin{proof}
We prove this theorem based on Corollary~\ref{cor: mother theorem 3}.

\vspace{0.5cm}
\noindent
{\bf (i) Error in estimating the polynomial from classical shadows:}
First, we evaluate the first term on the right-hand side in Eq.~\eqref{eq_sm: mother theorem 3 eq}.
Let $\delta=\xi \log(2\|g\|_1 mp/\epsilon)$ and $T=T_{\rm CA}(g;\epsilon/2)\geq T(g_{\rm CA};\epsilon/2)$.
Then, by Lemmas~\ref{lem_sm: approx} and \ref{lem_sm: shadow polynomial}, the $\delta$-cluster approximation and its value estimated from a classical shadow $\sigma$ obey
\begin{align}
    &|g(\rho) - g_{\rm CA}(\rho)| \leq \epsilon/2, \\
    &\underset{\sigma\sim\mD_\rho}{\mathbb{E}} \left[ |g_{\rm CA}(\rho) - g_{\rm CA}(\sigma)|^2 \right] \leq \epsilon^2/4
\end{align}
for any $\rho$ with a correlation length less than or equal to $\xi$.
Meanwhile, the following inequality holds:
\begin{align}
    |g(\rho)-g_{\rm CA}(\sigma)|^2/2
    &= |(g(\rho)-g_{\rm CA}(\rho)) + (g_{\rm CA}(\rho) - g_{\rm CA}(\sigma))|^2/2 \\
    &\leq |g(\rho)-g_{\rm CA}(\rho)|^2 + |g_{\rm CA}(\rho) - g_{\rm CA}(\sigma)|^2,
\end{align}
where we have used $|x+y|^2/2 \leq |x|^2+|y|^2$.
Taking expectation values with respect to $\rho\sim\mD$ and $\sigma\sim\mD_\rho$, we have
\begin{align}
    \underset{\rho\sim \mD}{\mathbb{E}} \left[\underset{\sigma \sim \mD_\rho}{\mathbb{E}} \left[ |g(\rho) - g_{\rm CA}(\sigma)|^2/2 \right]\right] \leq \epsilon^2/2.
\end{align}
If $g_{\rm CA}(\sigma)$ can be represented as a linear function in the feature space of GLQK, i.e., $g_{\rm CA}(\sigma)=\braket{\tilde{\bw}, \phi_{\rm GL}(\sigma)}$ for some $\tilde{\bw}$, the first term on the right-hand side in Eq.~\eqref{eq_sm: mother theorem 3 eq} is upper bounded by $\epsilon^2/2$, provided that $B\geq \|\tilde{\bw}\|$:
\begin{align}
\min_{\bw\in\mH} L_{\mD_S}(\bw) \leq L_{\mD_S}(\tilde{\bw}) =  \underset{\rho\sim \mD}{\mathbb{E}} \left[\underset{\sigma \sim \mD_\rho}{\mathbb{E}} \left[ |g(\rho) - \braket{\tilde{\bw}, \phi_{\rm GL}(\sigma)}|^2/2 \right]\right] \leq \epsilon^2/2, \label{eq_sm: main proof 1_1}
\end{align}
where $\mH=\{\bw \in \mF^\ast: \|\bw\|\leq B \}$ with the dual feature space $\mF^\ast$.

\vspace{0.5cm}
\noindent
{\bf (ii) Evaluating learning cost:}
We verify that $g_{\rm CA}(\sigma)$ can be represented as a linear function in the feature space and then evaluate the magnitude of the dual vector $\tilde{\bw}$.
Recall that the $\delta$-cluster approximation $g_{\rm CA}(\rho)=\sum_i \hc_i \prod_{P\in \mP_{i}} \tr \left[ P \sigma \right]$ is written as [see Eq.~\eqref{eq_sm: gcluster LCN}]
\begin{align}
g_{\rm CA}(\sigma)&= \sum_i \hc_i \prod_{j=1}^{a_i} \ell_{ij}(\rho)
\end{align}
with the local quantity $\ell_{ij}(\rho)=\prod_{P\in \mP_{i,j}} \tr \left[ P \sigma \right]$ on the subsystem $A_{i,j} \in \AGL(\zeta)$, where $\supp(P)\subseteq A_{i,j}$ for all $P\in\mP_{i,j}$.
In other words, each term of $g_{\rm CA}(\rho)$ is the product of at most $\alpha_g=\max_i(a_i)$ local quantities $\ell_{ij}(\rho)$.
Thus, $g_{\rm CA}(\sigma)$ can be represented as a linear function in the feature space of GLQK with $h=\alpha_g$ as follows [see Eqs.~\eqref{eq_sm: feature vector of GLQK-TSK} and \eqref{eq_sm: GLQK component}]:
\begin{align}
g_{\rm CA}(\sigma) 
&= \sum_i \hc_i \left[n^{\alpha_g/2} \prod_{j=1}^{\alpha_g} \left(\frac{b_{ij}!}{\tau^{b_{ij}}}\right)^{1/2}\prod_{P\in \mP_{i,j}}  \left(\frac{2\zeta^D}{\gamma}\right)^{|\text{supp}(P)|/2} \right] \notag \\
&\hspace{1cm} \times \left[ \frac{1}{n^{{\alpha_g}/2}} \prod_{j=1}^{\alpha_g} \left(\frac{\tau^{b_{ij}}}{b_{ij}!}\right)^{1/2}\prod_{P\in \mP_{i,j}}  \left(\frac{\gamma}{2\zeta^D}\right)^{|\text{supp}(P)|/2} \tr[P \sigma ]\right] \\
&\equiv \braket{\tilde{\bw},\phi_{\rm GL}(\sigma)},
\end{align}
where $b_{ij}=|\mP_{i,j}|$ is the number of Pauli strings included in $\mP_{i,j}$.
The norm of the dual vector $\tilde{\bw}$ is bounded as 
\begin{align}
\braket{\tilde{\bw},\tilde{\bw}}
&= \sum_i |\hc_i|^2 \left[ n^{{\alpha_g}/2} \prod_{j=1}^{\alpha_g} \left(\frac{b_{ij}!}{\tau^{b_{ij}}}\right)^{1/2}\prod_{P\in \mP_{i,j}}  \left(\frac{2\zeta^D}{\gamma}\right)^{|\text{supp}(P)|/2} \right]^2 \\
&= \sum_{i} |\hc_i|^2  \left[ n^{\alpha_g} \left(\prod_{j=1}^{\alpha_g} \frac{b_{ij}!}{\tau^{b_{ij}}} \right) \left( 
\prod_{j=1}^{\alpha_g} \prod_{P\in \mP_{i,j}}  \left(\frac{2\zeta^D}{\gamma}\right)^{|\text{supp}(P)|} \right) \right] \\
&\leq \sum_{i} |\hc_i|^2  \left[ n^{\alpha_g} \left(\prod_{j=1}^{\alpha_g} \frac{b_{ij}^{b_{ij}}}{\tau^{b_{ij}}} \right) \left(\frac{2\zeta^D}{\gamma}\right)^{\sum_{j=1}^{\alpha_g} \sum_{P\in\mP_{i,j}} |\text{supp}(P)|}  \right] \\
&\leq \sum_{i} |\hc_i|^2  \left[ n^{\alpha_g} \left(\prod_{j=1}^{\alpha_g} \frac{(mp)^{b_{ij}}}{\tau^{b_{ij}}} \right) \left(\frac{2\zeta^D}{\gamma}\right)^{mp}  \right] \\
&\leq \sum_{i} |\hc_i|^2  \left[ n^{\alpha_g} \left( \frac{mp}{\tau} \right)^{mp}  
\left(\frac{2\zeta^D}{\gamma}\right)^{mp}  \right] \\
&\leq \|g\|_{1}^2 \left(\frac{2mp \zeta^D}{\tau\gamma}\right)^{mp} n^{\alpha_g} \\
&\equiv B^2.
\end{align}
where we have used $b_{ij}! \leq b_{ij}^{b_{ij}}$ in the second line, $b_{ij}\leq mp$ and $\sum_{j=1}^{\alpha_g} \sum_{P\in\mP_{i,j}} |\text{supp}(P)| \leq mp$ in the third line, $\sum_{j=1}^{\alpha_g} b_{ij}\leq mp$ in the fourth line, and $\sum_i |\hc_i|^2 \leq (\sum_i |\hc_i|)^2=\|g_{\rm CA}\|_1^2\leq \|g\|_1^2$ in the fifth line [see also Eqs.~\eqref{eq_sm: cluster norm} and \eqref{eq_sm: convenient ineq}].
Furthermore, we have used the assumptions of $2\zeta^D/\gamma \geq 1$ and $mp/\tau \geq 1$ in the third and fourth lines.
Therefore, setting $B^2=\|g\|_1^2 (2mp\zeta^D/\tau\gamma)^{mp} n^{\alpha_g}$ ensures Eq.~\eqref{eq_sm: main proof 1_1}.

By Corollary~\ref{cor: mother theorem 3}, for $N = 150 B^2 \exp({\alpha_g}\tau \exp(5\gamma))\|g\|_1^2/ (\epsilon^2/2)^2$ and $\lambda=(\epsilon^2/2)/3B^2$, the following inequality holds:
\begin{align}
\underset{Z\sim \mD_S^N}{\mathbb{E}} \left[ L_{\mD_S}(\bw_Z^\ast)\right] 
&\leq \underset{\bw\in\mH}{\text{min}} L_{\mD_S}(\bw) + \frac{\epsilon^2}{2}, \label{eq_sm: main proof 1_2}
\end{align}
where we have used $|k_{\rm GL}(\cdot,\cdot)|\leq\exp({\alpha_g}\tau \exp(5\gamma))=X^2$ and $|g(\rho)|^2 \leq \|g\|_1^2 = Y^2$ in Corollary~\ref{cor: mother theorem 3}.
Combining Eqs.~\eqref{eq_sm: main proof 1_1} and \eqref{eq_sm: main proof 1_2}, we obtain
\begin{align}
\underset{Z\sim \mD_S^N}{\mathbb{E}} \left[ L_{\mD_S}(\bw_Z^\ast)\right] 
&\leq \epsilon^2.
\end{align}
\end{proof}

%%%%%%%%%%%%%%%%%%%%%%%%%%%%%%%%%%%%%%%%%%%%%%%%%%%%%%%%%%%%%%%%%%%%%%%%
%%%%%%%%%%%%%%%%%%%%%%%%%%%%%%%%%%%%%%%%%%%%%%%%%%%%%%%%%%%%%%%%%%%%%%%%
%%%%%%%%%%%%%%%%%%%%%%%%%%%%%%%%%%%%%%%%%%%%%%%%%%%%%%%%%%%%%%%%%%%%%%%%
\subsection{Translationally symmetric cases}

Remarkably, when quantum states exhibit translation symmetry, the GLQK needs only a constant number of training data in $n$ to achieve certain accuracy. 
Here, the translation symmetry is defined as $T_\mu \rho T_\mu^\dag = \rho$ ($T_\mu$ is the translation operator in the $\mu$ direction, $\mu=1,\cdots,D$).
The following theorem proves this constant scaling:

%%%%%%%%%%%%%%%%%%%%%%%%%%%%%%%%%%%%%%%%%%%%%%%%%%%%%%%%%%%%%%%%%%%%%%%%
%\begin{tcolorbox}[enhanced, breakable=true]
\begin{thm}[Theorem~2 in the main text] \label{thm_sm: main_translation}
Consider an $m$-body, degree-$p$ polynomial $g(\rho)$ and a distribution $\mD$ over $\mX\times\mY$ such that the sampled quantum state is translationally symmetric and its correlation length is less than or equal to $\xi$ on the $D$-dimensional hypercubic lattice.
For any $\epsilon\in (0,\|g\|_1)$, suppose that we obtain $N$ classical shadows of size $T$ and their target labels, $Z=\{S_T(\rho_i),g(\rho_i)\}_{i=1}^N \sim \mD_S^N$, as a training dataset such that
\begin{align}
    &N = \frac{600}{\epsilon^4}  \|g\|_{1}^4  \exp(\tau\exp(5\gamma)) \left( \frac{2 mp\zeta^D}{\tau\gamma} \right)^{mp}, \\
    &T = T_{\rm CA}(g;\epsilon/2) = \frac{256}{3 \epsilon^2} \|g\|_{1}^2 12^m (mp)^2 \log\left[\frac{\|g\|_{1}^2 2^{m+5}  mp (3^{m^2p}+1)^2}{\epsilon^2}\right].
\end{align}
Then, by setting the hyperparameters as $B^2 = \|g\|_{1}^2 ( 2m p \zeta^D/\tau\gamma )^{mp}$ and $\lambda=\epsilon^2/6B^2$, the kernel ridge regression using the polynomial GLQK  based on the truncated shadow kernel with $h=1$ and $\zeta=m\xi \log(2 \|g\|_{1} mp/\epsilon)$ achieves
\begin{align}
    \underset{Z\sim \mD_S^N}{\mathbb{E}}[L_{\mD_S}(\bw_Z^\ast)] \leq \epsilon^2.
\end{align}
Here, we have assumed that $2\zeta^D/\gamma \geq 1$ and $mp/\tau \geq 1$.
\end{thm}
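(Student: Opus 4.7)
The plan is to follow the template of Theorem~\ref{thm_sm: main_nontranslation}, invoking Lemma~\ref{lem_sm: approx}, Lemma~\ref{lem_sm: shadow polynomial}, and Corollary~\ref{cor: mother theorem 3} in sequence; the only substantive change is the construction of the dual vector $\tilde{\bw}$ that certifies $\min_{\bw\in\mH}L_{\mD_S}(\bw)\leq\epsilon^2/2$. Setting $\delta=\xi\log(2\|g\|_{1}mp/\epsilon)$ and $\zeta=m\delta$, Lemma~\ref{lem_sm: approx} supplies the deterministic bound $|g(\rho)-g_{\rm CA}(\rho)|\leq\epsilon/2$, so it suffices to produce $\tilde{\bw}$ with $\|\tilde{\bw}\|\leq B$ satisfying $\mathbb{E}_{\sigma\sim\mD_\rho}[|\braket{\tilde{\bw},\phi_{\rm GL}(\sigma)}-g_{\rm CA}(\rho)|^{2}]\leq\epsilon^{2}/4$. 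Combining the two bounds via $(a+b)^{2}\leq 2a^{2}+2b^{2}$ yields $\min_{\bw\in\mH}L_{\mD_S}(\bw)\leq\epsilon^{2}/2$, and Corollary~\ref{cor: mother theorem 3} applied with $X^{2}=\exp(\tau\exp(5\gamma))$ (the $n$-independent uniform bound of the GLQK at $h=1$) and $Y^{2}=\|g\|_{1}^{2}$ closes the argument with the advertised $N$ and $\lambda$.

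The crux, which differs from Theorem~\ref{thm_sm: main_nontranslation}, is to distribute the dual vector uniformly across all $n$ local subsystems so as to cancel the $1/\sqrt{n}$ block normalization of $\phi_{\rm GL}$. For each $A\in\AGL(\zeta)$ and each Pauli string $P$ appearing in $g_{\rm CA}(\rho)=\sum_i\hc_i\prod_{P\in\mP_i}\tr(P\rho)$, choose a lattice translation $T_{A,P}$ with $\supp(T_{A,P}PT_{A,P}^{\dag})\subseteq A$, which is always feasible because every cluster of the $\delta$-cluster approximation fits inside a hypercube of side $\zeta=m\delta$. Define the translated local polynomial $g_{A}(\sigma)=\sum_{i}\hc_{i}\prod_{P\in\mP_{i}}\tr(T_{A,P}PT_{A,P}^{\dag}\sigma)$, which depends only on the reduced density matrix on $A$ and, by translation invariance of $\rho$, satisfies $g_{A}(\rho)=g_{\rm CA}(\rho)$. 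Repeating the single-block version of the norm computation in the proof of Theorem~\ref{thm_sm: main_nontranslation} yields a dual vector $\bv_{A}$ in the feature space of $\phi_{A}^{\rm TSK}$ with $g_{A}(\sigma)=\braket{\bv_{A},\phi_{A}^{\rm TSK}(\sigma)}$ and $\|\bv_{A}\|^{2}\leq\|g\|_{1}^{2}(2mp\zeta^{D}/\tau\gamma)^{mp}=B^{2}$, via the standard estimates $b_{i}!\leq(mp)^{b_{i}}$, $\sum_{P\in\mP_{i}}|\supp(P)|\leq mp$, $mp/\tau\geq 1$, $2\zeta^{D}/\gamma\geq 1$, and $\sum_{i}|\hc_{i}|^{2}\leq\|g\|_{1}^{2}$; crucially, translations preserve Pauli weights, so $\|\bv_{A}\|$ is independent of $A$.

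Setting $\tilde{\bw}_{A}=(1/\sqrt{n})\bv_{A}$ in the decomposition $\phi_{\rm GL}(\sigma)=(1/\sqrt{n})\bigoplus_{A}\phi_{A}^{\rm TSK}(\sigma)$ produces the estimator $\braket{\tilde{\bw},\phi_{\rm GL}(\sigma)}=(1/n)\sum_{A}g_{A}(\sigma)$ of $g_{\rm CA}(\rho)$; its mean-square error is bounded by Jensen's inequality and by Lemma~\ref{lem_sm: shadow polynomial} applied to each $g_{A}$ (an $m$-body, degree-$mp$ polynomial with $\|g_{A}\|_{1}\leq\|g\|_{1}$, so $T=T_{\rm CA}(g;\epsilon/2)$ is sufficient), giving $\mathbb{E}_{\sigma}[|(1/n)\sum_{A}g_{A}(\sigma)-g_{\rm CA}(\rho)|^{2}]\leq(1/n)\sum_{A}\mathbb{E}_{\sigma}[|g_{A}(\sigma)-g_{A}(\rho)|^{2}]\leq\epsilon^{2}/4$. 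The decisive observation is that the same averaging collapses the dual-vector norm, $\|\tilde{\bw}\|^{2}=(1/n)\sum_{A}\|\bv_{A}\|^{2}=\|\bv_{A^{\ast}}\|^{2}\leq B^{2}$, removing all $n$-dependence. The main obstacle is precisely this uniform-distribution trick: had $\tilde{\bw}$ been placed in a single block $A^{\ast}$, representing $g_{A^{\ast}}(\sigma)$ would have required $\tilde{\bw}_{A^{\ast}}=\sqrt{n}\,\bv_{A^{\ast}}$ to compensate the $1/\sqrt{n}$ block normalization, inflating $B^{2}$ by a factor of $n$ and reproducing only the weaker scaling of Theorem~\ref{thm_sm: main_nontranslation}.
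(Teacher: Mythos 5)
Your proposal is correct and follows essentially the same route as the paper: the paper's ``diluted'' polynomial $\bar{g}_{\rm CA}(\rho)=\sum_i\sum_{t=1}^n(\tc_i/n)\prod_{\tilde P\in\tilde{\mP}_{i,t}}\tr[\tilde P\rho]$ is exactly your average $(1/n)\sum_{A}g_{A}$ of translated local copies, and the $1/n$ reduction of the dual-vector norm that you isolate as the key trick is precisely the paper's observation that $\|\bar{g}_{\rm CA}\|_2^2\leq\|g\|_1^2/n$ cancels the $n$ from the block normalization. The remaining steps (Lemma~\ref{lem_sm: approx}, Lemma~\ref{lem_sm: shadow polynomial} applied termwise with Jensen rather than to the averaged polynomial, and Corollary~\ref{cor: mother theorem 3} with $X^2=\exp(\tau\exp(5\gamma))$, $Y^2=\|g\|_1^2$) match the paper's argument.
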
    
%\end{tcolorbox}
%%%%%%%%%%%%%%%%%%%%%%%%%%%%%%%%%%%%%%%%%%%%%%%%%%%%%%%%%%%%%%%%%%%%%%%%

\begin{proof}
This proof considers the one-dimensional case ($D=1$) for simplicity, but the generalization to arbitrary dimensions is straightforward.
Below, let $\delta=\zeta/m=\xi \log(2 \|g\|_{1} mp/\epsilon)$.

%%%%%%%%%%%%%%%%%%%%%%%%%%%%%%%%%%%%%%%%%%%%%%%%%%%%%%%%%%%%%%%%%%%%%%%%
\begin{figure*}[t]
    \centering
    \includegraphics[width=0.8\linewidth]{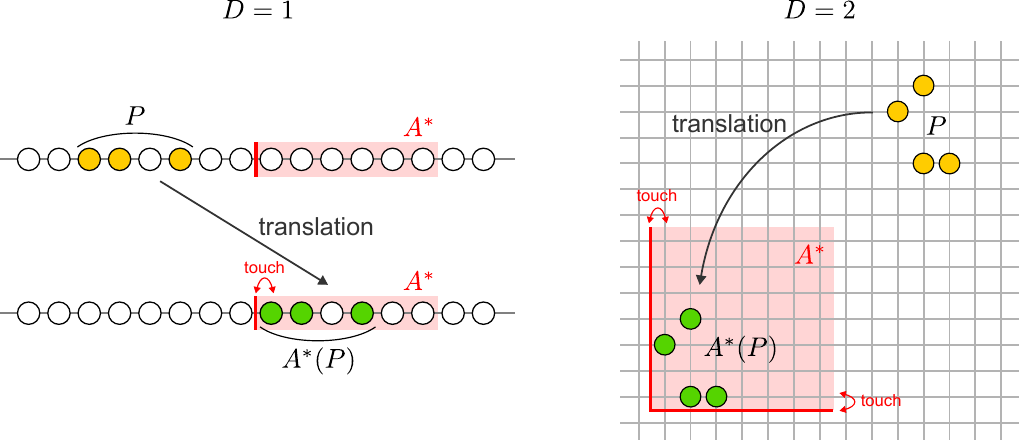}
    \caption{
    Translation of Pauli strings.
    The left and right panels illustrate the $D=1$ and $2$ cases, respectively.
    Given a Pauli string $P$ and a local subsystem $A^\ast$, we translate $P$ to $A^\ast(P)$ such that $\supp(A^\ast(P))\subseteq A^\ast$ and $\supp(A^\ast(P))$ touches the left side of $A^\ast$ (the left and bottom sides of $A^\ast$) for $D=1$ ($D=2$).
    }
    \label{fig_sm: proof1}
\end{figure*}
%%%%%%%%%%%%%%%%%%%%%%%%%%%%%%%%%%%%%%%%%%%%%%%%%%%%%%%%%%%%%%%%%%%%%%%%

\vspace{0.5cm}
\noindent
{\bf (i) Deriving an easy-to-learn polynomial:}
We derive an easy-to-learn polynomial equivalent to $g_{\rm CA}(\rho)$ by ``diluting" it in space with translation symmetry.
The derived polynomial has an $\ell_2$-norm that is $1/n$ times smaller than the original one, resulting in a constant scaling in $n$.

Let $g_{\rm CA}(\rho)=\sum_i \hc_i \prod_{P\in\mP_i} \tr[P\rho]$ be the $\delta$-cluster approximation of $g(\rho)$ and $A^\ast \in \AGL(\zeta)$ be a representative element.
As discussed in Eq.~\eqref{eq_sm: set of subsystem}, if $\zeta=m\delta$, the support of any Pauli string $P\in\mP_i$ is encompassed by a corresponding subsystem $A\in \AGL(\zeta)$.
Here, we consider the translation of a Pauli string $P\in\mP_i$ into $A^\ast$.
More specifically, we define the translated Pauli string $A^\ast(P)\equiv T^t P (T^\dag)^t \in\{I,X,Y,Z\}^{\otimes n}$ with some $t$ such that (i) $\supp(A^\ast(P))\subseteq A^\ast$ and (ii) $\supp(A^\ast(P))$ touches the left side of $A^\ast$.
These two conditions define $A^\ast(P)$ uniquely (see Fig.~\ref{fig_sm: proof1}).
Then, we introduce a polynomial
\begin{align}
\tilde{g}_{\rm CA}(\rho) 
= \sum_i \tc_i \prod_{\tilde{P}\in \tilde{\mP}_i} \tr \left[\tilde{P} \rho \right],
\end{align}
where $\tilde{\mP}_i=\{A^\ast(P)| P\in\mP_i \}$.
Here, we have defined new coefficients $\tc_i$ by combining duplicated terms if $\tilde{\mP}_i=\tilde{\mP}_j$ for some $i$ and $j$ [this procedure is the same as that in deriving Eq.~\eqref{eq_sm: combined cluster approximation}].
Note that combining the duplicated terms does not increase the $\ell_1$-norm: $\|g_{\rm CA}\|_1 = \sum_i |\hc_i| \geq \sum_i |\tc_i|=\|\tilde{g}_{\rm CA}\|$.
When $\rho$ is translationally symmetric, $g_{\rm CA}(\rho)=\tilde{g}_{\rm CA}(\rho)$ holds because $\tr[P\rho]=\tr[A^\ast(P)\rho]$.

Subsequently, we translate all Pauli strings in $\tilde{\mP}_i$ by $t$ sites, defining $\tilde{\mP}_{i,t}=\{T^t P (T^\dag)^t | P\in\tilde{\mP}_i \}$.
Then, we introduce the following polynomial:
\begin{align}
\bar{g}_{\rm CA}(\rho) 
&= \sum_i \sum_{t=1}^n \frac{\tc_i}{n} \prod_{\tilde{P}\in\tilde{\mP}_{i,t}} \tr \left[ \tilde{P} \rho \right]. 
\end{align}
Using $\prod_{\tilde{P}\in\tilde{\mP}_{i,t}} \tr [ \tilde{P} \rho ] = \prod_{\tilde{P}\in\tilde{\mP}_i} \tr [ \tilde{P} \rho ]$, we can easily show that $\tilde{g}_{\rm CA}(\rho)=\bar{g}_{\rm CA}(\rho)$ for translationally symmetric $\rho$.
By combining the indices $i$ and $t$ into a single index $i'$ and introducing $\bc_{i'}=\tc_{i}/n$ and $\bar{\mP}_{i'}=\tilde{\mP}_{i,t}$, we have
\begin{align}
\bar{g}_{\rm CA}(\rho) 
&= \sum_{i'} \bc_{i'} \prod_{\bar{P}\in\bar{\mP}_{i'}} \tr \left[ \bar{P} \rho \right].
\end{align}
Here, the coefficients satisfy $\sum_i |\tc_i|= \sum_i \sum_{t=1}^n |\tc_i/n| =\sum_{i'} |\bc_{i'}|$.
By construction, for any $\bar{\mP}_{i'}$, there exists $A_{i'}\in\AGL(\zeta)$ such that $\supp(\bar{P})\subseteq A_{i'}$ for all $\bar{P}\in\bar{\mP}_{i'}$.
Therefore, $\bar{g}_{\rm CA}$ is the sum of local quantities $\prod_{\bar{P}\in\bar{\mP}_{i'}} \tr \left[ \bar{P} \rho \right]$.
The $\ell_1$- and $\ell_2$-norms of $\bar{g}_{\rm CA}$ satisfy
\begin{align}
&\|\bar{g}_{\rm CA}\|_1=\sum_{i'} |\bc_{i'}| = \sum_i |\tc_i| \leq \sum_i |\hc_i| = \|g_{\rm CA}\|_1 \leq \|g\|_1, \label{eq_sm: translation norm1}\\
&\|\bar{g}_{\rm CA}\|_2^2 = \sum_{i'} |\bc_{i'}|^2 = \sum_i \sum_{t=1}^n |\tc_i/n|^2 = \frac{1}{n}\sum_i |\tc_i|^2 \notag \\
&\hspace{1.6cm}\leq  \frac{1}{n} \left( \sum_i |\tc_i|\right)^2 \leq \frac{1}{n} \left( \sum_i |\hc_i|\right)^2 = \frac{1}{n}\|g_{\rm CA}\|_1^2 \leq \frac{1}{n}\|g\|_1^2, \label{eq_sm: translation norm2}
\end{align}
where we have used $\|g_{\rm CA}\|_1 \leq \|g\|_1$.

\vspace{0.5cm}
\noindent
{\bf (ii) Error in estimating the polynomial from classical shadows:}
For $\bar{g}_{\rm CA}$, we perform the same analysis as the proof of Theorem~\ref{thm_sm: main_nontranslation}.
Let $\delta=\xi \log(2\|g\|_1 mp/\epsilon)$ and $T=T_{\rm CA}(g; \epsilon/2)\geq T(g_{\rm CA}; \epsilon/2) \geq T(\bar{g}_{\rm CA}; \epsilon/2)$, where we have used Eqs.~\eqref{eq_sm: T_cluster} and \eqref{eq_sm: translation norm1}.
Then, by Lemmas~\ref{lem_sm: approx} and \ref{lem_sm: shadow polynomial}, the polynomial $\bar{g}_{\rm CA}$ and the classical shadow $\sigma$ obey
\begin{align}
    &|g(\rho) - \bar{g}_{\rm CA}(\rho)| \leq \epsilon/2, \\
    &\underset{\sigma\sim\mD_\rho}{\mathbb{E}} \left[ |\bar{g}_{\rm CA}(\rho) - \bar{g}_{\rm CA}(\sigma)|^2 \right] \leq \epsilon^2/4
\end{align}
for any translationally symmetric $\rho$ with a correlation length less than or equal to $\xi$, where we have used $g_{\rm CA}(\rho)=\bar{g}_{\rm CA}(\rho)$.
In the same way as the proof of Theorem~\ref{thm_sm: main_nontranslation}, these two inequalities lead to
\begin{align}
    &\underset{\rho \sim \mD}{\mathbb{E}} \left[\underset{\sigma \sim \mD_{\rho}}{\mathbb{E}}\left[ |g(\rho) - \bar{g}_{\rm CA}(\sigma)|^2/2 \right]\right] \leq \epsilon^2/2.
\end{align}
If $\bar{g}_{\rm CA}(\sigma)$ can be represented as a linear function in the feature space of GLQK, i.e., $\bar{g}_{\rm CA}(\sigma)=\braket{\tilde{\bw}, \phi_{\rm GL}(\sigma)}$ for some $\tilde{\bw}$, the first term on the right-hand side in Corollary~\ref{cor: mother theorem 3} is upper bounded by $\epsilon^2/2$, provided that $B\geq\|\tilde{\bw}\|$:
\begin{align}
\min_{\bw\in\mH} L_{\mD_S}(\bw) \leq L_{\mD_S}(\tilde{\bw}) =  \underset{\rho\sim \mD}{\mathbb{E}} \left[\underset{\sigma \sim \mD_\rho}{\mathbb{E}} \left[ |g(\rho) - \braket{\tilde{\bw}, \phi_{\rm GL}(\sigma)}|^2/2 \right]\right] \leq \epsilon^2/2, \label{eq_sm: main proof 2_1}
\end{align}
where $\mH=\{\bw \in \mF^\ast: \|\bw\|\leq B \}$ with the dual feature space $\mF^\ast$.

\vspace{0.5cm}
\noindent
{\bf (iii) Evaluating learning cost:}
We verify that $\bar{g}_{\rm CA}(\sigma)$ can be represented as a linear function in the feature space and then evaluate the magnitude of the dual vector $\tilde{\bw}$.
Since $\bar{g}_{\rm CA}$ is the sum of local quantities, $\bar{g}_{\rm CA}(\sigma)$ can be represented as a linear function in the feature space of GLQK with $h=1$ as follows [see Eqs.~\eqref{eq_sm: feature vector of GLQK-TSK} and \eqref{eq_sm: GLQK component}]:
\begin{align}
\bar{g}_{\rm CA}(\sigma) 
&= \sum_{i'} \bc_{i'} \left[n^{1/2} \left(\frac{b_{i'}!}{\tau^{b_{i'}}}\right)^{1/2}\prod_{\bar{P}\in \bar{\mP}_{i'}}  \left(\frac{2\zeta^D}{\gamma}\right)^{|\text{supp}(\bar{P})|/2} \right] \notag \\
&\hspace{1cm} \times \left[ \frac{1}{n^{1/2}} \left(\frac{\tau^{b_{i'}}}{b_{i'}!}\right)^{1/2}\prod_{\bar{P}\in \bar{\mP}_{i'}}  \left(\frac{\gamma}{2\zeta^D}\right)^{|\text{supp}(\bar{P})|/2} \tr[\bar{P} \sigma ]\right] \\
&\equiv \braket{\tilde{\bw},\phi_{\rm GL}(\sigma)},
\end{align}
where $b_{i'}=|\bar{\mP}_{i'}|$ is the number of Pauli strings included in $\bar{\mP}_{i'}$.
The norm of the dual vector $\tilde{\bw}$ is bounded as 
\begin{align}
\braket{\tilde{\bw},\tilde{\bw}}
&= \sum_{i'} |\bc_{i'}|^2 \left[ n^{1/2} \left(\frac{b_{i'}!}{\tau^{b_{i'}}}\right)^{1/2}\prod_{\bar{P}\in \bar{\mP}_{i'}}  \left(\frac{2\zeta^D}{\gamma}\right)^{|\text{supp}(\bar{P})|/2} \right]^2 \\
&= \sum_{i'} |\bc_{i'}|^2  \left[ n \left(\frac{b_{i'}!}{\tau^{b_{i'}}} \right)  
\prod_{\bar{P}\in \bar{\mP}_{i'}}  \left(\frac{2\zeta^D}{\gamma}\right)^{|\text{supp}(\bar{P})|}  \right] \\
&\leq \sum_{i'} |\bc_{i'}|^2  \left[ n \left(\frac{b_{i'}^{b_{i'}}}{\tau^{b_{i'}}} \right) \left(\frac{2\zeta^D}{\gamma}\right)^{\sum_{\bar{P}\in\bar{\mP}_{i'}} |\text{supp}(\bar{P})|}  \right] \\
&\leq \sum_{i'} |\bc_{i'}|^2  \left[ n \left(\frac{(mp)^{b_{i'}}}{\tau^{b_{i'}}} \right) \left(\frac{2\zeta^D}{\gamma}\right)^{mp}  \right] \\
&\leq \sum_{i'} |\bc_{i'}|^2  \left[ n \left( \frac{mp}{\tau} \right)^{mp}  
\left(\frac{2\zeta^D}{\gamma}\right)^{mp}  \right] \\
&\leq \|g\|_1^2 \left(\frac{2mp \zeta^D}{\tau\gamma}\right)^{mp} \\
&\equiv B^2.
\end{align}
where we have used $b_{i'}! \leq b_{i'}^{b_{i'}}$ in the second line, $b_{i'}\leq mp$ and $\sum_{\bar{P}\in\bar{\mP}_{i'}} |\text{supp}(\bar{P})| \leq mp$ in the third line,  $b_{i'}\leq mp$ in the fourth line, and Eq.~\eqref{eq_sm: translation norm2} in the fifth line [see also Eqs.~\eqref{eq_sm: cluster norm} and \eqref{eq_sm: convenient ineq}].
Furthermore, we have used the assumptions of $2\zeta^D/\gamma \geq 1$ and $mp/\tau \geq 1$ in the third and fourth lines.
Therefore, setting $B^2=\|g\|_1^2 (2mp\zeta^D/\tau\gamma)^{mp}$ ensures Eq.~\eqref{eq_sm: main proof 2_1}.

By Corollary~\ref{cor: mother theorem 3}, for $N = 150 B^2 \exp(\tau \exp(5\gamma))\|g\|_1^2/ (\epsilon^2/2)^2$ and $\lambda=(\epsilon^2/2)/3B^2$, the following inequality holds:
\begin{align}
\underset{Z\sim \mD_S^N}{\mathbb{E}} \left[ L_{\mD_S}(\bw_Z^\ast)\right] 
&\leq \underset{\bw\in\mH}{\text{min}} L_{\mD_S}(\bw) + \frac{\epsilon^2}{2}, \label{eq_sm: main proof 2_2}
\end{align}
where we have used $|k_{\rm GL}(\cdot,\cdot)|\leq\exp(\tau \exp(5\gamma))=X^2$ and $|g(\rho)|^2 \leq \|g\|_1^2 = Y^2$ in Corollary~\ref{cor: mother theorem 3}.
Combining Eqs.~\eqref{eq_sm: main proof 2_1} and \eqref{eq_sm: main proof 2_2}, we obtain
\begin{align}
\underset{Z\sim \mD_S^N}{\mathbb{E}} \left[ L_{\mD_S}(\bw_Z^\ast)\right] 
&\leq \epsilon^2.
\end{align}
\end{proof}

%%%%%%%%%%%%%%%%%%%%%%%%%%%%%%%%%%%%%%%%%%%%%%%%%%%%%%%%%%%%%%%%%%%%%%%%
%%%%%%%%%%%%%%%%%%%%%%%%%%%%%%%%%%%%%%%%%%%%%%%%%%%%%%%%%%%%%%%%%%%%%%%%
\section{Rigorous guarantee for shadow kernel}
%%%%%%%%%%%%%%%%%%%%%%%%%%%%%%%%%%%%%%%%%%%%%%%%%%%%%%%%%%%%%%%%%%%%%%%%

This section evaluates the amount of quantum resources sufficient for the shadow kernel to ensure certain accuracy for both general data and translationally symmetric data.
The problem setting is the same as that in the GLQK.

%%%%%%%%%%%%%%%%%%%%%%%%%%%%%%%%%%%%%%%%%%%%%%%%%%%%%%%%%%%%%%%%%%%%%%%%
\subsection{General cases}

%%%%%%%%%%%%%%%%%%%%%%%%%%%%%%%%%%%%%%%%%%%%%%%%%%%%%%%%%%%%%%%%%%%%%%%%
%\begin{tcolorbox}[enhanced, breakable=true]
\begin{thm} \label{thm_sm: SK_nontranslation}
Consider an $m$-body, degree-$p$ polynomial $g(\rho)$ and a distribution $\mD$ over $\mX\times\mY$ such that the correlation length of the sampled quantum state is less than or equal to $\xi$.
For any $\epsilon\in (0,\|g\|_1)$, suppose that we obtain $N$ classical shadows of size $T$ and their target labels, $Z=\{S_T(\rho_i),g(\rho_i)\}_{i=1}^N \sim \mD_S^N$, as a training dataset such that
\begin{align}
    &N = \frac{600}{\epsilon^4}  \|g\|_{1}^4  \exp(\tau\exp(5\gamma)) \left( \frac{2 m^2p^2}{\tau\gamma} \right)^{mp}  n^{mp}, \\
    &T = T_{\rm CA}(g;\epsilon/2) = \frac{256}{3 \epsilon^2} \|g\|_{1}^2 12^m (mp)^2 \log\left[\frac{\|g\|_{1}^2 2^{m+5}  mp (3^{m^2p}+1)^2}{\epsilon^2}\right].
\end{align}
Then, by setting the hyperparameters as $B^2 = \|g\|_{1}^2 ( 2m^2 p^2/\tau\gamma )^{mp} n^{mp}$ and $\lambda=\epsilon^2/6B^2$, the kernel ridge regression using the shadow kernel achieves
\begin{align}
    \underset{Z\sim \mD_S^N}{\mathbb{E}}[L_{\mD_S}(\bw_Z^\ast)] \leq \epsilon^2.
\end{align}
Here, we have assumed that $2n/\gamma \geq 1$ and $mp/\tau \geq 1$.
\end{thm}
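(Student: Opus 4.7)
The plan is to follow the same three-step template used in the proof of Theorem~\ref{thm_sm: main_nontranslation}, replacing the feature space of the polynomial GLQK with that of the shadow kernel. Concretely, I will (i) invoke Lemma~\ref{lem_sm: approx} and Lemma~\ref{lem_sm: shadow polynomial} to certify that the shadow estimate $g_{\rm CA}(\sigma)$ of the $\delta$-cluster approximation is within error $\epsilon/2$ of $g(\rho)$ in the appropriate mean-square sense for $\delta=\xi\log(2\|g\|_1 mp/\epsilon)$ and $T=T_{\rm CA}(g;\epsilon/2)$; (ii) exhibit $g_{\rm CA}(\sigma)$ explicitly as an inner product $\braket{\tilde{\bw},\phi_{\rm SK}(\sigma)}$ using the feature components listed in Eq.~\eqref{eq_sm: SK component}; (iii) bound $\|\tilde{\bw}\|^2$ and then apply Corollary~\ref{cor: mother theorem 3} with $X^2=\exp(\tau\exp(5\gamma))$, $Y^2=\|g\|_1^2$, and accuracy parameter $\epsilon^2/2$.

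Step (i) is identical to the GLQK analysis and yields $\min_{\bw\in\mH} L_{\mD_S}(\bw)\leq \epsilon^2/2$ whenever $\|\tilde{\bw}\|\leq B$. For step (ii), write $g_{\rm CA}(\sigma)=\sum_i \hc_i \prod_{P\in\mP_i}\tr[P\sigma]$ as in Eq.~\eqref{eq_sm: combined cluster approximation}, and for each index $i$ pair the factor $\tr[P\sigma]$ with the feature-space component of Eq.~\eqref{eq_sm: SK component} corresponding to the Pauli set $\mP_i$. The associated dual-vector weight is the reciprocal of that component, so that the product collapses to $\hc_i \prod_{P\in\mP_i}\tr[P\sigma]$. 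Summing over $i$ expresses $g_{\rm CA}(\sigma)$ as $\braket{\tilde{\bw},\phi_{\rm SK}(\sigma)}$. For step (iii), the squared norm becomes
\begin{align}
\|\tilde{\bw}\|^2 \leq \sum_i |\hc_i|^2 \frac{b_i!}{\tau^{b_i}} \prod_{P\in\mP_i} |\supp(P)|! \left(\frac{2n}{\gamma}\right)^{|\supp(P)|},
\end{align}
after which I bound $b_i!\leq (mp)^{mp}$, $\prod_P |\supp(P)|!\leq (mp)!\leq (mp)^{mp}$, $\sum_{P\in\mP_i}|\supp(P)|\leq mp$ from Eq.~\eqref{eq_sm: convenient ineq}, and $\sum_i|\hc_i|^2\leq \|g\|_1^2$. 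Combining these gives $\|\tilde{\bw}\|^2\leq \|g\|_1^2 (2m^2p^2/\tau\gamma)^{mp}\, n^{mp}=B^2$, so choosing this $B^2$ and $\lambda=\epsilon^2/6B^2$ together with $N=150 B^2 X^2 Y^2/(\epsilon^2/2)^2$ yields the claimed bound via Corollary~\ref{cor: mother theorem 3}.

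The main obstacle and essential difference from Theorem~\ref{thm_sm: main_nontranslation} lies in step (iii): the shadow-kernel feature space treats every index subset $\{i_1,\dots,i_r\}\subseteq[n]$ independently and weights each Pauli factor by $(\gamma/2n)^{|\supp(P)|}$ rather than $(\gamma/2\zeta^D)^{|\supp(P)|}$. Consequently the dual weights pick up $(2n/\gamma)^{|\supp(P)|}$ and the total weight for each term carries a factor $n^{\sum|\supp(P)|}$ which only enjoys the loose bound $n^{mp}$ instead of the tighter $n^{\alpha_g}$ produced by the GLQK's local-cover structure. This is precisely the source of the $n^{mp}$ scaling in $B^2$ and hence in $N$; tracking the extra $|\supp(P)|!$ factors coming from the exponential Taylor series (absent in the truncated shadow kernel of GLQK) is the one place where the calculation departs from the GLQK proof and is absorbed into the $(m^2p^2)^{mp}$ prefactor.
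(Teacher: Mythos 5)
Your proposal matches the paper's proof essentially step for step: the same cluster-approximation error bound with $\delta=\xi\log(2\|g\|_1 mp/\epsilon)$ and $T=T_{\rm CA}(g;\epsilon/2)$, the same explicit pairing of $g_{\rm CA}(\sigma)$ with the shadow-kernel feature components of Eq.~\eqref{eq_sm: SK component}, the same norm bound $\|\tilde{\bw}\|^2\leq \|g\|_1^2(2m^2p^2/\tau\gamma)^{mp}n^{mp}$ (including the extra $(mp)!\leq(mp)^{mp}$ factor from the $|\supp(P)|!$ weights), and the same invocation of Corollary~\ref{cor: mother theorem 3}. Your closing remarks correctly identify the $(2n/\gamma)^{|\supp(P)|}$ weights as the source of the $n^{mp}$ scaling, so the proposal is correct and no further comparison is needed.
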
    
%\end{tcolorbox}
%%%%%%%%%%%%%%%%%%%%%%%%%%%%%%%%%%%%%%%%%%%%%%%%%%%%%%%%%%%%%%%%%%%%%%%%

\begin{proof}

One can show that the learning cost scaling in $n$ and $\epsilon$ is independent of whether learning the original polynomial $g(\rho)$ or its cluster approximation $g_{\rm CA}(\rho)$.
To maintain consistency with the GLQK, this proof focuses on learning the cluster approximation $g_{\rm CA}(\rho)=\sum_i \hc_i \prod_{P\in\mP_i}\tr[P\rho]$.

\vspace{0.5cm}
\noindent
{\bf (i) Error in estimating the polynomial from classical shadows:}
As discussed in the proof of Theorem~\ref{thm_sm: main_nontranslation}, by setting $\delta=\xi \log(2\|g\|_1 mp/\epsilon)$ and $T=T_{\rm CA}(g;\epsilon/2)\geq T(g_{\rm CA};\epsilon/2)$, the value of the $\delta$-cluster approximation estimated from a classical shadow $\sigma$ satisfies
\begin{align}
    &\underset{\rho\sim \mD}{\mathbb{E}} \left[\underset{\sigma \sim \mD_\rho}{\mathbb{E}} \left[ |g(\rho) - g_{\rm CA}(\sigma)|^2/2 \right]\right] \leq \epsilon^2/2
\end{align}
for any $\rho$ with a correlation length less than or equal to $\xi$.
If $g_{\rm CA}(\sigma)$ can be represented as a linear function in the feature space of shadow kernel, i.e., $g_{\rm CA}(\sigma)=\braket{\tilde{\bw}, \phi_{\rm SK}(\sigma)}$ for some $\tilde{\bw}$, the first term on the right-hand side in Eq.~\eqref{eq_sm: mother theorem 3 eq} is upper bounded by $\epsilon^2/2$, provided that $B\geq\|\tilde{\bw}\|$:
\begin{align}
\min_{\bw\in\mH} L_{\mD_S}(\bw) \leq L_{\mD_S}(\tilde{\bw}) =  \underset{\rho\sim \mD}{\mathbb{E}} \left[\underset{\sigma \sim \mD_\rho}{\mathbb{E}} \left[ |g(\rho) - \braket{\tilde{\bw}, \phi_{\rm SK}(\sigma)}|^2/2 \right]\right] \leq \epsilon^2/2, \label{eq_sm: main proof 3_1}
\end{align}
where $\mH=\{\bw \in \mF^\ast: \|\bw\|\leq B \}$ with the dual feature space $\mF^\ast$.

\vspace{0.5cm}
\noindent
{\bf (ii) Evaluating learning cost:}
We verify that $g_{\rm CA}(\sigma)$ can be represented as a linear function in the feature space and then evaluate the magnitude of the dual vector $\tilde{\bw}$.
Given feature vector components of Eq.~\eqref{eq_sm: SK component}, $g_{\rm CA}(\sigma)$ can be written as
\begin{align}
g_{\rm CA}(\sigma) 
&= \sum_i \hc_i  \left[ \sqrt{\frac{b_i!}{\tau^{b_i}}}\prod_{P\in \mP_i}  \sqrt{|\supp(P)|!\left(\frac{2n}{\gamma}\right)^{|\supp(P)|}} \right] \notag \\
&\hspace{1cm} \times \left[ \sqrt{\frac{\tau^{b_i}}{b_i!}}\prod_{P\in \mP_i}  \sqrt{\frac{1}{|\supp(P)|!}\left(\frac{\gamma}{2n}\right)^{|\supp(P)|}} \tr[P \sigma ]\right] \\
&\equiv \braket{\tilde{\bw},\phi_{\rm SK}(\sigma))},
\end{align}
indicating that it can be described as a linear function in the feature space of the shadow kernel.
Here, $b_i=|\mP_i|$ is the number of Pauli strings included in $\mP_i$.
The norm of the dual vector $\tilde{\bw}$ is bounded as 
\begin{align}
\braket{\tilde{\bw},\tilde{\bw}}
&= \sum_i |\hc_i|^2\left[ \sqrt{\frac{b_i!}{\tau^{b_i}}}\prod_{P\in \mP_i}  \sqrt{|\supp(P)|!\left(\frac{2n}{\gamma}\right)^{|\text{supp}(P)|}} \right]^2 \\
&= \sum_{i} |\hc_i|^2  \left[ \frac{b_i!}{\tau^{b_i}}\prod_{P\in \mP_i}  |\supp(P)|!\left(\frac{2n}{\gamma}\right)^{|\text{supp}(P)|} \right] \\
&= \sum_{i} |\hc_i|^2  \left[ \frac{b_i^{b_i}}{\tau^{b_i}} (mp)! \left(\frac{2n}{\gamma}\right)^{\sum_{P\in\mP_i}|\text{supp}(P)|} \right] \\
&\leq \sum_{i} |\hc_i|^2  \left[\frac{(mp)^{b_i}}{\tau^{b_i}} (mp)^{mp} \left(\frac{2n}{\gamma}\right)^{mp}  \right] \\
&\leq \sum_{i} |\hc_i|^2  \left[ \left( \frac{mp}{\tau} \right)^{mp}  (mp)^{mp}
\left(\frac{2n}{\gamma}\right)^{mp}  \right] \\
&\leq n^{mp} \left(\frac{2m^2p^2}{\tau\gamma}\right)^{mp} \|g\|_{1}^2  \\
&\equiv B^2.
\end{align}
where we have used $b_i! \leq b_i^{b_i}$ and $\prod_{P\in\mP_i}|\supp(P)|!\leq (mp)!$ in the second line, $b_i\leq mp$, $(mp)!\leq (mp)^{mp}$, and $\sum_{P\in\mP_{i}} |\supp(P)| \leq mp$ in the third line, $b_i\leq mp$ in the fourth line, and $\sum_i |\hc_i|^2 \leq (\sum_i |\hc_i|)^2 = \|g_{\rm CA}\|_1^2 \leq \|g\|_1^2$ in the fifth line [see also Eqs.~\eqref{eq_sm: cluster norm} and \eqref{eq_sm: convenient ineq}].
Furthermore, we have used the assumptions of $2n/\gamma \geq 1$ and $mp/\tau \geq 1$ in the third and fourth lines.
Therefore, setting $B^2=\|g\|_{1}^2 ( 2m^2 p^2/\tau\gamma )^{mp} n^{mp}$ ensures Eq.~\eqref{eq_sm: main proof 3_1}.

By Corollary~\ref{cor: mother theorem 3}, for $N = 150 B^2 \exp(\tau \exp(5\gamma))\|g\|_1^2/ (\epsilon^2/2)^2$ and $\lambda=(\epsilon^2/2)/3B^2$, we have
\begin{align}
\underset{Z\sim \mD_S^N}{\mathbb{E}} \left[ L_{\mD_S}(\bw_Z^\ast)\right] 
&\leq \underset{\bw\in\mH}{\text{min}} L_{\mD_S}(\bw) + \frac{\epsilon^2}{2}, \label{eq_sm: main proof 3_2}
\end{align}
where we have used $|k_{\rm SK}(\cdot,\cdot)|\leq\exp(\tau \exp(5\gamma))=X^2$ and $|g(\rho)|^2 \leq \|g\|_1^2 = Y^2$ in Corollary~\ref{cor: mother theorem 3}.
Combining Eqs.~\eqref{eq_sm: main proof 3_1} and \eqref{eq_sm: main proof 3_2}, we obtain
\begin{align}
\underset{Z\sim \mD_S^N}{\mathbb{E}} \left[ L_{\mD_S}(\bw_Z^\ast)\right] 
&\leq \epsilon^2.
\end{align}
\end{proof}

%%%%%%%%%%%%%%%%%%%%%%%%%%%%%%%%%%%%%%%%%%%%%%%%%%%%%%%%%%%%%%%%%%%%%%%%
\subsection{Translationally symmetric cases}

Imposing translation symmetry on quantum data improves the sample complexity, similarly to the GLQK.
%%%%%%%%%%%%%%%%%%%%%%%%%%%%%%%%%%%%%%%%%%%%%%%%%%%%%%%%%%%%%%%%%%%%%%%%
%\begin{tcolorbox}[enhanced, breakable=true]
\begin{thm} \label{thm_sm: SK_translation}
Consider an $m$-body, degree-$p$ polynomial $g(\rho)$ and a distribution $\mD_S$ over $\mX\times\mY$ such that the sampled quantum state is translationally symmetric and its correlation length is less than or equal to $\xi$.
For any $\epsilon\in (0,\|g\|_1)$, let $\delta = \xi \log(2\|g\|_{1} mp/\epsilon)$ and $\beta_g={\rm LFC}(g; \delta)$.
Suppose that we obtain $N$ classical shadows of size $T$ and their target labels, $Z=\{S_T(\rho_i),g(\rho_i)\}_{i=1}^N \sim \mD_S^N$, as a training dataset such that
\begin{align}
    &N = \frac{600}{\epsilon^4}  \|g\|_{1}^4  \exp(\tau\exp(5\gamma)) \left( \frac{2 m^2p^2}{\tau\gamma} \right)^{mp}  n^{mp-\beta_g}, \\
    &T = T_{\rm CA}(g;\epsilon/2) = \frac{256}{3 \epsilon^2} \|g\|_{1}^2 12^m (mp)^2 \log\left[\frac{\|g\|_{1}^2 2^{m+5}  mp (3^{m^2p}+1)^2}{\epsilon^2}\right].
\end{align}
Then, by setting the hyperparameters as $B^2 = \|g\|_{1}^2 ( 2m^2 p^2/\tau\gamma )^{mp} n^{mp-\beta_g}$ and $\lambda=\epsilon^2/6B^2$, the kernel ridge regression using the shadow kernel achieves
\begin{align}
    \underset{Z\sim \mD_S^N}{\mathbb{E}}[L_{\mD_S}(\bw_Z^\ast)] \leq \epsilon^2.
\end{align}
Here, we have assumed that $2/\gamma \geq 1$ and $mp/\tau \geq 1$.
\end{thm}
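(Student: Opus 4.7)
The plan is to combine the dilution-by-translation idea from Theorem~\ref{thm_sm: main_translation} with the feature-vector bookkeeping from Theorem~\ref{thm_sm: SK_nontranslation}, where the key new ingredient is that the shadow kernel carries no locality constraint, so each Pauli factor in a term of $g_{\rm CA}$ may be translated \emph{independently}. I would first repeat the standard opening moves: replace $g$ by its $\delta$-cluster approximation $g_{\rm CA}$ with $\delta=\xi\log(2\|g\|_{1}mp/\epsilon)$ via Lemma~\ref{lem_sm: approx}, control the finite-shadow estimation via Lemma~\ref{lem_sm: shadow polynomial} with $T=T_{\rm CA}(g;\epsilon/2)\geq T(g_{\rm CA};\epsilon/2)$ from Eq.~\eqref{eq_sm: T_cluster}, and combine the two using $|x+y|^{2}/2\leq|x|^{2}+|y|^{2}$ to obtain $\mathbb{E}_{\rho,\sigma}[|g(\rho)-g_{\rm CA}(\sigma)|^{2}/2]\leq\epsilon^{2}/2$. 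It then remains to exhibit a dual vector $\tilde{\bw}$ realizing $g_{\rm CA}(\sigma)=\braket{\tilde{\bw},\phi_{\rm SK}(\sigma)}$ with $\|\tilde{\bw}\|^{2}\leq B^{2}$, after which Corollary~\ref{cor: mother theorem 3} with $X^{2}=\exp(\tau\exp(5\gamma))$, $Y^{2}=\|g\|_{1}^{2}$, and residual budget $\epsilon^{2}/2$ delivers the claimed $N$ and $\lambda$.

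The heart of the argument is the independent-translation dilution. Writing $g_{\rm CA}(\rho)=\sum_{i}\hc_{i}\prod_{j=1}^{b_{i}}\tr[P_{i,j}\rho]$ with $b_{i}=|\mP_{i}|$ and applying $\tr[P_{i,j}\rho]=\tr[T^{t_{j}}P_{i,j}(T^{\dag})^{t_{j}}\rho]$ once per factor, averaging over all shifts $(t_{1},\ldots,t_{b_{i}})\in[n]^{b_{i}}$ yields the exact identity
\begin{align}
g_{\rm CA}(\rho)=\sum_{i}\sum_{t_{1},\ldots,t_{b_{i}}=1}^{n}\frac{\hc_{i}}{n^{b_{i}}}\prod_{j=1}^{b_{i}}\tr[T^{t_{j}}P_{i,j}(T^{\dag})^{t_{j}}\rho]
\end{align}
on translationally symmetric $\rho$. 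Relabelling $(i,\vec{t})$ as $i'$ and setting $\bc_{i'}=\hc_{i}/n^{b_{i}}$, the diluted polynomial has squared $\ell_{2}$-norm $\sum_{i'}|\bc_{i'}|^{2}=\sum_{i}|\hc_{i}|^{2}/n^{b_{i}}\leq\|g\|_{1}^{2}/n^{\beta_{g}}$, using $b_{i}\geq\beta_{g}$ (which holds because $b_{i}\geq p$ in the cluster decomposition forces $\beta_{g}=\min_{i}b_{i}$) together with $\sum_{i}|\hc_{i}|^{2}\leq\|g_{\rm CA}\|_{1}^{2}\leq\|g\|_{1}^{2}$. This is a strictly stronger dilution than the $1/n$ factor used in Theorem~\ref{thm_sm: main_translation}, and it is exactly what the absence of a locality requirement in the shadow kernel feature space affords---the polynomial GLQK with $h=1$ forces all Paulis within a term to sit in one local subsystem and can therefore only be translated jointly.

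To finish, I would construct $\tilde{\bw}$ by taking one shadow-kernel feature component of Eq.~\eqref{eq_sm: SK component} per diluted term $(i,\vec{t})$. Because translation preserves both $b_{i}$ and each $|\supp(P)|$, the per-term weight in $\|\tilde{\bw}\|^{2}$ is identical to that in Theorem~\ref{thm_sm: SK_nontranslation}, so reusing the same chain of bounds there ($b_{i}!\leq b_{i}^{b_{i}}$, $b_{i}\leq mp$, $\prod_{P}|\supp(P)|!\leq(mp)^{mp}$, $\sum_{P}|\supp(P)|\leq mp$, and the hyperparameter assumptions $2n/\gamma\geq 1$ and $mp/\tau\geq 1$) together with the diluted norm bound yields $\|\tilde{\bw}\|^{2}\leq\|g\|_{1}^{2}(2m^{2}p^{2}/\tau\gamma)^{mp}n^{mp-\beta_{g}}=B^{2}$, where the $n^{-\beta_{g}}$ factor from the dilution cuts the non-translation scaling $n^{mp}$ down to $n^{mp-\beta_{g}}$. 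The one point requiring a bit of care is that the independent translations should leave the identity $g_{\rm CA}=\bar{g}_{\rm CA}$ exact even when the shifted supports $T^{t_{j}}\supp(P_{i,j})(T^{\dag})^{t_{j}}$ overlap or collide across factors within a single term; this is harmless because each factor $\tr[P\rho]$ is a scalar, and the per-factor identity $\tr[P\rho]=\tr[T^{t}P(T^{\dag})^{t}\rho]$ holds on translationally symmetric $\rho$ regardless of support geometry.
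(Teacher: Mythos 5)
Your overall route is the same as the paper's: dilute $g_{\rm CA}$ by translating each Pauli factor of each term independently over all $n$ positions (picking up a $1/n^{b_i}$ per term), represent the diluted polynomial in the shadow-kernel feature space, and feed the resulting norm bound into Corollary~\ref{cor: mother theorem 3} with the standard Lemma~\ref{lem_sm: approx}/Lemma~\ref{lem_sm: shadow polynomial} preprocessing. The identification of independent per-factor translation as the source of the extra $n^{-\beta_g}$ is exactly the paper's mechanism.

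There is, however, one step that does not go through as written. You factorize the norm bound as $\|\tilde{\bw}\|^{2}\leq\bigl(\max_{i'}\text{weight}_{i'}\bigr)\cdot\sum_{i'}|\bc_{i'}|^{2}$, bound the maximal per-term weight by $(2m^{2}p^{2}/\tau\gamma)^{mp}n^{mp}$, and claim $\sum_{i'}|\bc_{i'}|^{2}=\sum_{i}|\hc_{i}|^{2}/n^{b_{i}}\leq\|g\|_{1}^{2}/n^{\beta_g}$ on the grounds that $b_{i}\geq\beta_g$ because ``$b_{i}\geq p$ in the cluster decomposition.'' That premise is not guaranteed in the paper's setup: $\mP_i$ is obtained after collapsing duplicated Pauli factors (and identity factors contribute no clusters), so a term can have $b_{i}<p$, which is precisely why the paper defines $\beta_g=\max(p,\min_{i}b_{i})$ rather than $\min_{i}b_{i}$ and devotes a separate case to $b_{i}\leq p$. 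For such a term your two bounds are individually valid but their product gives $n^{mp-b_{i}}$ with $b_{i}<\beta_g=p$, which exceeds the target $n^{mp-\beta_g}$. The repair is to keep the dilution factor and the feature weight coupled term by term: the exponent of $n$ for term $i$ is $f_{i}=\sum_{P\in\mP_{i}}|\supp(P)|-b_{i}$, and the paper's Eq.~\eqref{eq_sm: fi} shows $f_{i}\leq mp-\beta_g$ by the two-case argument (when $b_{i}\leq p$ use $\sum_{P}|\supp(P)|\leq mb_{i}$, when $b_{i}>p$ use $\sum_{P}|\supp(P)|\leq mp$, and always $f_{i}\leq mp-\min_{j}b_{j}$). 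With that substitution your proof matches the paper's; everything else (the choice of $T$, the application of Corollary~\ref{cor: mother theorem 3} with $X^{2}=\exp(\tau\exp(5\gamma))$ and $Y^{2}=\|g\|_{1}^{2}$, and the remark that per-factor translation invariance makes $g_{\rm CA}=\bar{g}_{\rm CA}$ exact regardless of support collisions) is sound and at the same level of rigor as the original.
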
    
%\end{tcolorbox}
%%%%%%%%%%%%%%%%%%%%%%%%%%%%%%%%%%%%%%%%%%%%%%%%%%%%%%%%%%%%%%%%%%%%%%%%

\begin{proof}

This proof considers the one-dimensional case ($D=1$) for simplicity, but the generalization to arbitrary dimensions is straightforward.

First, for $g_{\rm CA}(\rho)=\sum_i \hc_i \prod_{P\in\mP_i} \tr[P\rho]$, we show that
\begin{align}
    f_i\equiv\sum_{P\in\mP_{i}}|\text{supp}(P)| - b_i \leq mp - \beta_g, \label{eq_sm: fi}
\end{align}
where $b_i=|\mP_i|$ is the number of Pauli strings included in $\mP_i$, and $\beta_g=\max(p,\min_j(b_j))$.
This can be confirmed by proving (1) $f_i \leq mp-p$ and (2) $f_i \leq mp-\min_j(b_j)$ for all $i$:
\begin{enumerate}
\item[(1)] If $b_i\leq p$, $f_i\leq mb_i - b_i\leq mp-p$ holds, where we have used $|\supp(P)|\leq m$.
Conversely, even if $b_i>p$, $f_i\leq mp - b_i\leq mp-p$ holds, where we have used $\sum_{P\in\mP_{i}}|\text{supp}(P)|\leq mp$.
Thus, we have $f_i\leq mp-p$.
\item[(2)] We have $f_i\leq mp-b_i \leq mp - \min_j(b_j)$, where we have used $\sum_{P\in\mP_{i}}|\text{supp}(P)|\leq mp$.
\end{enumerate}
Therefore, we obtain $f_i \leq mp - \beta_g$.

\vspace{0.5cm}
\noindent
{\bf (i) Deriving an easy-to-learn polynomial:}
We derive an easy-to-learn polynomial equivalent to $g_{\rm CA}(\rho)$ by ``diluting" it with translation symmetry.
Similarly to the proof of Theorem~\ref{thm_sm: main_translation}, we choose a representative element $A^\ast \in \AGL(\zeta)$, where $\zeta=m\delta$. 
(Although $\AGL(\zeta)$ is unnecessary for calculating the shadow kernel, we technically use it here to tightly evaluate the norm of $\bar{g}_{\rm CA}$ introduced below.)
By translating Pauli strings into $A^\ast$, we define
\begin{align}
\tilde{g}_{\rm CA}(\rho) 
= \sum_i \tc_i \prod_{\tilde{P}\in \tilde{\mP}_i} \tr \left[\tilde{P} \rho \right],
\end{align}
where $\tilde{\mP}_i=\{A^\ast(P)| P\in\mP_i \}$.
Here, we have introduced new coefficients $\tc_i$ by combining duplicated terms if $\tilde{\mP}_i=\tilde{\mP}_j$ for some $i$ and $j$.
When $\rho$ exhibits translation symmetry, $g_{\rm CA}(\rho)=\tilde{g}_{\rm CA}(\rho)$ holds.

For $\tilde{\mP}_i=\{\tilde{P}_{i,1}, \cdots, \tilde{P}_{i,b_i}\}$, let $\tilde{\mP}_{i,\bt}=\{ T^{t_k} \tilde{P}_{i,k} (T^\dag)^{t_k} | k=1,\cdots,b_i\}$ with $\bt=(t_1,\cdots,t_{b_i})$.
Then, we define the following polynomial:
\begin{align}
\bar{g}_{\rm CA}(\rho) 
= \sum_i \sum_{t_1=1}^n \cdots \sum_{t_{b_i}=1}^n \frac{\tc_i}{n^{b_i}} \prod_{\tilde{P}\in \tilde{\mP}_{i,\bt}} \tr \left[ \tilde{P} \rho \right].
\end{align}
By combining the indices $i$ and $t_1,\cdots,t_{b_i}$ into a single index $i'$, we have
\begin{align}
\bar{g}_{\rm CA}(\rho) 
= \sum_{i'} \bc_{i'} \prod_{\bar{P}\in \bar{\mP}_{i'}} \tr \left[ \bar{P} \rho \right],
\end{align}
where $\bc_{i'}=\tc_i/n^{b_i}$ and $\bar{\mP}_{i'}=\tilde{\mP}_{i,\bt}$.
For translationally symmetric $\rho$, we can show that $\tilde{g}_{\rm CA}(\rho)=\bar{g}_{\rm CA}(\rho)$.
Note that $\|g_{\rm CA}\|_1 = \sum_i |\hc_i| \geq  \sum_i |\tc_i| = \sum_{i'} |\bc_{i'}| = \|\bar{g}_{\rm CA}\|_1$.

\vspace{0.5cm}
\noindent
{\bf (ii) Error in estimating the polynomial from classical shadows:}
For $\bar{g}_{\rm CA}(\rho)$, we perform the same analysis as that in the proof of Theorem~\ref{thm_sm: main_nontranslation}.
Let $\delta=\xi \log(2\|g\|_1 mp/\epsilon)$ and $T=T_{\rm CA}(g; \epsilon/2) \geq T(g_{\rm CA}; \epsilon/2) \geq T(\bar{g}_{\rm CA}; \epsilon/2)$, where we have used Eqs.~\eqref{eq_sm: T_cluster} and $\|g_{\rm CA}\|_1 \geq \|\bar{g}_{\rm CA}\|_1$.
Then, by Lemmas~\ref{lem_sm: approx} and \ref{lem_sm: shadow polynomial}, the value of the polynomial $\bar{g}_{\rm CA}(\rho)$ estimated from a classical shadow $\sigma$ obeys
\begin{align}
    &\underset{\rho \sim \mD}{\mathbb{E}} \left[\underset{\sigma \sim \mD_{\rho}}{\mathbb{E}}\left[ |g(\rho) - \bar{g}_{\rm CA}(\sigma)|^2/2 \right]\right] \leq \epsilon^2/2
\end{align}
for any translationally symmetric $\rho$ with a correlation length less than or equal to $\xi$.
If $\bar{g}_{\rm CA}(\sigma)$ can be represented as a linear function in the feature space of shadow kernel, i.e., $\bar{g}_{\rm CA}(\sigma)=\braket{\tilde{\bw}, \phi_{\rm SK}(\sigma)}$ for some $\tilde{\bw}$, the first term on the right-hand side in Corollary~\ref{cor: mother theorem 3} is upper bounded by $\epsilon^2/2$, provided that $B\geq\|\tilde{\bw}\|$:
\begin{align}
\min_{\bw\in\mH} L_{\mD_S}(\bw) \leq L_{\mD_S}(\tilde{\bw}) =  \underset{\rho\sim \mD}{\mathbb{E}} \left[\underset{\sigma \sim \mD_\rho}{\mathbb{E}} \left[ |g(\rho) - \braket{\tilde{\bw}, \phi_{\rm SK}(\sigma)}|^2/2 \right]\right] \leq \epsilon^2/2, \label{eq_sm: main proof 4_1}
\end{align}
where $\mH=\{\bw \in \mF^\ast: \|\bw\|\leq B \}$ with the dual feature space $\mF^\ast$.

\vspace{0.5cm}
\noindent
{\bf (iii) Evaluating learning cost:}
We verify that $\bar{g}_{\rm CA}(\sigma)$ can be represented as a linear function in the feature space and then evaluate the magnitude of the dual vector $\tilde{\bw}$.
Given feature vector components of Eq.~\eqref{eq_sm: SK component}, $\bar{g}_{\rm CA}(\sigma)$ can be written as
\begin{align}
\bar{g}_{\rm CA}(\sigma) 
&= \sum_{i'} \bc_{i'}  \left[ \sqrt{\frac{b_{i'}!}{\tau^{b_{i'}}}}\prod_{\bar{P}\in \bar{\mP}_{i'}}  \sqrt{|\supp(\bar{P})|!\left(\frac{2n}{\gamma}\right)^{|\supp(\bar{P})|}} \right] \notag \\
&\hspace{1cm} \times \left[ \sqrt{\frac{\tau^{b_{i'}}}{b_{i'}!}}\prod_{\bar{P}\in \bar{\mP}_{i'}}  \sqrt{\frac{1}{|\supp(\bar{P})|!}\left(\frac{\gamma}{2n}\right)^{|\supp(\bar{P})|}} \tr[\bar{P} \sigma ]\right] \\
&\equiv \braket{\tilde{\bw},\phi_{\rm SK}(\sigma))},
\end{align}
indicating that it can be described as a linear function in the feature space of the shadow kernel.
Here, $b_{i'}=|\bar{\mP}_{i'}|$ is the number of Pauli strings included in $\bar{\mP}_{i'}$.
The norm of the dual vector $\tilde{\bw}$ is bounded as 
\begin{align}
\braket{\tilde{\bw},\tilde{\bw}}
&= \sum_{i'} |\bc_{i'}|^2\left[ \sqrt{\frac{b_{i'}!}{\tau^{b_{i'}}}}\prod_{\bar{P}\in \bar{\mP}_{i'}}  \sqrt{|\supp(\bar{P})|!\left(\frac{2n}{\gamma}\right)^{|\text{supp}(\bar{P})|}} \right]^2 \\
&= \sum_{i} \sum_{t_1=1}^n \cdots \sum_{t_{b_i}=1}^n \frac{|\tc_i|^2}{n^{2b_i}}  \left[ \frac{b_i!}{\tau^{b_i}}\prod_{\tilde{P}\in \tilde{\mP}_{i,\bt}}  |\supp(\tilde{P})|!\left(\frac{2n}{\gamma}\right)^{|\text{supp}(\tilde{P})|} \right] \\
&= \sum_{i}\frac{|\tc_i|^2}{n^{b_i}}  \left[ \frac{b_i!}{\tau^{b_i}}\prod_{\tilde{P}\in \tilde{\mP}_{i}}  |\supp(\tilde{P})|!\left(\frac{2n}{\gamma}\right)^{|\text{supp}(\tilde{P})|} \right] \\
&= \sum_{i} \frac{|\tc_i|^2}{n^{b_i}}  \left[ \frac{b_i^{b_i}}{\tau^{b_i}} (mp)! \left(\frac{2n}{\gamma}\right)^{\sum_{\tilde{P}\in\tilde{\mP}_{i}}|\text{supp}(\tilde{P})|} \right] \\
&\leq \sum_{i} |\tc_i|^2  \left[\frac{(mp)^{b_i}}{\tau^{b_i}} (mp)^{mp} \left(\frac{2}{\gamma}\right)^{mp}  \right] n^{\sum_{\tilde{P}\in\tilde{\mP}_{i}}|\text{supp}(\tilde{P})| - b_i} \\
&\leq \sum_{i} |\tc_i|^2  \left(\frac{mp}{\tau}\right)^{mp} (mp)^{mp} \left(\frac{2}{\gamma}\right)^{mp}  n^{mp-\beta_g} \\
&\leq \|g\|_1^2  \left(\frac{2m^2p^2}{\tau\gamma}\right)^{mp} n^{mp-\beta_g} \\
&\equiv B^2
\end{align}
where we have used $b_i! \leq b_i^{b_i}$ and $\prod_{\tilde{P}\in\tilde{\mP}_i}|\supp(P)|!\leq (mp)!$ in the third line, $b_i\leq mp$ and $(mp)!\leq (mp)^{mp}$ in the fourth line, $b_i\leq mp$ and $\sum_{\tilde{P}\in\tilde{\mP}_{i}} |\supp(\tilde{P})| - b_i\leq mp-\beta_g$ [Eq.~\eqref{eq_sm: fi}] in the fifth line, and $\sum_i |\tc_i|^2 \leq (\sum_i |\tc_i|)^2 \leq (\sum_i |\hc_i|)^2=\|g_{\rm CA}\|_1^2\leq\|g\|_1^2$ in the sixth line [see also Eqs.~\eqref{eq_sm: cluster norm} and \eqref{eq_sm: convenient ineq}].
Furthermore, we have used the assumptions of $2/\gamma \geq 1$ and $mp/\tau \geq 1$ in the fourth and fifth lines.
Therefore, setting $B^2 = \|g\|_{1}^2 ( 2m^2 p^2/\tau\gamma )^{mp} n^{mp-\beta_g}$ ensures Eq.~\eqref{eq_sm: main proof 4_1}.

By Corollary~\ref{cor: mother theorem 3}, for $N = 150 B^2 \exp(\tau \exp(5\gamma))\|g\|_1^2/ (\epsilon^2/2)^2$ and $\lambda=(\epsilon^2/2)/3B^2$, we have
\begin{align}
\underset{Z\sim \mD_S^N}{\mathbb{E}} \left[ L_{\mD_S}(\bw_Z^\ast)\right] 
&\leq \underset{\bw\in\mH}{\text{min}} L_{\mD_S}(\bw) + \frac{\epsilon^2}{2}, \label{eq_sm: main proof 4_2}
\end{align}
where we have used $|k_{\rm SK}(\cdot,\cdot)|\leq\exp(\tau \exp(5\gamma))=X^2$ and $|g(\rho)|^2 \leq \|g\|_1^2 = Y^2$ in Corollary~\ref{cor: mother theorem 3}.
Combining Eqs.~\eqref{eq_sm: main proof 4_1} and \eqref{eq_sm: main proof 4_2}, we obtain
\begin{align}
\underset{Z\sim \mD_S^N}{\mathbb{E}} \left[ L_{\mD_S}(\bw_Z^\ast)\right] 
&\leq \epsilon^2.
\end{align}
\end{proof}

%%%%%%%%%%%%%%%%%%%%%%%%%%%%%%%%%%%%%%%%%%%%%%%%%%%%%%%%%%%%%%%%%%%%%%%%%%%%%%%%%%%%%%%%%%%%%%%%%%%%%%%%%%%%%%%%%%%%%%%%%%%%%%%%%%%%%%%%%%%%%%%%%%%%%%%%%%%%%%%%%%%%%%%%%%%%%%%%%%%%%%%%%%%%%%%%%%%%%%%%%%%%%%%%%%%%%%%%%%%%%%%%%%%%%%%%%%%%%%%%%%%%%%%%%%%%%%%%%%%%%%%%
\section{Details of numerical experiments}

\noindent
{\bf Overall pipeline:}
To reduce the computational cost, we first prepare $N_{\rm pool}$ classical shadows of size $T=500$ for the data pool.
In the regression task involving random quantum dynamics, we use the time-evolving block-decimation (TEBD) algorithm to generate $N_{\rm pool}=1500$ data points for the translationally symmetric case and $N_{\rm pool}=8500$ data points for the non-translationally symmetric case.
In the quantum phase recognition task, we employ the density matrix renormalization group (DMRG) to generate $N_{\rm pool}=1000$ data points.
These tensor network algorithms are implemented with ITensor~\cite{Fishman2022-lh}.

The pipeline for evaluating the performance of the ML models is as follows:
\begin{enumerate}
    \item[(i)] Randomly sample $N$ training data and $M$ test data from the pool of $N_{\rm pool}$ data.
    \item[(ii)] Train the kernel model from the training data using the procedure described below.
    \item[(iii)] Calculate the prediction accuracy for the test data with the trained kernel model.
\end{enumerate}
This procedure of (i)--(iii) is repeated 10 times while changing the choice of training and test data, and the average of these 10 scores is the final test accuracy plotted in the figures.

\vspace{0.5cm}
\noindent
{\bf Training algorithm:}
We solve the two tasks with the kernel ridge regression and the support vector machine, respectively.
These algorithms are implemented using scikit-learn~\cite{Pedregosa2011-mb}. 
To align the scale of the data in the feature space, we standardize the kernel matrix:
\begin{align}
    K_{ij} \to \tilde{K}_{ij}=\frac{K_{ij}}{\sqrt{K_{ii}K_{jj}}}.
\end{align}
In our preliminary numerical experiments, this standardization improves the accuracy of the shadow kernel in the quantum phase recognition task, but has little effect on the other task and GLQK; rather, it slightly worsens the results.
Nonetheless, to ensure consistent calculation conditions, we performed the standardization across all numerical experiments.

During the training process, we use grid search combined with cross-validation to optimize some hyperparameters. 
For the GLQK, the regularization strength $\lambda$, the exponent of the polynomial GLQK $h$, and the size of local subsystems $\zeta$ are optimized. 
In the shadow kernel, only $\lambda$ is optimized.
Grid search helps us find the best values for these parameters by evaluating prediction accuracy through cross-validation. 
For grid search, we adopt the parameter sets $P_\lambda=\{$0.0001, 0.001, 0.01, 0.1, 1, 10, 100, 1000, 10000$\}$ for $\lambda$, $P_h=\{1,2\}$ for $h$, and $P_\zeta=\{2,4,6\}$ for $\zeta$.
We use five-fold cross-validation, which involves randomly dividing the training dataset into five parts. 
We train the kernel model using four of these parts and then calculate the prediction accuracy on the remaining part as validation data. 
This process is conducted for five possible choices of validation data, and the average of these five accuracy scores is considered the final validation accuracy.

\newpage
\end{widetext} %ここでも\newpageをつけとくと最後の横線消せる

\bibliography{refs}

\end{document}